\pgfplotsset{width=8cm,compat=newest}
\newcommand{\lnote}[1]{\footnote{{\bf \color{blue}Li-Yang}: {#1}}}
\def\colorful{0}
\newcommand{\violet}[1]{{\color{violet}{#1}}}
\newcommand{\orange}[1]{{\color{orange}{#1}}}
\newcommand{\blue}[1]{{{\color{blue}#1}}}
\newcommand{\red}[1]{{\color{red} {#1}}}
\newcommand{\violet}[1]{{{#1}}}
\newcommand{\orange}[1]{{{#1}}}
\newcommand{\blue}[1]{{{#1}}}
\newcommand{\red}[1]{{{#1}}}
\newcommand{\isedge}{{\sc IsEdge}} 
\newcommand{\vcover}{{\sc VertexCover}} 
\def\SAT{{\sc SAT}}
\newcommand{\NP}{\mathrm{NP}} 
\newcommand{\Ptime}{\mathrm{P}} 
\newcommand{\VC}{\mathrm{VC}}
\newcommand{\Rel}{\mathrm{Rel}}
\newcommand{\dtsize}{\mathrm{DT}} 
\newcommand{\ind}{\mathrm{Ind}}
\newlist{enumprop}{enumerate}{1} % set up a dedicated enumeration environment
\setlist[enumprop]{label=\arabic*.,ref=\theproposition.\arabic*}
\newtheorem*{rep@theorem}{\rep@title}
\newcommand{\newreptheorem}[2]{
\newenvironment{rep#1}[1]{
 \def\rep@title{#2 \ref{##1}}
 \begin{rep@theorem}\itshape}
 {\end{rep@theorem}}}
\begin{document}

%!TEX root = paper.tex

\title{Properly learning decision trees with queries is NP-hard \vspace{10pt}}

\author{ 
Caleb Koch \vspace{6pt} \\ 
\hspace{-10pt} { {\sl Stanford}} \and 
\hspace{5pt} Carmen Strassle \vspace{6pt} \\
 { {\sl Stanford}} \vspace{15pt}
 \and 
Li-Yang Tan \vspace{6pt}  \\
\hspace{-10pt} {{\sl Stanford}}
}

\date{\small{\today}}

 \maketitle

\begin{abstract}
We prove that it is NP-hard to properly PAC learn decision trees with queries, resolving a longstanding open problem in learning theory (Bshouty 1993; Guijarro--Lavín--Raghavan 1999; Mehta--Raghavan 2002; Feldman 2016). While there has been a long line of work, dating back to (Pitt–Valiant 1988), establishing the hardness of properly learning decision trees from {\sl random examples}, the more challenging setting of {\sl query} learners necessitates different techniques and there were no previous lower bounds. En route to our main result, we simplify and strengthen the best known lower bounds for a different problem of Decision Tree Minimization (Zantema–-Bodlaender 2000; Sieling 2003).

On a technical level, we introduce the notion of {\sl hardness distillation}, which we study for decision tree complexity but can be considered for any complexity measure: for a function that requires large decision trees, we give a general method for identifying a small set of inputs that is responsible for its complexity. Our technique even rules out query learners that are allowed constant error. This contrasts with existing lower bounds for the setting of random examples which only hold for inverse-polynomial error.

\violet{ Our result, taken together with a recent almost-polynomial time query algorithm for properly learning decision trees under the uniform distribution (Blanc--Lange--Qiao--Tan 2022), demonstrates the dramatic impact of distributional assumptions on the problem. } 
 \end{abstract} 

\thispagestyle{empty}
\newpage

\tableofcontents
\thispagestyle{empty}
\newpage 

\setcounter{page}{1}

\section{Introduction} 

Decision trees are among the most basic and popular hypothesis classes in machine learning. They have long served as the gold standard of interpretability: a classic, influential survey of statistical models states that ``On interpretability, decision trees rate an A+"~\cite{Bre01twocultures}, and two decades later, a survey of intepretable machine learning~\cite{RCCHSZ22} lists the optimization of decision tree hypotheses as the very first of the field’s ``10 grand challenges". Besides interpretability, decision tree hypotheses are extremely fast to evaluate, with evaluation time scaling with their depth, a quantity that is often exponentially smaller than their overall size. Decision trees are also at the heart of powerful ensemble methods such as random forests and XGBoost which achieve state-of-the-art performance across a variety of domains. 

We consider the task of constructing {\sl optimal} decision tree representations of data. A standard formalization of this task is the problem of {\sl properly PAC learning decision trees}: given access to a target function $f$ and a distribution $\mathcal{D}$, construct the optimal decision tree hypothesis for~$f$ under~$\mathcal{D}$. Valiant’s original definition of the PAC model~\cite{Val84} considered learners with both passive access to the target function in the form of random labeled examples as well as active access in the form of queries. This setting as well as that of learning from random examples only have since been intensively studied. Valiant’s motivation for the more powerful query setting came from modeling interactions with an expert (``[an] important aspect of the formulation is that the notion of oracles makes it possible to discuss a whole range of teacher-learner interactions beyond the mere identification of examples"). The query setting also models the task of converting an existing accurate but inscrutable hypothesis, for which one has query access to, into a more intelligible representation—once again, decision trees are a canonical sought-for representation for this task~\cite{CS95,BS96,AB07,ZH16,BKB17,VLJODV17,FH17,VS20}.  

\paragraph{This work.} The NP-hardness of properly learning decision trees from {\sl random examples} is a foundational result known since the early days of PAC learning~\cite{Ang,PV88}. The question of whether there exists an efficient {\sl query} learner, on the other hand, has been raised repeatedly over the years, in research papers~\cite{Bsh93, GLR99, MR02} and surveys~\cite{Fel16}. We resolve this question by showing that properly learning decision trees is NP-hard even for query learners:  \medskip 

\begin{tcolorbox}[colback = white,arc=1mm, boxrule=0.25mm]
\begin{theorem}\label{thm:main-intro}  There is an absolute constant $\eps > 0$ such that the following holds. Suppose there is an algorithm that, given queries to an n-variable function f computable by a decision tree of size $s = O(n)$ and random examples $(\bx,f(\bx))$ drawn according to a distribution $\mathcal{D}$, runs in time $t(n)$ and w.h.p.~outputs a size-s decision tree h that is $\eps$-close to f under $\mathcal{D}$. Then $\mathrm{SAT} \in \mathrm{RTIME}(\poly(t(n^2\polylog n)))$.
\end{theorem}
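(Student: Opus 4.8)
The statement follows from a randomized polynomial-time reduction from $\mathrm{SAT}$, with oracle access to the hypothesized learner. Given a formula $\phi$, I would construct (i) an $n$-variable Boolean function $f_\phi$ computable in time $\poly(n,|\phi|)$ from $\phi$, and (ii) a distribution $\mathcal{D}_\phi$ over $\{0,1\}^n$ sampleable in time $\poly(n,|\phi|)$, with the following gap: there are absolute constants $0 < \eps < \eps_0$ so that if $\phi$ is satisfiable then $f_\phi$ is computed exactly by a decision tree of size $s = O(n)$, whereas if $\phi$ is unsatisfiable then every decision tree of size $s$ disagrees with $f_\phi$ on more than an $\eps_0$ fraction of $\mathcal{D}_\phi$. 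Given such a pair I run the learner with accuracy parameter $\eps$, answering each membership query at a point $x$ by returning $f_\phi(x)$ and serving each requested labeled example by drawing $\bx \sim \mathcal{D}_\phi$ and returning $(\bx, f_\phi(\bx))$; if the learner outputs a size-$s$ decision tree $h$, I estimate $\Pr_{\bx \sim \mathcal{D}_\phi}[h(\bx) \neq f_\phi(\bx)]$ to additive error $(\eps_0 - \eps)/3$ by independent samples and accept iff the estimate is below $(\eps + \eps_0)/2$ (rejecting outright if $h$ is not a valid size-$s$ tree). In the satisfiable case $f_\phi$ meets the learner's promise, so w.h.p.\ it returns a size-$s$ tree that is $\eps$-close to $f_\phi$ under $\mathcal{D}_\phi$ and we accept; in the unsatisfiable case no size-$s$ tree is within $\eps_0$, so whatever the learner does we reject. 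Amplifying the learner's success probability and the estimation accuracy gives one-sided error, and bookkeeping the size of the map $\phi \mapsto (f_\phi, \mathcal{D}_\phi)$ produces the stated $t(n^2\,\polylog n)$ (padding $f_\phi$ with dummy variables is what permits $s = O(n)$).

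The pair $(f_\phi, \mathcal{D}_\phi)$ is built in two stages. \emph{Stage one} is a simplified and strengthened NP-hardness reduction for Decision Tree Minimization, refining Zantema--Bodlaender and Sieling: from $\mathrm{SAT}$ (routed through a convenient intermediate such as Set Cover) I produce an explicit, efficiently evaluable $f_\phi$ whose minimum decision tree size is $O(n)$ when $\phi$ is satisfiable and strictly larger otherwise. \emph{Stage two} applies \emph{hardness distillation} to convert this worst-case, exact-computation hardness into the distributional and robust form the reduction requires: for a function that needs large decision trees, distillation isolates a small, structured set $S$ of inputs that is ``responsible'' for the complexity, and one takes $\mathcal{D}_\phi$ to be (a reweighting of) the uniform distribution on $S$. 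Because distillation merely selects inputs of the already-efficient $f_\phi$ rather than altering $f_\phi$, efficient evaluability---needed to answer membership queries---is preserved, and composing the two stages yields the gap above.

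The crux is carrying out stage two with a \emph{constant} error parameter. Bare worst-case hardness only says that every size-$s$ tree errs \emph{somewhere}, i.e.\ is $2^{-n}$-far under the uniform distribution; concentrating all mass on a $\poly(n)$-size hard core improves this only to $1/\poly(n)$, which is precisely the regime of the classical random-example lower bounds and too weak against query learners allowed constant error. To reach $\eps_0 = \Omega(1)$ the distilled core must be \emph{robustly} hard---every small tree must be wrong on a constant fraction of $S$ at once---and this forces stage one to be a gap (inapproximability) statement rather than an exact one. I would obtain that by basing stage one on a PCP-hardened version of $\mathrm{SAT}$ (or on hardness of approximating Set Cover), engineering $f_\phi$ so that, in the unsatisfiable case, a constant fraction of the ``constraints'' encoded into $f_\phi$ cannot be accommodated by any size-$s$ tree, each contributing a fixed amount of disagreement mass under $\mathcal{D}_\phi$. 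Making this quantitative while simultaneously keeping the optimal tree of size $O(n)$ in the satisfiable case and keeping $f_\phi$ poly-time computable is the delicate part; the oracle simulation, the sampling test, and the amplification around it are routine.
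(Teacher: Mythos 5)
Your plan correctly identifies the two-stage architecture of the argument (a gap reduction for decision tree minimization, followed by hardness distillation to obtain a small hard distribution) and it correctly pinpoints why constant error requires an inapproximability-flavored statement. But the proposal stops precisely at what you yourself call ``the delicate part,'' which is where essentially all of the content of the theorem lives: no construction of $f_\phi$, no coreset, no distribution, and no argument that a size-$s$ tree in the unsatisfiable case must err on a constant fraction of it. As written, this is a plan, not a proof.

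Two of the specific choices you sketch would in fact run into the exact obstacles the theorem is designed to overcome. First, you suggest routing through Set Cover; but in the classical Levin--Angluin--Haussler reduction the target function is a disjunction over the \emph{optimal} cover, so answering a membership query requires already having solved the instance --- this is the precise reason the random-example reductions do not give query hardness. The paper instead reduces from \textsc{VertexCover} and builds the target $\ell\text{-}\isedge_G$ purely from the edge set of $G$, so that $f_\phi$ is evaluable without knowing $\VC(G)$; your remark that distillation ``preserves efficient evaluability'' is true but begs the question of how stage one produced an efficiently evaluable hard target in the first place. Second, you observe that a gap in the minimization problem is needed to get constant $\eps_0$, but that gap alone is not enough: even with \textsc{VertexCover} inapproximability, the natural coreset distribution (uniform on $\poly(n)$ points, most of which are $0$-inputs) is $1/\poly(n)$-close to the all-zeros function. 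The missing ingredient is (i) replacing \textsc{VertexCover} by $\alpha$-partial \textsc{VertexCover}, which inherits its inapproximability on constant-degree graphs, and (ii) a non-uniform distribution that places constant mass on the edge indicators and their neighbors, so that a tree that fails to ``cover'' a constant fraction of edges accumulates constant error. Without these, the argument as you've described it cannot reach $\eps_0 = \Omega(1)$. There is also a quantitative mechanism you have not supplied: the amplification from $\isedge$ to $\ell\text{-}\isedge$ (duplicating each original variable $\ell$ times asymmetrically) is what makes the additive slack between the Yes and No bounds --- the $mn$ term --- negligible relative to the amplified $(\ell+1)(k+m)$ terms, and without some such device the upper bound in the satisfiable case exceeds the lower bound in the unsatisfiable case and the gap collapses.
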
 %\calnote{There needs to be a poly factor inside $t$. The exact result is $\mathsf{SAT}\in O(n^2\polylog (n)\cdot t(n^2\polylog n))$. We could state it as ``\SAT\ can be solved in randomized $\Tilde{O}(n^2\cdot t(n^2))$-time''. Or $\Tilde{O}(\poly(t(n^2))$-time. Or $\Tilde{O}(t(n^2)^2)$-time.}
\end{tcolorbox}
\medskip

\Cref{thm:main-intro} addresses a stark gap in our understanding of the problem. The fastest known algorithm runs in exponential time, $2^{O(n)}$ for all values of $s$. There were no previous lower bounds, leaving open the possibility of a $\poly(n,s)$-time algorithm. Indeed, existing query learners for various relaxations of the problem had suggested that such an algorithm was within striking distance. \Cref{thm:main-intro} provides for the first time strong evidence that there are no polynomial-time, or indeed even subexponential-time, algorithms for the problem. 

\subsection{Background and Context} 

\paragraph{Hardness of properly learning decision trees from random examples.} 

NP-hardness in the setting of random examples has been known since the seminal work of Pitt and Valiant~\cite{PV88}. Their paper, which initiated the study of the hardness of proper learning, attributed the result to an unpublished manuscript of Angluin~\cite{Ang}. Subsequently, Hancock, Jiang, Li, and Tromp~\cite{HJLT96} obtained hardness even of {\sl weakly-proper} learning, where the algorithm is allowed to return a decision tree of size larger than that of the target. There have since been several works~\cite{ABFKP09,KST23,Bsh23} further improving~\cite{HJLT96}'s result. 

These works build crucially on a simple reduction from {\sc SetCover}, a reduction variously attributed to Levin~\cite{Lev73}, Angluin~\cite{Ang}, and Haussler~\cite{Hau88}. We describe this technique and discuss why it is limited to the setting of random examples in~\Cref{sec:new techniques necessary}.     

\paragraph{Algorithms for properly learning decision trees.} There is a simple Occam algorithm for properly learning decision trees from random examples: for a size-$s$ decision tree target, draw $O(s \log(n))$ many labeled examples and use dynamic programming to find a size-$s$ decision tree hypothesis that fits the dataset perfectly (see e.g.~\cite{GLR99,MR02}). Standard generalization bounds~\cite{BEHW89} show that this algorithm satisfies the PAC guarantee. Its runtime is $2^{O(n)}$, with the dynamic program being the bottleneck.

Ehrenfeucht and Haussler~\cite{EH89} gave a faster algorithm that runs in time $n^{O(\log s)}$, but their algorithm is only weakly proper: for a size-$s$ target, its hypothesis can be as large as $n^{\Omega(\log s)}$. This large gap is a significant drawback---decision tree hypotheses are interpretable and fast to evaluate insofar as they are {\sl small}---and~\cite{EH89} stated as the first open problem of their paper that of designing algorithms that produce smaller hypotheses. There has been no progress on this problem in the setting of random examples since 1989.   

\paragraph{The power of queries.} In contrast, granting the learner queries enables the design of several polynomial-time algorithms that {\sl almost} solve the problem of properly learning decision trees. Already in his original paper~\cite{Val84} (see also~\cite{Ang88}), Valiant gave a polynomial-time query algorithm for properly learning monotone DNFs; consequently, for size-$s$ monotone decision tree targets Valiant’s algorithm returns a size-$s$ monotone DNF as its hypothesis. Other such results include polynomial-time query learners for general decision tree targets that output depth-$3$ formulas~\cite{Bsh93} and polynomials~\cite{KM93,SS93} as hypotheses. As further demonstration of the power of queries, a recent work of Blanc, Lange, Qiao, and Tan~\cite{BLQT22} gives an almost-polynomial-time ($\poly(n)\cdot s^{O(\log\log s)}$ time) query algorithm that properly learns decision trees under the {\sl uniform distribution}. Finally, the query model opens the possibility of circumventing long-known SQ lower bounds for the problem~\cite{BFJKMR94}, which show that in the setting of random examples all SQ algorithms must take time $n^{\Omega(\log s)}$. 

Taken together, this was all evidence in favor of a polynomial-time, or at least a mildly-super-polynomial time algorithm for properly learning decision trees with queries. In light of \Cref{thm:main-intro}, even a subexponential-time algorithm is now unlikely. 

\subsection{Other related work}

\paragraph{Scarcity of hardness results for PAC learning with queries.} 

\Cref{thm:main-intro} adds to a dearth of NP-hardness results for the model of PAC learning with queries. Indeed, we are aware of only one other such result: in~\cite{Fel06} Feldman proved that DNFs are NP-hard to properly learn with queries, resolving a longstanding problem of Valiant~\cite{Val84,Val85}. As Feldman remarked in his paper, this was the first NP-hardness result, for {\sl any} learning task, for the model of PAC learning with queries. (Our techniques are entirely different from~\cite{Fel06}'s.) 

Related to the scarcity of hardness results, there are numerous query algorithms, for a variety of learning tasks, whose runtimes remain unmatched in the setting of random examples. It is also well known that under standard cryptographic assumptions, PAC learning with queries is strictly more powerful than from random examples only. 

\paragraph{Hardness of properly learning decision trees in other models.}   

While the focus of our work is on the PAC model, interest in the hardness of properly learning decision trees predates and extends beyond the PAC model. An early paper by Hyafil and Rivest~\cite{HR76} proved the NP-hardness of constructing generalized decision trees (ones with more expressive splits than the values of single variables) that perfectly fit a given dataset; quoting the authors, “The importance of this result can be measured in terms of the large amount of effort that has been put into finding efficient algorithms for constructing optimal binary decision trees”. Other results on the hardness of properly learning decision trees in other models include~\cite{GJ79,KPB99,GLR99,ZB00,BB03,LN04,CPRAM07,RRV07,Sie08,AH12,Rav13,BLQT21random}.  

\subsection{Technical remarks about \Cref{thm:main-intro}}

\paragraph{Hardness for constant error.} 

A notable aspect of \Cref{thm:main-intro} is that it rules out learners that are allowed constant error. This was not known even in the setting of random examples, where existing hardness results only hold for inverse-polynomial error: prior to our work, there were no lower bounds ruling out algorithms for properly learning size-s DTs, from random examples only, in time say $(ns)^{O(1/\eps)}$, which is polynomial for constant $\eps$. (Feldman’s NP-hardness result for properly learning DNFs with queries also only holds for inverse-polynomial error.)    

\paragraph{Implications for decision tree minimization.} 

The actual result that we prove is stronger than as stated in \Cref{thm:main-intro}: it holds even if the learner is given explicit descriptions of the target function $f$ and the distribution $\mathcal{D}$ as inputs. Furthermore, the target function can even be given to the learner in the form of a decision tree. For this reason, our result also has implications for the problem of {\sl decision tree minimization}: given a decision tree, find an equivalent one of minimum size. We recover the best known hardness of approximation result for this problem~\cite{ZB00,Sie08} via what is, in our opinion, a much simpler proof. Our proof also yields a stronger result: we show that the problem remains hard even if the resulting tree only has to agree with the original tree on a small given set of inputs. 

\paragraph{Implications for testing decision trees.}

Another aspect in which the actual result we prove is stronger than as stated in~\Cref{thm:main-intro} is that it even rules out distribution-free {\sl testers} for decision trees. (The fact that lower bounds against testers for a class yield lower bounds against proper learners for the same class is well known and easy to show~\cite{GGR98}.) While there’s a large body of work giving lower bounds for testing various classes of functions, the vast majority of these results are information-theoretic in nature, focusing on query complexity, with far fewer computational lower bounds. Our result does not rule out decision tree testers with low query complexity, but it shows that even if such a tester exists, it must nevertheless run in exponential time (unless SAT admits a subexponential time algorithm).

\section{Technical Overview} 

\subsection{Why the query setting necessitates new techniques}
\label{sec:new techniques necessary} 

Before delving into our techniques, we describe the key construction~\cite{Lev73,Ang,Hau88} at the heart of all previous results on the hardness of properly learning decision trees from random examples~\cite{Ang,HJLT96,ABFKP09,KST23,Bsh23} and discuss why it is limited this setting. (This section can be freely skipped; its point is to explain why we had to depart from previous approaches in order to prove \Cref{thm:main-intro}.) 

Consider the following reduction from {\sc SetCover} to the problem of properly learning disjunctions. Let $\mathcal{S} = \{ S_1,\ldots,S_n\}$ be a {\sc SetCover} instance over the universe $[m]$ and define $u^{(1)},\ldots,u^{(m)} \in \zo^n$ where 
\[ (u^{(j)})_i = 
\begin{cases}
    1 & \text{if $j \in S_i$} \\
    0 & \text{otherwise.} 
\end{cases}
\]
Let $C\sse [n]$ be the indices of an optimal set cover for $\mathcal{S}$ and consider the target disjunction $f : \zo^n \to \zo$, 
\[ f = \bigvee_{i \in C} x_i. \]
Let $\mathcal{D}$ be the uniform distribution over $\{ u^{(1)},\ldots,u^{(m)}, 0^n\}.$  Note that given any disjunction hypothesis $h$ for $f$ that achieves error $< 1/(m+1)$ under $\mathcal{D}$, the variables in $h$ must constitute a set cover for $\mathcal{S}$. 

To see why this reduction, and reductions like it, do not extend to the setting of queries, we first observe that this specific target function can be easily learned with queries, simply by querying $f$ on all strings of Hamming weight 1. More generally and crucially, we note that the target function is defined by the optimal solution to the {\sc SetCover} instance. While this is a very natural strategy (and indeed many other hardness results for learning employ such a strategy), for any such reduction it seems challenging to provide query access to the target function without having to solve the {\sc SetCover} instance, which would of course render the reduction inefficient. (Beyond the issue of queries, this reduction is also limited to the inverse-polynomial error regime and does not rule out learners that are allowed larger error.) While this reduction is for the hardness of properly learning disjunctions, all aforementioned hardness results for decision trees use it as their starting point and suffer from the same limitations.

\paragraph{How our approach differs.} Departing from these works, we design a reduction where the {\sl definition} of our target function does not depend on the solution to a computationally hard problem—which allows us to efficiently provide the learner query access to it—and only its {\sl decision tree complexity} scales with the quality of the solution; see~\Cref{remark:queries to target}. Our resulting reduction is quite a bit more elaborate than those for the setting of random examples.
 
\subsection{Overview of our proof and techniques} 

We prove~\Cref{thm:main-intro} by reducing from {\sc VertexCover}: we design an efficient mapping from graphs $G$ to functions $f$ where the decision tree complexity of $f$ reflects the vertex cover complexity of $G$. The properties of this mapping that we require our application to learning are somewhat subtle to state, so we describe and motivate them incrementally. 

\subsubsection{The core reduction} 
\label{sec:intro core} 
Our starting point is a reduction with the following basic properties: \medskip 
\begin{tcolorbox}[colback = white,arc=1mm, boxrule=0.25mm]
\vspace{3pt} 
\begin{center}{\bf The core reduction}
\vspace{-4pt} 
\end{center} 
\begin{itemize}[leftmargin=0.5cm]
\item[$\circ$] Yes case: If $G$ has a small vertex cover, then $f$ has small decision tree complexity.
\item[$\circ$] No case: If $G$ requires a large vertex cover, then $f$ has large decision tree complexity. 
\end{itemize} 
\end{tcolorbox}
\medskip

\renewcommand{\Ind}{\mathrm{Ind}}

We sketch the main ideas behind this core reduction. For an $n$-vertex graph $G$, we consider its  {\sl edge indicator function} $\mathrm{\isedge}_G : \zo^n \to \zo$. An input $v = (v_1,\ldots,v_n) \in \zo^n$ to $\mathrm{\isedge}_G$ is viewed as specifying the presence or absence of each vertex $v_1,\ldots,v_n \in V$, and $\mathrm{\isedge}_G(v) = 1$ iff $v$ specifies the presence of exactly the two endpoints of some edge of $G$. More formally: 

\begin{definition}[$\mathrm{\isedge}_G$]\label{def: isedge}
Let $G = (V,E)$ be an $n$-vertex graph. For an edge $e = \{v_i,v_j\} \in E$, we write $\Ind[e] \in \zo^n$  to denote its indicator string: 
\[ 
\Ind[e]_k = \begin{cases}
1 & \text{if $k\in \{i,j\}$} \\ 
0 & \text{otherwise.}  
\end{cases} 
\]
The {\sl edge indicator function 
of $G$} is the function $\mathrm{\isedge}_G : \zo^n \to \zo$, 
\[ \mathrm{\isedge}_G(v_1,\ldots,v_n) =
\begin{cases} 
1 & (v_1,\ldots,v_n) =\Ind[e] \text{ for some $e \in E$} \\
0 & \text{otherwise.} 
\end{cases}
\]
\end{definition}
When $G$ is clear from context, we drop the subscript and simply write \isedge. 
\paragraph{Warmup.} We first prove: 

\begin{claim}[Decision tree complexity of $\mathrm{\isedge}$]
\label{claim:isedge-intro} 
Let $G$ be an $n$-vertex $m$-edge graph. 
\begin{itemize}
    \item[$\circ$] Yes case: If $G$ has a vertex cover of size $\le k$, then there is a decision tree $T$ for $\mathrm{\isedge}_G$ of size 
    \[ |T| \le k + m + mn. \] 
    \item[$\circ$] No case: If $G$ requires a vertex cover of size $\ge k'$, then any decision tree $T$ for $\mathrm{\isedge}_G$ must have size 
    \[ |T| \ge k' + m. \]
\end{itemize}
\end{claim}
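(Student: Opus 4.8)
Given a vertex cover $C=\{c_1,\dots,c_k\}$, the plan is to build $T$ in two stages. Stage~1 reads $x_{c_1},x_{c_2},\dots$ along a single path, always continuing on the ``$0$'' side; if all of $x_{c_1},\dots,x_{c_k}$ are $0$ we output $0$, which is correct because $C$ is a vertex cover and so every $\mathrm{Ind}[e]$ is $1$ on some coordinate of $C$. At the \emph{first} index $j$ with $x_{c_j}=1$ we branch into a subtree $T_j$; there we know $c_j$ is present and $c_1,\dots,c_{j-1}$ are absent, so $v=\mathrm{Ind}[e]$ forces $e=\{c_j,w\}$ with $w$ in $N_j:=N(c_j)\setminus\{c_1,\dots,c_{j-1}\}$. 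Accordingly $T_j$ reads the variables of $N_j$ along a path: if they are all $0$ we output $0$ (no neighbor of $c_j$ is present while $x_{c_j}=1$, so $v$ is not an edge indicator), and at the first $w\in N_j$ with $x_w=1$ we branch into a ``verification'' caterpillar that reads every still-unread variable and outputs $1$ iff all of them are $0$ --- equivalently iff $v=\mathrm{Ind}[\{c_j,w\}]$, and $\{c_j,w\}$ is indeed an edge.

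\textbf{Yes case, the count.} The identity driving the bound is $\sum_{j=1}^k|N_j|=m$: each edge $e$ is counted exactly once, in $N_j$ for the endpoint $c_j$ of $e$ appearing earliest in $C$. Stage~1 uses $k$ internal nodes, the path inside $T_j$ uses $|N_j|$, and each verification caterpillar reads at most $n-1$ variables; summing gives at most $k+\sum_j|N_j|+(n-1)\sum_j|N_j|=k+m+m(n-1)=k+mn$ internal nodes, hence at most $k+mn+1\le k+m+mn$ leaves. The pitfall to avoid is reading all of $C$ at the outset: that creates a spine of size $\Theta(k^2)$ which overshoots the bound when the given vertex cover is far from optimal, whereas branching off at the first present vertex keeps Stage~1 linear.

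\textbf{No case.} Assume w.l.o.g.\ that $T$ never re-queries a variable on a root-to-leaf path, and let $P_0$ be the path followed by $0^n$, with $Q_0$ the variables it queries; since $\mathrm{\isedge}_G(0^n)=0$, $P_0$ ends at a $0$-leaf. Step~1: $Q_0$ is a vertex cover, because if $\{i,j\}\in E$ had $i,j\notin Q_0$ then $\mathrm{Ind}[\{i,j\}]$ agrees with $0^n$ on $Q_0$, reaches the same $0$-leaf, and would thus be labeled $0$, contradicting $\mathrm{\isedge}_G(\mathrm{Ind}[\{i,j\}])=1$; hence $P_0$ has $L:=|Q_0|\ge k'$ internal nodes. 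Step~2: $\mathrm{Ind}[e]$ and $\mathrm{Ind}[e']$ reach distinct leaves for $e\ne e'$ --- if they shared a leaf with queried set $Q$, the input $z$ equal to that leaf's partial assignment on $Q$ and $0$ off $Q$ would also reach it, so $\mathrm{\isedge}_G(z)=1$ (that leaf is a $1$-leaf), so $z$ is an edge indicator; but $\mathrm{supp}(z)=Q\cap e=Q\cap e'$ has size exactly $2$, so $e=e'$. Step~3: combine the two bounds without overlap. Write $T$ as $P_0$ plus the subtrees $H_1,\dots,H_L$ hanging off its internal nodes, so the number of leaves of $T$ is $1$ plus the total over all $H_i$. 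Each $H_i$ owns a $0$-leaf --- the one reached by the weight-$1$ input that turns on only the variable queried at the $i$-th node of $P_0$ (a weight-$1$ string is never an edge indicator) --- and the $m$ distinct $1$-leaves from Step~2 also lie among the $H_i$'s; hence the total over the $H_i$'s is at least $L+m\ge k'+m$, giving $|T|\ge k'+m$.

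\textbf{Main obstacle.} I expect the delicate point to be Step~3 of the No case: extracting $k'+m$ rather than $\max(k',m)$, or $k'+m$ minus the number of distinct divergence points. What makes it work is the separation of the ``vertex-cover budget'' (one hanging subtree, with its own $0$-leaf, per internal node of $P_0$) from the ``edge budget'' (the $m$ distinct $1$-leaves), so that these contributions land in disjoint sets of leaves. On the Yes side the analogous subtlety, already noted, is keeping Stage~1 linear; everything else reduces to the counting identity $\sum_j|N_j|=m$.
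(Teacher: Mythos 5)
Your proof is correct. The Yes-case construction is essentially identical to the paper's: a leftmost spine reading the vertex cover, subtrees reading the restricted neighborhoods $N_j$, then verification caterpillars over the remaining variables. (Your aside about a $\Theta(k^2)$ spine from ``reading all of $C$ at the outset'' doesn't quite correspond to any real alternative and can be ignored; the paper's spine is the same length-$k$ path you build.) Where you genuinely diverge from the paper is the No case. The paper counts \emph{internal nodes} by a local argument: for each subtree $T_\kappa$ hanging off the spine, every $v\in V_\kappa$ must be queried (else $\mathrm{Ind}[e]$ and $\mathrm{Ind}[e]^{\oplus v}$ collide), so the subtrees contribute $\sum_\kappa |V_\kappa|=m$ queries. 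You instead count \emph{leaves}: you show the $m$ edge-indicator inputs reach $m$ pairwise-distinct $1$-leaves (via the ``fill in zeros'' support argument), observe these all fall into the hanging subtrees $H_1,\ldots,H_L$, and separately supply each $H_i$ with its own $0$-leaf via the weight-$1$ input, whence the $H_i$'s carry at least $L+m$ leaves total and $|T|\ge L+m\ge k'+m$. Both arguments are sound and give the same bound. The paper's version is the one that feeds cleanly into the hardness-distillation framework developed later (it is essentially counting relevant variables of the subfunctions $\mathrm{\isedge}_{\pi|_{\oplus\kappa}}$, which becomes \Cref{lemma:covered edges lemma}), and it scales directly to the amplified $\ell$-$\mathrm{\isedge}$ bound $(\ell+1)(k'+m)$ where each neighbor contributes $\ell+1$ forced queries. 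Your leaf-counting argument is a clean, self-contained alternative for the unamplified claim, but it would need rethinking for the amplified version since the number of distinct $1$-leaves stays $m$ rather than growing with $\ell$; there the per-query accounting is what delivers the multiplicative factor.
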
 

As stated,~\Cref{claim:isedge-intro} is not useful since the upper bound of the Yes case is much larger than the lower bound of the No case, owing to the additional additive factor of $mn$. More precisely, we need these bounds to satisfy:  
\begin{equation}
\label{eq:Yes less than No} 
\text{If $k' =  (1+\delta)k$ then (Upper bound of Yes case}) < (\text{Lower bound of No case}) \quad  \tag{$\star$} \end{equation}  in order to invoke the NP-hardness of $(1+\delta)$-approximating {\sc VertexCover}.

\paragraph{Amplification.} We therefore consider an ``amplified" version of $\mathrm{\isedge}_G$, \[ \ell\text{-}\mathrm{\isedge}_G : \zo^n \times (\zo^n)^\ell \to \zo, \] and prove: 

\begin{theorem}[Decision tree complexity of $\ell$-$\mathrm{\isedge}$]
\label{thm:ell-isedge-intro} 
Let $G$ be an $n$-vertex $m$-edge graph and $\ell \in \N$. 
\begin{itemize}
     \item[$\circ$] Yes case: If $G$ has a vertex cover of size $\le k$, then there is a decision tree $T$ for $\ell$-$\mathrm{\isedge}_G$ of size 
    \[ |T| \le (\ell+1)\cdot (k + m) + mn. \] 
    \item[$\circ$] No case: If $G$ requires a vertex cover of size $\ge k'$, then any decision tree $T$ for $\ell$-$\mathrm{\isedge}_G$ must have size 
    \[ |T| \ge (\ell+1)\cdot (k' + m). \]
\end{itemize}
\end{theorem}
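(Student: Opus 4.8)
The plan is to first fix a definition of $\ell\text{-}\mathrm{\isedge}_G$ — I would take a $1$-input to consist of $\ell+1$ blocks $(v,w^{(1)},\dots,w^{(\ell)})\in\zo^n\times(\zo^n)^\ell$ that are ``linked'' so as to all encode one common edge $e\in E$ (the naive choice is $v=w^{(1)}=\dots=w^{(\ell)}=\Ind[e]$, though the linking likely has to be arranged more cleverly, see below) — and then to bootstrap both directions of \Cref{claim:isedge-intro}. The shape of the two target bounds dictates the design: every decision tree should be forced to ``re-localize'' the edge once per block, so that the cover-dependent cost (the ``$k$'') and the edge-disambiguation cost (the ``$m$'') are each paid $\ell+1$ times, whereas the step of ruling out stray $1$'s among the $n$ coordinates of a block — which is what produces the additive $mn$ term in \Cref{claim:isedge-intro} — should be amortized down to a single occurrence.

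For the Yes case, given a vertex cover $C$ with $|C|\le k$, I would build the tree in two stages. The first stage processes the blocks one after another: on each block it uses $C$ exactly as in the proof of the Yes case of \Cref{claim:isedge-intro} to commit to a candidate edge, at cost $O(k+m)$ per block — crucially, once the first block has pinned down a candidate edge $e$, processing each remaining block reduces to verifying equality against a \emph{known} string and so proceeds along a single non-branching path; summing over the $\ell+1$ blocks this stage costs $O((\ell+1)(k+m))$ and ends at $O(m)$ leaves, each having identified a candidate edge. The second stage attaches at each such leaf a caterpillar of size $O(n)$ that checks the absence of stray $1$'s, for a total of $O(mn)$. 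The delicate point is to make the linking between blocks loose enough that the second stage is genuinely paid only once (total $mn$) rather than once per block (total $(\ell+1)mn$); this is the crux of the upper bound and likely forces a more elaborate definition of $\ell\text{-}\mathrm{\isedge}_G$ than ``all blocks equal''.

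For the No case, let $T$ be any decision tree for $\ell\text{-}\mathrm{\isedge}_G$ and suppose $G$ has minimum vertex cover $\ge k'$. I would extract from $T$, for each block $t\in\{0,\dots,\ell\}$, a sub-decision-tree that computes $\mathrm{\isedge}_G$ on that block's variables after restricting the other blocks to $\Ind[e]$ for a worst-case edge $e$; by the No case of \Cref{claim:isedge-intro} each such sub-tree has $\ge k'+m$ nodes, and adding these up gives $|T|\ge(\ell+1)(k'+m)$ provided they are charged to node-disjoint portions of $T$. Concretely I would induct on $\ell$: peel off a block, restrict the rest to a hard common edge, apply \Cref{claim:isedge-intro} to charge $k'+m$ nodes to that block, and recurse on the residual computation, arguing that the residual still faithfully computes the $(\ell-1)$-block version and that no node is double-charged.

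The step I expect to be the main obstacle is precisely the interaction between the two cases: the lower bound wants the $\ell+1$ blocks tightly enough constrained that each separately costs $k'+m$, whereas the upper bound wants them loosely enough constrained that the $mn$ cleanup is not replicated $\ell+1$ times and that the per-block processing collapses to non-branching paths once the common edge is known. Designing $\ell\text{-}\mathrm{\isedge}_G$ to thread this needle — and pushing the additivity through in the lower bound so that one obtains $(\ell+1)(k'+m)$ rather than merely $\max_t(k'+m)$ — is the heart of the theorem.
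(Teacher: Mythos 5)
Your proposal is blocked at both of the points you yourself flag as delicate, and the paper resolves them in ways that differ substantially from what you sketch.

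On the definition: you correctly observe that the naive ``all $\ell+1$ blocks equal $\Ind[e]$'' choice would replicate the stray-ones check once per block and give $(\ell+1)mn$ rather than $mn$, but you do not supply the fix. The paper's fix is to make the definition deliberately \emph{asymmetric}: only the distinguished block $v^{(0)}$ is required to be an exact edge indicator, while each duplicate block $v^{(t)}$ ($t\ge 1$) is only required to have $v^{(t)}_i=1$ at the (two) coordinates $i$ where $v^{(0)}_i=1$, and is left completely unconstrained on the coordinates where $v^{(0)}_i=0$. This is precisely what makes the $mn$ term non-replicated: in the Yes-case tree, after the original block has pinned down an edge and its two endpoints, each duplicate block contributes only a length-$\ell$ caterpillar checking two coordinates per vertex-cover/neighbor variable already committed, and the cleanup over the remaining $n-2$ coordinates happens only for $v^{(0)}$. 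Your plan of letting later blocks ``proceed along a single non-branching path'' after the first block commits is pointing in the right direction, but without the asymmetry it does not close the gap.

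On the No case your proposed induction does not work, and the failure is structural rather than cosmetic. If you restrict all blocks except one to $\Ind[e]$ for a fixed edge $e=\{v_i,v_j\}$, the resulting univariate-block function is trivial: for a duplicate block it becomes the conjunction $v^{(t)}_i\wedge v^{(t)}_j$, and for the original block it becomes the single-point indicator of $\Ind[e]$. Either way it has tiny decision tree complexity, so Claim~\ref{claim:isedge-intro} applied to these restrictions yields nothing close to $k'+m$, and no choice of ``worst-case edge $e$'' rescues this. The peeling argument assumes the $\ell+1$ blocks independently each carry a full copy of the $\mathrm{\isedge}$ hardness, which is false for any definition compatible with your Yes case. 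The paper instead gives a single global argument: the leftmost branch of $T$ must query a vertex cover (distinguishing $0^N$ from the $\ell\text{-}\Ind[e]$'s), and for each vertex $v_{i_\kappa}$ on that branch with $E_\kappa\neq\varnothing$, the subtree $T_\kappa$ hanging off at that point must additionally query the $\ell$ remaining copies of $v_{i_\kappa}$ plus all $\ell+1$ copies of each $v\in V_\kappa$; summing over $\kappa$ using \Cref{fact:restricted vertex neighborhood partitions} gives $\ell k'+(\ell+1)m$ for the subtrees and $k'$ for the branch itself. The decomposition is indexed by position along the vertex-cover path, not by block, and the subtrees are disjoint by construction so no double-charging argument is needed.
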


We point out two properties of~\Cref{thm:ell-isedge-intro} that will be important for us: 

\begin{remark}[Asymmetric amplification in the Yes case]
\label{rem:asymm} 
Comparing~\Cref{claim:isedge-intro} and~\Cref{thm:ell-isedge-intro}, we see that in No case, the entire lower bound of $k' + m$ is amplified by a factor of $\ell+1$. On the other hand, in the Yes case only the $k+m$ factor---and crucially, {\sl not} the $mn$ factor---is amplified by a factor of $\ell+1$. This is important as it  allows us to choose $\ell$ to be sufficiently large to make the $mn$ factor negligible, thereby having our bounds satisfy the sought-for property~(\ref{eq:Yes less than No}). \end{remark} 
\begin{remark}[Efficiently providing query access to $\ell$-$\mathrm{\isedge}_G$]
\label{remark:queries to target} 
We defer the definition of $\ell$-$\mathrm{\isedge}_G$ to~\Cref{sec:ell-isedge} but mention here that (i) it will be the hard target function in our proof of~\Cref{thm:main-intro}; and (ii) just like the unamplified $\mathrm{\isedge}_G$ function---and {\sl unlike} the {\sc SetCover}-based target function described in~\Cref{sec:new techniques necessary}--- its definition will depend only on the edges in $G$ and not its optimal vertex cover. This is crucial as it allows us to efficiently provide the learner query access to its values in our reduction without having to solve {\sc VertexCover}. Circling back to our discussion in~\Cref{sec:new techniques necessary}, this is a key qualitative difference between our reduction and previous reductions for the setting of random examples. 
\end{remark} 

\subsubsection{Hardness distillation} \Cref{thm:ell-isedge-intro} already allows us to recover, with a markedly simpler proof, the best known hardness of approximation result \cite{ZB00,Sie08} for decision tree minimization. However, it does not yet have any  implications for learning since the No case only states that any decision tree that computes $f$ {\sl exactly} must have large size, and does not rule out the possibility that $f$ can be well-approximated by a small decision tree. 

We therefore strengthen the No case via a process that we call {\sl hardness distillation}: we identify a small set of inputs $D\sse \zo^n$, which we call a {\sl coreset}, that is responsible for $f$’s large decision tree complexity. \medskip 

\begin{tcolorbox}[colback = white,arc=1mm, boxrule=0.25mm]
\vspace{3pt} 
\begin{center} 
{\bf The core reduction with hardness distillation}
\vspace{-4pt}
\end{center} 
\begin{itemize}[leftmargin=0.5cm]
\item[$\circ$] Yes case: If $G$ has a small vertex cover, then $f$ has small decision tree complexity.
\item[$\circ$] No case: If $G$ requires a large vertex cover, then there is a small set $D\sse \zo^n$ such that any decision tree that agrees with $f$ on $D$ must be large. 
\end{itemize} 
\end{tcolorbox}
\medskip

Such a reduction yields the NP-hardness of learning decision trees to error $< 1/|D|$, which motivates the problem of constructing coresets that are as small as possible. Our coreset will have size $\poly(n)$, and therefore we get  the hardness of learning to inverse-polynomial error. (In the next subsection we describe a further extension of this technique that establishes constant-error hardness.)  

\paragraph{Hardness distillation via certificate complexity and relevant variables.} We give a general method for identifying a small coreset that witnesses the large decision tree complexity of a function~$f$. At a high level, there are two main components to this coreset: 

\begin{enumerate}
    \item A set of inputs $D_1$ that ensures that any decision tree $T$ that agrees with $f$ on $D_1$ must have a long path $\pi$, one of length at least $s_1$.  
    \item Another set of inputs $D_2$ that ensures that the at-least-$s_1$ many disjoint subtrees that branch off of $\pi$ must have sizes that sum up to at least $s_2$. 
\end{enumerate}
%\lnote{Can we make this figure smaller? It is now taking up half a page, and does not need to be this large to convey the information it is conveying. Don't worry about this if it takes too much time. Another option is to just reference~\Cref{fig:hardness distillation} in the body of the paper.}  
%\input{fig intro}

See~\Cref{fig:hardness distillation} for an illustration. Together, $D_1$ and $D_2$ form a coreset witnessing the fact that $f$ has decision tree complexity at least $s_1 + s_2$. To formalize this approach we rely on generalizations of two notions of boolean function complexity, namely certificate complexity and the relevance of variables, from the setting of total functions to partial functions. More formally, the two components of our method are as follows:  

\begin{enumerate} 
\item If there is an input $x^\star \in D_1$ such that the certificate complexity of $f$ on $x^\star$ relative to $D_1$ is at least $s_1$, then any decision tree $T$ that agrees with $f$ on $D_1$ must have a long path $\pi$ of length at least $s_1$. 
\item This path $\pi$ induces at least $s_1$ many subfunctions of $f$, corresponding to $f$ restricted by paths that diverge from $\pi$ at each of $\pi$'s at-least-$s_1$ many nodes.  If the number of variables of these subfunctions that are relevant relative to $D_2$ is at least $s_2$, then the at-least-$s_1$ many disjoint subtrees that branch off $\pi$ must have sizes that sum up to at least $s_2$. 
\end{enumerate}

%\gray{ We describe our technique for hardness distillation using \isedge\ as an illustrative example; the argument for the the amplified version is more elaborate is based on the same intuitions. 

% We begin with a brief sketch of the k+e lower bound in the No case of Theorem 2 (details are given in Section […]). Let T be any decision tree for IsEdge. We first observe that the “left spine” of T, the branch followed by the all-zeroes input, must have length at least k. This is because the vertices queried along this left spine must constitute a vertex cover: otherwise, the indicator $x[e]$ of at least one edge will also follow this left spine and receive the same classification as the all-zeroes input, which is a contradiction since IsEdge(x[e]) ne IsEdge(0). Next, we argue that the subtrees that are hanging off the left spine of T have sizes that sum to at least e, giving us an overall lower bound of k+e. To show this, we associate each edge $e = (v_i,v_j)$ with the first vertex along the left spine that queries one of its endpoints, say $v_i$. We then argue that the other endpoint, $v_j$, must be queried in the subtree $T_i$ that hangs off $v_i$; furthermore, all the edges associated with $T_i$ give rise to different queries.    }

\subsubsection{Hardness for constant error} 

To obtain hardness even against algorithms that are allowed constant error, we further improve the No case as follows: 
\medskip 

\begin{tcolorbox}[colback = white,arc=1mm, boxrule=0.25mm]
\vspace{3pt} 
\begin{center}{\bf The reduction for constant-error hardness}
\vspace{-4pt} 
\end{center} 
\begin{itemize}[leftmargin=0.5cm] 
\item[$\circ$] Yes case: If $G$ has a small vertex cover, then $f$ has small decision tree complexity.
\item[$\circ$] No case: If $G$ requires a large vertex cover, then there is a set $D\sse \zo^n$, a distributuon $\mathcal{D}$ over $D$, and a constant $\eps >0$ such that any decision tree that agrees with $f$ with probability $\ge 1-\eps$ over $\bx \sim \mathcal{D}$ must be large. 
\end{itemize} 
\end{tcolorbox}
\medskip

The key new ingredient in this final reduction is the hardness of $\alpha$-{\sl  Partial}\,{\sc VertexCover}, a relaxed version of {\sc VertexCover} where the goal is to find a set of vertices that cover a $1-\alpha$ fraction of vertices. We show that $\alpha$-{\sc PartialVertexCover} inherits its hardness of approximation from {\sc VertexCover} itself: 

\begin{claim}[Hardness of $\alpha$-{\sc PartialVertexCover}]
\label{claim:partial-vc-intro}
There are constants $\alpha\in (0,1)$ and $\delta>0$ such that if $\alpha$-\textsc{PartialVertexCover} on constant-degree, $n$-vertex graphs can be approximated to within a factor of $1+\delta$ in time $t(n)$, then ${\textsc{SAT}}$ can be solved in time $t(n\cdot\polylog(n))$. 
\end{claim}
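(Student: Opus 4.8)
The plan is to reduce from the NP-hardness of approximating \textsc{VertexCover} on constant-degree graphs (the same hypothesis engine used implicitly via Theorem~\ref{thm:ell-isedge-intro}), and to show that a good approximation algorithm for $\alpha$-\textsc{PartialVertexCover} can be bootstrapped, via a small number of calls, into a good approximation algorithm for ordinary \textsc{VertexCover}. The central quantitative fact I would establish first is that for a graph $G$ with minimum vertex cover size $\tau(G)$, the minimum size $\tau_\alpha(G)$ of a set of vertices covering a $(1-\alpha)$-fraction of the edges satisfies $\tau_\alpha(G) \le \tau(G)$ trivially, and---this is the content---$\tau_\alpha(G) \ge \tau(G) - O(\alpha m)$ where $m$ is the number of edges, simply because any partial cover can be completed to a full cover by adding one endpoint of each uncovered edge. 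On constant-degree graphs we have $m = \Theta(n)$ and, crucially, $\tau(G) = \Omega(n)$ (a constant-degree graph with $m$ edges needs at least $m/d = \Omega(n)$ vertices to cover all edges). Hence $O(\alpha m)$ is a small constant multiple of $\tau(G)$, so for $\alpha$ a sufficiently small constant the map $G \mapsto$ ``$\tau_\alpha(G)$'' is sandwiched within a $(1\pm O(\alpha))$ factor of $\tau(G)$.

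Given this, the reduction is essentially the identity on the graph. Suppose we have an algorithm $\mathcal{A}$ that $(1+\delta')$-approximates $\alpha$-\textsc{PartialVertexCover} in time $t(n)$. Run $\mathcal{A}$ on $G$; it returns a vertex set $S$ with $|S| \le (1+\delta') \tau_\alpha(G)$ that covers all but $\alpha m$ edges. Greedily complete $S$ to a genuine vertex cover $S'$ by adding one endpoint per uncovered edge, so $|S'| \le |S| + \alpha m \le (1+\delta')\tau_\alpha(G) + \alpha m \le (1+\delta')\tau(G) + \alpha m$. Using $\tau(G) = \Omega(n) = \Omega(m)$, we get $|S'| \le (1 + \delta' + O(\alpha))\tau(G)$. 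Choosing $\delta'$ and $\alpha$ both small enough that $\delta' + O(\alpha) < \delta$, where $1+\delta$ is the \textsc{VertexCover} inapproximability threshold on constant-degree graphs guaranteed by the PCP theorem (e.g.\ via Dinur--Safra or the bounded-degree gap results), we obtain a $(1+\delta)$-approximation to \textsc{VertexCover}. Feeding that into the standard reduction chain from \textsc{SAT} to gap-\textsc{VertexCover}, which blows up the instance size by only a $\polylog$ factor, yields an algorithm for \textsc{SAT} running in time $t(n\cdot\polylog n)$, as claimed. Strictly speaking one should also verify that $\tau_\alpha(G)$ being approximable lets us approximate $\tau(G)$ as a \emph{number} (which is what the reduction from \textsc{SAT} needs), and this is immediate since $|S'|$ above is an explicit cover whose size we can read off, and conversely $\tau_\alpha(G) \le \tau(G)$ gives the matching lower side.

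There is one subtlety worth isolating as its own step: I want the hardness to survive restriction to constant-degree graphs on \emph{both} ends---the input family to $\mathcal{A}$ must be constant-degree (as the claim states), and the gap-\textsc{VertexCover} instances produced from \textsc{SAT} must also be constant-degree. The latter is standard (bounded-occurrence \textsc{3SAT} via expander-based degree reduction, then the FGLSS-style or Dinur--Safra reduction, all preserving bounded degree up to $\polylog$ size overhead), so I would cite it rather than reprove it. The point of restricting to constant degree is precisely to guarantee $\tau(G) = \Omega(m)$, which is what makes the additive $\alpha m$ slack affordable; on general graphs $\tau$ could be as small as $O(\sqrt m)$ and the argument would break. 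So the logical skeleton is: (1) recall/cite gap-\textsc{VertexCover} hardness on constant-degree graphs with size blowup $\polylog n$; (2) prove the sandwich $\tau(G) - \alpha m \le \tau_\alpha(G) \le \tau(G)$ and combine with $m = \Theta(n)$, $\tau(G) = \Omega(n)$; (3) do the completion-of-partial-cover rounding to transfer an approximate partial cover to an approximate full cover; (4) choose constants $\alpha, \delta', \delta$ compatibly and conclude.

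The main obstacle I anticipate is not any single hard lemma but getting the \emph{order of quantifiers on the constants} exactly right: $\alpha$ must be chosen small relative to the \textsc{VertexCover} gap $\delta_0$ and to the constant degree bound $d$ \emph{before} we know what approximation ratio $1+\delta'$ the hypothesized partial-cover algorithm achieves---yet the claim only asserts the existence of \emph{some} $\alpha$ and \emph{some} $\delta$. So I would fix $d$ and the \textsc{VertexCover} gap constant $\delta_0$ first, then pick $\alpha$ so that the additive slack $\alpha m \le \alpha d n$ is at most, say, $\tfrac{\delta_0}{4}\tau(G)$ (legitimate since $\tau(G) \ge n/d$), and finally set $\delta := \tfrac{\delta_0}{4}$ as the partial-cover gap, so that a $(1+\delta)$-approximation to $\alpha$-\textsc{PartialVertexCover} rounds up to a $(1+\tfrac{\delta_0}{2})$-approximation to \textsc{VertexCover}, comfortably inside the gap. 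With the constants pinned down in that order, the rest is bookkeeping. A secondary, purely expository obstacle is making sure the paper's \textsc{VertexCover}-to-$\ell$-\isedge\ pipeline (Theorem~\ref{thm:ell-isedge-intro}) and this \textsc{PartialVertexCover} claim compose cleanly into the constant-error learning lower bound; but that composition is downstream of this claim and not part of proving the claim itself.
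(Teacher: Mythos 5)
Your proof is correct and follows essentially the same route as the paper's (Claim~\ref{claim:partial vertex cover hardness} combined with Theorem~\ref{thm:hardness of vertex cover}): both hinge on the completion bound $\VC(G) \le \VC_\alpha(G) + O(\alpha m)$ (the paper's Fact~\ref{fact:upgrading a partial vc}) together with the constant-degree bound $\VC(G) \ge m/d$ (Fact~\ref{fact:constant degree graphs have large vc size}), and then choose $\alpha$ small relative to $d$ and the \textsc{VertexCover} gap, in the quantifier order you carefully isolate at the end. The only cosmetic difference is that the paper phrases the reduction entirely in terms of the gapped decision problem $(k,c\cdot k)$, whereas you round an explicit approximate partial cover up to a full cover; the two formulations are interchangeable.
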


This is thanks to the fact that {\sc VertexCover} is hard the approximate even for constant-degree graphs, which in turn follows from the PCP Theorem. 

With~\Cref{claim:partial-vc-intro} in hand,~\Cref{thm:main-intro} then follows by appropriately robustifying the other machinery described in this section.

\section{Discussion and future work}

Assuming SAT requires exponential time, \Cref{thm:main-intro} shows that the inherent time complexity of properly learning decision trees with queries is also exponential: the simple dynamic-programming-based Occam algorithm is essentially optimal, despite evidence to the contrary in the form of fast algorithms for various relaxations of the problem.  

\violet{A concrete problem left open by our work is that of optimizing the efficiency of our reduction, which takes an instance of SAT over $n$ variables and produces an instance of properly learning decision trees over $\tilde{O}(n^2)$ variables. Can this be improved to linear or quasilinear in $n$?} 

More broadly, a natural next step is to understand the complexity of {\sl weakly-proper} learning. As mentioned in the introduction, the landscape changes dramatically for this easier setting, and we have known since the 1980s of an algorithm that runs in quasipolynomial time~\cite{EH89}. This algorithm of Ehrenfeucht and Haussler has resisted improvement for over three decades and it is reasonable to conjecture that it is in fact optimal, even for query learners: 

\begin{conjecture}
\label{conj:optimality of EH} There is no algorithm that, given queries to a size-$s$ decision tree target and access to random labeled examples, runs in time $n^{o(\log s)}$ and returns an accurate decision tree hypothesis---one of any size, not necessarily $s$.
\end{conjecture}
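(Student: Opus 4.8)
The final statement is \Cref{conj:optimality of EH}, a conjecture, not a theorem that the authors prove. Accordingly, what follows is not a proof but a proposal for how one might attempt to establish it, together with an honest assessment of why this is hard and what partial progress looks like.

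\textbf{The plan.} The natural route is to strengthen \Cref{thm:main-intro} from a dichotomy between $\poly(n,s)$ and exponential time into a finer quantitative lower bound, and simultaneously to weaken the conclusion of the reduction so that it rules out \emph{weakly}-proper learners (hypotheses of any size), not just proper ones. Concretely, I would aim to reduce from a problem whose best known algorithm runs in time $n^{\Theta(\log s)}$ for a natural parameter $s$ — the obvious candidate being a promise version of a covering/approximation problem for which the $n^{O(\log s)}$ algorithm of Ehrenfeucht--Haussler is, in a precise sense, the ``right'' running time. The first step is to identify the source of the $\log s$ in \cite{EH89}: their algorithm guesses, level by level, the $O(\log s)$ ``heavy'' variables that a good greedy split would query, giving $n^{O(\log s)}$ branchings. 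A matching lower bound would have to encode, into the decision-tree-learning instance, a combinatorial object on which \emph{any} algorithm must effectively try $n^{\Omega(\log s)}$ candidate logarithmic-size substructures. This suggests reducing from a parameterized problem like (a gap version of) $\log$-size \textsc{Set-Cover}/\textsc{Clique} under ETH-type assumptions, for which $n^{\Omega(\log s)}$-type lower bounds are plausible, rather than from plain \textsc{SAT}.

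\textbf{Key steps, in order.} (1) Formulate a ``distilled'' hardness source: a family of partial functions $f$ on $n$ variables such that $f$ is consistent with a size-$s$ decision tree, yet any \emph{depth}-$o(\log s)$-adaptive-querying reconstructor that must produce \emph{any} hypothesis agreeing with $f$ on the coreset $D$ is forced to solve a gap problem of complexity $n^{\Omega(\log s)}$. The hardness-distillation machinery of \Cref{thm:ell-isedge-intro} and its coreset refinement is the right vehicle: it already produces, from a graph $G$, a function whose decision-tree complexity tracks $\mathrm{VC}(G)$, with a $\poly(n)$-size coreset witnessing the No case. (2) Replace the single amplification parameter $\ell$ with a \emph{recursive/tree-structured} amplification so that the No-case lower bound becomes $n^{\Omega(\log s)}$ against \emph{small-size} hypotheses specifically — i.e., show that a size-$n^{o(\log s)}$ hypothesis of \emph{any} form consistent with the coreset would yield an $n^{o(\log s)}$ decision procedure for the underlying gap-VC instances, contradicting the relevant fine-grained assumption. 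This is where weak-properness must be handled: one needs the coreset $D$ to be ``rigid'' enough that low query complexity plus a sub-$n^{\log s}$-size hypothesis together still pin down the hard combinatorial answer. (3) Port the constant-error robustification (via $\alpha$-\textsc{PartialVertexCover}, \Cref{claim:partial-vc-intro}) through the recursive construction, so the statement holds for accurate-but-not-exact hypotheses as in \Cref{thm:main-intro}.

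\textbf{The main obstacle.} The essential difficulty is step (2): the whole point of weakly-proper learning is that the hypothesis is unconstrained in size, so the slickest part of the proper-learning argument — ``a small tree consistent with $D$ must expose a vertex cover'' — evaporates. The Ehrenfeucht--Haussler algorithm shows that a size-$n^{O(\log s)}$ hypothesis genuinely exists and can be \emph{found} efficiently relative to that size budget, so any lower bound must be delicately calibrated to rule out $n^{o(\log s)}$ while being consistent with $n^{O(\log s)}$. Achieving this almost certainly requires a lower bound technique with a built-in $\log$ factor — e.g., an XOR/tribes-style recursive composition where each level multiplies the ``number of plausible logarithmic witnesses'' rather than just the tree size — combined with a fine-grained complexity assumption (ETH or its gap variants) strong enough to supply $n^{\Omega(\log s)}$ hardness for the base combinatorial problem, since $\mathrm{P} \ne \mathrm{NP}$ alone cannot distinguish $n^{o(\log s)}$ from $n^{O(\log s)}$. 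Matching the \cite{EH89} exponent exactly, as the conjecture demands, is the crux and is why this remains open; a realistic intermediate target is an $n^{\Omega(\log\log s)}$ or $n^{(\log s)^{\Omega(1)}}$ lower bound for weakly-proper query learning, obtained by pushing the recursive amplification as far as the available PCP/ETH toolkit allows.
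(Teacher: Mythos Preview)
You correctly identify that \Cref{conj:optimality of EH} is a conjecture the paper poses as an open problem, not a result it proves; there is no proof in the paper to compare against, and your response appropriately takes the form of a research proposal rather than a proof.

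Your proposal is reasonable and well-aligned with the paper's own brief discussion of the obstacle. The paper notes one concrete barrier to extending its techniques to the weakly-proper setting: the reduction is from \textsc{VertexCover}, which admits a simple $2$-approximation, so the current machinery cannot yield the superconstant-factor hardness that weakly-proper lower bounds would require. Your step~(2) essentially recognizes this same issue from a different angle---that an unconstrained-size hypothesis dissolves the ``small tree exposes a vertex cover'' argument---and your suggestion to move to a recursively amplified construction atop a fine-grained assumption (ETH or parameterized gap problems) is a natural response. The paper does not go this far; it only flags the \textsc{VertexCover} approximability barrier and leaves the conjecture open. Your honest assessment that matching the $n^{\Theta(\log s)}$ exponent is the crux, and that intermediate targets like $n^{(\log s)^{\Omega(1)}}$ may be more realistic, is apt.
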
  

\Cref{table} places \Cref{thm:main-intro} and~\Cref{conj:optimality of EH} within the context of prior work:

\backrefsetup{disable}
\begin{table}[H]
  \captionsetup{width=.9\linewidth}
\begin{adjustwidth}{-5.5em}{}
\renewcommand{\arraystretch}{2}
\centering
\begin{tabular}{|c|c|c|}
\hline 
 & {\bf Random Examples} & {\bf Queries} \\ 
\hline 
\centering 
\begin{tabular}{c}
{\bf Proper}  \vspace{-12pt} \\
{\bf Learning} 
\end{tabular} & 
\begin{tabular}{c}
\cite{Ang,PV88}: Exponential lower bound. \vspace{-10pt} \\
Assumption: SAT requires exponential time \vspace{3pt}
\end{tabular} 
& 
\begin{tabular}{c}
\Cref{thm:main-intro}: Exponential lower bound. \vspace{-10pt} \\
Assumption: SAT requires exponential time \vspace{3pt} 
\end{tabular}
 \\ \hline 
%\begin{adjustwidth}{-2em}
\begin{tabular}{c}
{\bf Weakly-proper} \vspace{-10pt} \\
{\bf Learning} 
\end{tabular}
%\end{adjustwidth} 
& 
\begin{tabular}{c} 
\cite{ABFKP09,KST23}: Quasipoly lower bound. \vspace{-10pt} \\
Assumption: Inapproximability of  \vspace{-12pt} \\ parameterized {\sc SetCover} \vspace{3pt}
\end{tabular} 
& \Cref{conj:optimality of EH}: Quasipolynomial lower bound.  \\ \hline 
\end{tabular} 
\end{adjustwidth} 
\caption{Lower bounds for proper and weakly-proper learning of decision trees. In terms of upper bounds, the fastest known proper algorithm (dynamic-programming-based Occam algorithm) runs in exponential time, and the fastest known weakly-proper (Ehreunfeucht--Haussler) runs in quasipolynomial time.} 
\label{table} 
\end{table} 
\backrefsetup{enable}

Weakly-proper learning algorithms are akin to approximation algorithms, and the hardness of weakly-proper learning is akin to the hardness of approximation. An immediate, but not necessarily insurmountable obstacle in extending our techniques to the setting of weakly-proper learning is the fact that {\sc VertexCover}, whose hardness of approximation we rely on in our proof, is not {\sl that} hard to approximate: a simple greedy algorithm achieves a $2$-approximation. 

There is also more to be understood for (strongly-)proper learning of decision trees. Our work taken together with the recent query learner of~\cite{BLQT22} highlights, quite dramatically, the effect of distributional assumptions on the problem: our work gives an exponential lower bound in the distribution-free setting, whereas~\cite{BLQT22} gives an almost-polynomial time query algorithm for the uniform distribution. In the spirit of beyond worst-case analysis, an ambitious direction for future work is to understand the tractability of the problem vis-à-vis the complexity of the underlying distribution. An ultimate goal is to design efficient algorithms that circumvent the lower bounds established in this work, but nonetheless enjoy performance guarantees for the broadest possible class of distributions.

\violet{Finally, we believe that the notions of hardness distillation and coresets introduced in this work merit further study and could lead to more connections between the hardness of minimization problems and the hardness of learning.}

\section{Preliminaries} 

\paragraph{Notation and naming conventions.}{We write $[n]$ to denote the set $\{1,2,\ldots,n\}$. We use lower case letters to denote bitstrings e.g. $x,y\in\zo^n$ and subscripts to denote bit indices: $x_i$ for $i\in [n]$ is the $i$th index of $x$. The string $x^{\oplus i}$ is $x$ with its $i$th bit flipped. We use superscripts to denote multiple bitstrings of the same dimension, e.g. $x^{(1)},x^{(2)},...,x^{(j)}\in\zo^n$. For a finite set $S$, $\text{Perm}(S)$ denotes the set of permutations of $S$. If $S=\{s_1,\ldots,s_{|S|}\}$, we identify $\pi\in\text{Perm}(S)$ with the tuple $(s_{i_1},\ldots, s_{i_{|S|}})$ where $\pi(s_j)=s_{i_j}$. In this setting, we simply write $\pi(j)$ to denote the $j$th element of the tuple, $\pi(j)=s_{i_j}.$
}

\paragraph{Distributions.}{We use boldface letters e.g. $\bx,\by$ to denote random variables. For a distribution $\mathcal{D}$, we write $\dist_{\mathcal{D}}(f,g)=\Pr_{\bx\sim \mathcal{D}}[f(\bx)\neq g(\bx)]$. A function $f$ is $\eps$-close to $g$ if $\dist_{\mathcal{D}}(f,g)\le \eps$. Similarly, $f$ is $\eps$-far from $g$ if $\dist_{\mathcal{D}}(f,g)>\eps$. The support of the distribution is the set of elements with nonzero mass and is denoted $\supp(\mathcal{D})$. 
}

\paragraph{{Decision trees.}}{ The size of a decision tree $T$ is its number of internal nodes and is denoted $|T|$. Two subtrees of $T$ are \textit{disjoint} if they do not share any internal nodes. In an abuse of notation, we also write $T$ for the function computed by the decision tree $T$. We say $T$ computes a function $f:\zo^n\to\zo$ if $T(x)=f(x)$ for all $x\in\zo^n$. The decision tree complexity of a function $f$ is the size of the smallest decision tree computing $f$ and is denoted $\dtsize(T)$.}

\paragraph{Restrictions and decision tree paths.}{
A restriction $\rho$ is a set $\rho\sse \{x_1,\overline{x}_1,\ldots,x_n,\overline{x}_n\}$ of literals, and $f_\rho$ is the subfunction obtained by restricting $f$ according to $\rho$: $f_\rho(x^\star)=f(x^\star\vert_\rho)$ where $x^\star\vert_\rho$ is the string obtained from $x^\star$ by setting its $i$th coordinate to $1$ if $x_i\in \rho$, $0$ if $\overline{x}_i\in \rho$, and otherwise setting it to $x^\star_i$. We say an input $x^\star$ is consistent with $\rho$ if $x_i\in \rho$ implies $x^\star_i=1$ and $\overline{x}_i\in \rho$ implies $x^\star_i=0$. 

We identify a depth-$d$, non-terminal path $\pi$ in a decision tree with a tuple of $d$ literals: $\pi=(\ell_1,\ell_2,\ldots,\ell_d)$ where each $\ell_i$ corresponds to a query of an input variable and is unnegated if $\pi$ follows the right branch and negated if $\pi$ follows the left branch. Paths naturally correspond to restrictions by forgetting their ordering. Therefore, we also write $f_\pi$ to denote the restriction of $f$ by $\{\ell_1,\ell_2,\ldots,\ell_d\}$. 
}

\paragraph{Graphs.}{An undirected graph $G=(V,E)$ has $n$ vertices $V\sse [n]$ and $m=|E|$ edges $E\sse V\times V$. The degree of a vertex $v\in V$ is the number of edges containing it: $|\{e\in E: v\in e\}|$. The graph $G$ is degree-$d$ if every vertex $v\in V$ has degree at most $d$. We often use letters $v,u,w$ to denote vertices of a graph $G$. 
}

\paragraph{Learning.}{In the PAC learning model, there is an unknown distribution $\mathcal{D}$ and some unknown \textit{target} function $f\in\mathcal{C}$ from a fixed \textit{concept} class $\mathcal{C}$ of functions over a fixed domain. An algorithm for learning $\mathcal{C}$ over $\mathcal{D}$ takes as input an error parameter $\eps\in (0,1)$ and has oracle access to an \textit{example oracle} $\textnormal{EX}(f,\mathcal{D})$. The algorithm can query the example oracle to receive a pair $(\bx,f(\bx))$ where $\bx\sim\mathcal{D}$ is drawn independently at random. The goal is to output a \textit{hypothesis} $h$ such that $\dist_{\mathcal{D}}(f,h)\le \eps$. Since the example oracle is inherently randomized, any learning algorithm is necessarily randomized. So we require the learner to succeed with some fixed probability e.g.~$2/3$. A learning algorithm is \textit{proper} if it always outputs a hypothesis $h\in\mathcal{C}$. A learning algorithm with \textit{queries} is given oracle access to the target function $f$ along with the example oracle $\textnormal{EX}(f,\mathcal{D})$. 

In this work we focus on the task of properly learning the concept class $\mathcal{T}_s =\{T:\zo^n\to\zo\mid T\text{ is a size-$s$ decision tree}\}$.
\begin{definition}[Properly PAC learning decision trees with queries]
An algorithm $\mathcal{L}$ properly learns $\mathcal{T}_s$ in time $t(n,s,\eps)$ if for all distributions $\mathcal{D}$ and for all $T\in\mathcal{T}_s$ and $\eps\in(0,1)$, $\mathcal{L}$ with oracle access to $\textnormal{EX}(T,\mathcal{D})$ and queries to $T$ runs in time $t(n,s,\eps)$ and, with probability $2/3$, outputs $h\in \mathcal{T}_s$ such that $\dist_{\mathcal{D}}(T,h)\le \eps$. 
\end{definition}
}

\paragraph{PCPs and {\sc Max-3Sat}.}{
For this work, we are interested in reductions from \SAT. Our techniques will rely on the hardness of approximation and  we therefore need a reduction from \SAT\ to approximating {\sc Max-3Sat}. The most efficient reduction exploits quasilinear PCPs: 

\begin{theorem}[Hardness of approximating {\sc Max-3Sat} via quasilinear PCPs \cite{Din07,EM08}]
\label{thm:quasilinear pcps}
    There is a constant $c\in (0,1)$ and a polynomial-time reduction that takes a 3CNF formula $\varphi$ with $m$ clauses and produces a 3CNF formula $\varphi^\star$ with $O(m\cdot\polylog(m))$ clauses satisfying
    \begin{itemize}
        \item if $\varphi$ is satisfiable then $\varphi^\star$ is satisfiable;
        \item if $\varphi$ is unsatisfiable then no assignment satisfies a $c$-fraction of clauses of $\varphi^\star$.
    \end{itemize}
\end{theorem}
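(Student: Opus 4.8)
The plan is to obtain Theorem~\ref{thm:quasilinear pcps} by composing two ingredients: a \emph{quasilinear-size PCP for $\NP$} with constant query complexity and constant soundness, which is the technical heart and which I would import in black-box form from \cite{Din07,EM08}; and the standard transformation of an arbitrary PCP verifier into a gap instance of \textsc{Max-3Sat}, which I would carry out while tracking the parameter blowup so that it stays quasilinear.

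First I would invoke the quasilinear PCP theorem: there is a probabilistic verifier $V$ that, on an $m$-clause 3CNF $\varphi$, tosses $r = \log_2 m + O(\log\log m)$ coins, makes $q = O(1)$ non-adaptive queries into a proof string $\Pi$ of length $m\cdot\polylog(m)$, has perfect completeness, and has soundness $s$ for an absolute constant $s < 1$; moreover the positions it queries and its acceptance predicate are computable from the coin string in $\poly(m)$ time. This is exactly what Dinur's gap-amplification argument yields once it is instantiated over the quasilinear PCPs of proximity underlying \cite{Din07,EM08} --- the non-obvious point being that bounded query complexity and bounded soundness coexist with proof length $m\cdot\polylog(m)$.

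Next I would convert $V$ into $\varphi^\star$. Introduce one Boolean variable for each of the $m\cdot\polylog(m)$ bits of $\Pi$. For each coin string $R\in\zo^r$, the verifier's decision is a predicate $\phi_R$ on the $q$ positions it reads; express $\phi_R$ as a $q$-CNF with at most $2^q$ clauses of width $q$, and convert each width-$q$ clause into $3$-clauses via the usual Tseitin transformation with fresh auxiliary variables, obtaining a block $B_R$ of at most $K := O(q)\cdot 2^q = O(1)$ many $3$-clauses that is satisfiable over its auxiliary variables precisely when the main variables satisfy $\phi_R$. Set $\varphi^\star := \bigwedge_{R\in\zo^r} B_R$. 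Then $\varphi^\star$ has at most $K\cdot 2^r = O(m\cdot\polylog(m))$ clauses and is produced in $\poly(m)$ time by enumerating all $R$.

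Finally I would check the gap in both directions. If $\varphi$ is satisfiable, an accepted proof $\Pi$ together with the induced settings of the auxiliary variables satisfies every block $B_R$, so $\varphi^\star$ is satisfiable. If $\varphi$ is unsatisfiable, an arbitrary assignment to $\varphi^\star$ restricts to a proof $\Pi$ that $V$ rejects on at least a $1-s$ fraction of coin strings; each such $R$ yields a violated block, and hence at least one falsified $3$-clause. Thus at least $(1-s)\cdot 2^r$ of the at most $K\cdot 2^r$ clauses are falsified, so no assignment satisfies more than a $c := 1-(1-s)/K$ fraction, and $c\in(0,1)$ is a constant. I expect the first step --- the quasilinear PCP itself --- to be the bottleneck; but since it is imported as a black box, the only care needed afterwards is confirming that each multiplicative overhead ($2^q$, the Tseitin blowup, and $2^r = m\cdot\polylog(m)$) is either constant or quasilinear in $m$, and that it is precisely the constant query complexity and constant soundness gap of $V$ that make the gap survive the transformation.
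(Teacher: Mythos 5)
The paper does not prove \Cref{thm:quasilinear pcps}; it is stated in the preliminaries as a known result and attributed to \cite{Din07,EM08}. Your argument is the standard derivation that these citations implicitly invoke, and it is correct: you import the quasilinear-size, constant-query, constant-soundness PCP verifier as a black box and then run the textbook conversion of a PCP verifier into a gap-\textsc{Max-3Sat} instance, one $O(1)$-size block of $3$-clauses per coin string, tracking that the number of coin strings is $2^r = O(m\,\polylog m)$ so the final formula size stays quasilinear. The soundness accounting (each rejecting coin string falsifies at least one clause in its constant-size block, so a $(1-s)$ fraction of rejecting coin strings yields a $(1-s)/K$ fraction of falsified clauses) is exactly right, as is the completeness direction. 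One small terminological nit: the width-reduction of a $q$-clause to $O(q)$ many $3$-clauses via chained auxiliary variables is not usually called the Tseitin transformation (which refers to the circuit-to-CNF encoding), but the construction you describe is the correct one and has the property you need, namely that the block is unsatisfiable over the auxiliaries whenever the main variables violate the original clause.
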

}

\subsection{Hardness of Vertex Cover} 

\paragraph{Vertex cover.}{A vertex cover for an undirected graph $G=(V,E)$ is a subset of the vertices $C\sse V$ such that every edge has at least one endpoint in $C$. We write $\VC(G)\in \N$ to denote the size of the smallest vertex cover. See \Cref{fig:vertex cover prelims} for an example of a vertex cover. The {\vcover} problem is to decide whether a graph contains a vertex cover of size-$k$, i.e. to decide if $\VC(G)\le k$. We consider the more general \textit{gapped} vertex problem where the problem is to decide whether a graph has a small vertex cover or requires large vertex cover. Specifically we write $(k,k')$\textsc{-VertexCover} for the problem of deciding whether a graph contains a vertex cover of size-$k$ or every vertex cover has size at least $k'$. This gapped problem is equivalent to the problem of \textit{approximating} vertex cover. There is a polynomial-time greedy algorithm for vertex cover that approximates it within a factor of $2$, i.e.~solves $(k,2k)$\textsc{-VertexCover} in polynomial-time.

Constant factor hardness of \textsc{VertexCover} is known, even for bounded degree graphs (graphs whose degree is bounded by some universal constant). Papadimitriou and Yannakakis in \cite{PY91} give an approximation preserving reduction from {\sc Max-3Sat} to \textsc{VertexCover} on constant-degree graphs. The PCP theorem \cite{AS98, ALMSS98} implies $\NP$-hardness of approximating {\sc Max-3Sat} and therefore, combined with the reduction in \cite{PY91}, implies hardness of approximating \textsc{VertexCover} on constant-degree graphs. (For a further discussion and history of these results, see the survey by Trevisan \cite{Tre14}.)

\begin{theorem}[Hardness of approximating \textsc{VertexCover}]
\label{thm:hardness of vertex cover} 
There are constants $\delta>0$ and $d\in \N$ such that if $(k,(1+\delta)\cdot k)$\textsc{-VertexCover} on $n$-vertex degree-$d$ graphs can be solved in time $t(n)$, then \SAT\ can be solved in time $t(n\cdot\polylog(n))$.
\end{theorem}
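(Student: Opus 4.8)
The plan is to chain together the two ingredients that the paper has already named: the hardness of approximating \textsc{Max-3Sat} via quasilinear PCPs (\Cref{thm:quasilinear pcps}) and the approximation-preserving reduction of Papadimitriou and Yannakakis \cite{PY91} from \textsc{Max-3Sat} to \textsc{VertexCover} on constant-degree graphs. First I would recall the precise form of the Papadimitriou--Yannakakis reduction: given a 3CNF $\varphi^\star$ with $M$ clauses (and without loss of generality with every variable appearing a bounded number of times, which the gadget reductions in \cite{PY91} arrange), it produces in polynomial time a graph $G$ on $\Theta(M)$ vertices whose maximum degree is bounded by a universal constant $d$, together with a threshold $\tau$, such that the minimum vertex cover size $\VC(G)$ decreases linearly in the number of clauses of $\varphi^\star$ that can be simultaneously satisfied. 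Concretely: if $\varphi^\star$ is satisfiable then $\VC(G) \le \tau$, while if at most a $c$-fraction of the clauses of $\varphi^\star$ can be satisfied (the soundness case of \Cref{thm:quasilinear pcps}) then every vertex cover of $G$ has size at least $(1+\delta_0)\tau$ for some absolute constant $\delta_0 > 0$ depending only on $c$ and the gadget. The constant-degree property is exactly what \Cref{claim:partial-vc-intro} later needs, so it is worth keeping explicit.

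The steps, in order, would be: (1) Start with a \SAT\ instance, convert it to an equivalent 3CNF $\varphi$ with $m = O(n)$ clauses by the standard Tseitin/Cook--Levin encoding (preserving satisfiability). (2) Apply \Cref{thm:quasilinear pcps} to $\varphi$ to obtain $\varphi^\star$ with $M = O(m \cdot \polylog m) = O(n \cdot \polylog n)$ clauses and the perfect-completeness / $c$-soundness gap. (3) Apply the \cite{PY91} reduction to $\varphi^\star$ to obtain a degree-$d$ graph $G$ on $N = \Theta(M) = O(n \cdot \polylog n)$ vertices and a threshold $\tau$ such that $\varphi^\star$ satisfiable $\Rightarrow \VC(G) \le \tau$, and $\varphi^\star$ not $c$-satisfiable $\Rightarrow \VC(G) \ge (1+\delta)\tau$ with $\delta := \delta_0$ an absolute constant. (4) Observe that a $t(N)$-time algorithm for $(k,(1+\delta)k)$-\textsc{VertexCover} on $N$-vertex degree-$d$ graphs, run on $(G,\tau)$, distinguishes the two cases and hence decides whether $\varphi$ (equivalently the original \SAT\ instance) is satisfiable; the total running time is $\poly(n) + t(O(n\cdot\polylog n)) = t(n \cdot \polylog n)$ after absorbing the polynomial reduction overhead into the $\polylog$ factor (or, if one prefers, stating the conclusion as $\poly(t(n\cdot\polylog n))$, which is harmless). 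This gives the claimed implication.

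The one point requiring care — and the main obstacle — is bookkeeping the parameters so that the blow-up stays quasilinear and the gap stays a fixed constant. The \cite{PY91} reduction as usually stated is ``approximation preserving'' in a MAX-SNP sense, i.e. it maps the fraction of unsatisfied clauses to a proportional additive deficit in the vertex cover; I need to verify that the proportionality constant is absolute (independent of $M$) so that the multiplicative gap $(1+\delta)$ on $\VC(G)$ is a genuine constant, using that $\tau = \Theta(M) = \Theta(N)$. This is standard but must be checked against the specific gadget, since a naive composition could leave $\delta$ shrinking like $1/\text{(max degree)}$ or the vertex count growing super-linearly; the bounded-occurrence preprocessing of $\varphi^\star$ (expander-replacement of high-degree variables) is what keeps both under control, and it is exactly this preprocessing that also secures the degree-$d$ conclusion. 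Everything else is routine transitivity of reductions. For a fuller account of the history and the exact constants, I would point the reader to Trevisan's survey \cite{Tre14}, as the excerpt already does.
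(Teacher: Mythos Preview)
Your proposal is correct and follows exactly the approach the paper takes: the paper's own justification is simply the one-line remark that ``this hardness follows from \Cref{thm:quasilinear pcps} and the reduction in \cite{PY91},'' with the $n\cdot\polylog(n)$ blowup attributed to the quasilinear PCP. Your write-up is in fact considerably more detailed than what the paper provides, and the care you take with the parameter bookkeeping (bounded-occurrence preprocessing, $\tau=\Theta(N)$, absolute gap constant) is appropriate and matches the standard argument referenced.
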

}

This hardness follows from \Cref{thm:quasilinear pcps} and the reduction in \cite{PY91}. The $n\cdot\polylog(n)$ factor originates from \Cref{thm:quasilinear pcps}. 

The fact that \Cref{thm:hardness of vertex cover} holds for constant degree graphs will be essential for our lower bound because it allows us to assume that $k$ is large: $\VC(G)=\Theta(m)$.

\begin{fact}[Constant degree graphs require large vertex covers]
    \label{fact:constant degree graphs have large vc size}
    If $G$ is an $m$-edge degree-$d$ graph, then $\VC(G)\ge m/d$.
\end{fact}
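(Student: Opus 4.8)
The plan is a one-line double-counting argument. Fix any vertex cover $C \subseteq V$ of $G$. By definition every edge $e \in E$ has at least one endpoint in $C$, so if we count pairs $(v,e)$ with $v \in C$ and $v \in e$, each of the $m$ edges contributes at least one such pair; hence the number of these incidences is at least $m$. On the other hand, since $G$ is degree-$d$, each vertex $v \in C$ lies in at most $d$ edges, so the number of incidences is at most $d \cdot |C|$. Combining, $m \le d\cdot |C|$, i.e.\ $|C| \ge m/d$.

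Since this inequality holds for \emph{every} vertex cover $C$, it holds in particular when $C$ is a minimum-size vertex cover, giving $\VC(G) \ge m/d$ as claimed.

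There is no real obstacle here; the only thing to be mildly careful about is the bookkeeping of the incidence count (whether an edge both of whose endpoints lie in $C$ is counted once or twice — either convention works, since we only need the lower bound $\ge m$ on the total and the upper bound $\le d|C|$, and both hold under the ``count each incident pair once'' convention). The statement is used downstream only to guarantee $\VC(G) = \Theta(m)$ on the constant-degree instances produced by \Cref{thm:hardness of vertex cover}, so the crude bound $m/d$ suffices.
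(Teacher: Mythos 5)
Your double-counting argument is correct and is just a more detailed write-up of the paper's own one-line justification, namely that each vertex covers at most $d$ edges in a degree-$d$ graph. Same idea, same bound.
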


This fact follows from the observation that in a degree-$d$ graph each vertex can cover at most $d$ edges. 

\begin{figure}[h!]
    \centering
    \begin{tikzpicture}[
      % every node/.style={draw,circle,minimum size=0.25cm,inner sep=-0pt},
      % usnode/.style={fill=black},
      % ssnode/.style={fill=black},
      % every fit/.style={ellipse,draw,inner sep=-2pt,text width=2cm},
      % ->,shorten >= 2pt,shorten <= 2pt
    ]
        \centering
        \node[shape=circle,draw=black,inner sep=1.5pt,fill=black] (WW) at (-1,0) {};
        \node[shape=circle,draw=black,inner sep=1.5pt,fill=black] (WN) at (-0.71,0.71) {};
        \node[shape=circle,draw=black,inner sep=1.5pt,fill=black] (WS) at (-0.71,-0.71) {};
        \node[shape=circle,draw=teal,inner sep=1.5pt,fill=teal] (W) at (0,0) {};
        
        \node[shape=circle,draw=teal,inner sep=1.5pt,fill=teal] (C) at (1,0) {};
        \node[shape=circle,draw=black,inner sep=1.5pt,fill=black] (CN) at (1,1) {};
        \node[shape=circle,draw=black,inner sep=1.5pt,fill=black] (CS) at (1,-1) {};

        \node[shape=circle,draw=teal,inner sep=1.5pt,fill=teal] (E) at (2,0) {};
        \node[shape=circle,draw=black,inner sep=1.5pt,fill=black] (EE) at (3,0) {};
        \node[shape=circle,draw=black,inner sep=1.5pt,fill=black] (EN) at (2.71,0.71) {};
        \node[shape=circle,draw=black,inner sep=1.5pt,fill=black] (ES) at (2.71,-0.71) {};
        
        \path [-] (WW) edge node[left] {} (W);
        \path [-] (WN) edge node[left] {} (W);
        \path [-] (WS) edge node[left] {} (W);
        \path [-] (C) edge node[left] {} (W);
        
        \path [-] (CN) edge node[left] {} (C);
        \path [-] (CS) edge node[left] {} (C);
        \path [-] (E) edge node[left] {} (C);
        
        \path [-] (EE) edge node[left] {} (E);
        \path [-] (EN) edge node[left] {} (E);
        \path [-] (ES) edge node[left] {} (E);
    \end{tikzpicture}
  \captionsetup{width=.9\linewidth}
    \caption{A graph $G=(V,E)$ with $10$ edges having $\VC(G)=3$. The unique vertex cover of size $3$ is highlighted in teal.}
    \label{fig:vertex cover prelims}
\end{figure}
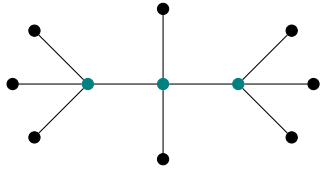

\section{A reduction from $\mathrm{\sc VertexCover}$ to Decision Tree Minimization}

\subsection{Intuition and warmup: the ${\mathrm{\isedge}}_G$ function} 

In this section we prove~\Cref{claim:isedge-intro}, which serves as a warmup for our core reduction,~\Cref{thm:ell-isedge-intro}.  We first introduce a few notions (and notation) that will be useful throughout the rest of the paper. 

\subsubsection{Useful notions and notation: edge partitions and divergent path prefixes} 

\paragraph{Edge partitions induced by decision trees for $\mathrm{\isedge}$.}{
We will make use of the notion of a \textit{restricted edge neighborhood} and a \textit{restricted vertex neighborhood}. Specifically, we will be interested in the edges incident to a particular vertex which do \textit{not} contain certain vertices.

\begin{definition}[Restricted edge and vertex neighborhood]
    \label{defn:vertex neighborhood}
    For a graph $G=(V,E)$, the edge neighborhood of $v_{i_\kappa}\in V$ restricted by $v_{i_1},\ldots,v_{i_{\kappa-1}}$, denoted $E(v_{i_\kappa}; v_{i_1},\ldots,v_{i_{\kappa-1}})$, is the set of edges containing $v_{i_\kappa}$ but \textit{not} any of $v_{i_1},\ldots,v_{i_{\kappa-1}}$: 
    $$
    E(v_{i_\kappa}; v_{i_1},\ldots,v_{i_{\kappa-1}})\coloneqq \{e\in E\mid v_{i_\kappa}\in e\text{ and }v_{i_1},\ldots,v_{i_{\kappa-1}}\not\in e\}.
    $$
    
    The vertex neighborhood of $v_{i_\kappa}$ restricted by $v_{i_1},\ldots,v_{i_{\kappa-1}}$, denoted $V(v_{i_\kappa}; v_{i_1},\ldots,v_{i_{\kappa-1}})$, is the set of neighbors of $v_{i_\kappa}$ excluding the vertices $v_{i_1},\ldots,v_{i_{\kappa-1}}$:
    $$
    V(v_{i_\kappa}; v_{i_1},\ldots,v_{i_{\kappa-1}})\coloneqq \left\{v\in V\mid \{v_{i_\kappa},v\}\in E\text{ and } v\neq v_{i_1},\ldots,v_{i_{\kappa-1}} \right\}.
    $$
\end{definition}

Often when a tuple of vertices $(v_{i_1},\ldots,v_{i_k})$ is understood from context, we will use the shorthand notation $E_\kappa=E(v_{i_\kappa};v_{i_1},\ldots,v_{i_{\kappa-1}})$ for $\kappa=1,\ldots,k$ and likewise for $V_\kappa$. Restricted edge and vertex neighborhoods are closely related to each other, and each can be defined in terms of the other:
$$
E_\kappa = \left\{\{v_{i_\kappa},v\}\mid v\in V_\kappa\right\}\quad\text{and}\quad V_\kappa=\{v\mid \{v_{i_\kappa},v\}\in E_\kappa\}.
$$

Given a vertex cover $\{v_{i_1},\ldots,v_{i_k}\}$, the sets $\{E_\kappa\}_{\kappa\in [k]}$ form a partition of the edge set $E$. Indeed,
$$
\bigcup_{\kappa\in [k]} E_\kappa = E
$$
since every edge in $G$ is incident to some vertex $v_{i_\kappa}$. Also, the sets $E_\kappa$ are disjoint since each $E_\kappa$ excludes the edges already covered by the previous $E_1,\ldots,E_{\kappa-1}$ sets. In fact, the converse also holds. If $C=\{v_{i_1},\ldots,v_{i_k}\}$ are vertices such that $E_\kappa$ partition the edge set then $C$ must form a vertex cover: every edge $e\in E$ is in some partition $E_\kappa$ and so $v_{i_\kappa}$ covers $e$.

\begin{fact}
\label{fact:restricted vertex neighborhood partitions}
Let $C=\{v_{i_1},\ldots,v_{i_k}\}$ be a subset of vertices of a graph $G$ and $E_\kappa\coloneqq E(v_{i_\kappa};v_{i_1},\ldots,v_{i_{\kappa-1}})$ for $\kappa\in [k]$. Then $C$ forms a vertex cover of $G$ if and only if $\{E_\kappa\}_{\kappa\in[k]}$ form a partition of $E$.  
\end{fact}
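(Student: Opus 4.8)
The plan is to prove both directions directly from the definition of $E_\kappa = E(v_{i_\kappa}; v_{i_1},\ldots,v_{i_{\kappa-1}})$, observing first a structural fact that does all the work: for \emph{any} ordered subset $C = \{v_{i_1},\ldots,v_{i_k}\}$ of vertices (cover or not), the sets $E_1,\ldots,E_k$ are always pairwise disjoint, so the only thing that can vary is whether their union is all of $E$. Disjointness is immediate: if $\kappa < \lambda$ and $e \in E_\lambda$, then by definition $e$ does not contain $v_{i_\kappa}$, hence $e \notin E_\kappa$; so $E_\kappa \cap E_\lambda = \emptyset$. Thus $\{E_\kappa\}_{\kappa\in[k]}$ is a partition of $E$ if and only if $\bigcup_{\kappa\in[k]} E_\kappa = E$, and since every $E_\kappa \subseteq E$ this is equivalent to $E \subseteq \bigcup_{\kappa} E_\kappa$.

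Next I would show the equivalence ``$C$ is a vertex cover $\iff E \subseteq \bigcup_\kappa E_\kappa$.'' For the forward direction, suppose $C$ is a vertex cover and let $e \in E$ be arbitrary. Then $e$ contains some vertex of $C$; let $\kappa$ be the \emph{smallest} index with $v_{i_\kappa} \in e$. By minimality, $e$ contains none of $v_{i_1},\ldots,v_{i_{\kappa-1}}$, so $e \in E(v_{i_\kappa}; v_{i_1},\ldots,v_{i_{\kappa-1}}) = E_\kappa$. Hence $e \in \bigcup_\kappa E_\kappa$, proving $E \subseteq \bigcup_\kappa E_\kappa$. For the reverse direction, suppose $\bigcup_\kappa E_\kappa = E$ and let $e \in E$. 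Then $e \in E_\kappa$ for some $\kappa$, and by the definition of $E_\kappa$ this forces $v_{i_\kappa} \in e$; since $v_{i_\kappa} \in C$, the edge $e$ is covered by $C$. As $e$ was arbitrary, $C$ is a vertex cover.

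Combining the two observations: $\{E_\kappa\}_{\kappa\in[k]}$ partitions $E$ $\iff$ $\bigcup_\kappa E_\kappa = E$ $\iff$ $C$ is a vertex cover of $G$, which is the claim. There is essentially no obstacle here — the statement is a direct unwinding of definitions — but the one point that deserves care, and the only place a sloppy argument could go wrong, is the use of the \emph{ordering} on $C$: the ``take the smallest covering index'' trick in the forward direction is exactly what makes each edge land in precisely one block, and it is implicitly what guarantees disjointness. I would state this explicitly rather than gloss over it. (One should also note the degenerate edge cases are harmless: if $E = \emptyset$ every $C$ trivially works on both sides, and if $C = \emptyset$ then $E = \emptyset$ is forced on both sides.)
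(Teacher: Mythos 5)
Your proof is correct and follows essentially the same route as the paper's: show the union condition is equivalent to covering via the ``smallest covering index'' observation, and note disjointness separately. The one small refinement you add — pointing out explicitly that disjointness holds unconditionally for any ordered subset $C$, so the partition property reduces purely to the union condition — is a clean way to organize what the paper states more tersely, but it is the same underlying argument.
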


A key property of the $\mathrm{\isedge}_G$ function is that every decision tree for it induces such an edge partition in the following way. Every decision tree for $\mathrm{\isedge}_G$ has a path $\pi$ in it whose path variables form a vertex cover. This vertex cover induces a partition of the edges of $G$. Each part of the partition corresponds to a unique variable in this decision tree path. This correspondence will be important for lower bounding the size of the decision tree in the case when $G$ requires large vertex covers. To describe this correspondence, it will be useful for us to have the following notation for a path that diverges from from $\pi$ at a particular point and then stops.

\begin{definition}[Divergent path prefix; see~\Cref{fig:divergent path prefix}]
    \label{defn:path prefix}
    For a path $\pi$, the path $\pi\vert_{\oplus \kappa}$ denotes the path which follows $\pi$ for the first $\kappa-1$ queries, flips the $\kappa$th query, then terminates:
    $$
    \pi\vert_{\oplus \kappa}\coloneqq \left(\pi(1),\ldots, \pi(\kappa-1),\overline{\pi(\kappa)}\right).
    $$
\end{definition}

\begin{figure}[h!]
    \centering
    \begin{tikzpicture}[tips=proper]
        \node[isosceles triangle,
            draw,
            thick,
            isosceles triangle apex angle=60,
            rotate=90,
            minimum size=6cm] (T1) at (0,0){};
        \draw[violet,dashed] (T1.east) .. controls ([xshift=-0.1cm]T1.358) .. ([yshift=-0.5cm]T1.east) node[] (N1) {{}};
        \draw[violet,dashed] ([yshift=-0.5cm]T1.east) .. controls ([xshift=0.4cm]T1.40) and (T1.350) .. (T1.center) node[fill=white,pos=0.8] (N1) {$\pi$};
        \draw[violet,dashed] (T1.center) .. controls ([xshift=2.4cm]T1.110) .. (T1.west) node[] (N2) {{}};

        % xi point and label
        \node[draw,circle,fill=black,inner sep=0.8pt] (x) at (T1.center) {};

        % divergent path
        \draw[-{Stealth[scale=0.75]}] (x) to ([xshift=0.4cm,yshift=-0.4cm]T1.center);
        \draw[] ([xshift=0.7cm,yshift=-0.1cm]T1.center) node [black] {$\violet{\pi}\vert_{\oplus \kappa}$};

        % depth kappa label
        \draw[] ([xshift=-.8cm,yshift=0cm]T1.center) node [black] {\small depth $\kappa$};
        % path label
        % \draw[] (x) to ([xshift=1.5cm]T1.right side);
        % \draw[] ([xshift=1.5cm]T1.right side) node [right,black,fill=white, text width=6cm] {For };

    \end{tikzpicture}
  \captionsetup{width=.9\linewidth}
    \caption{Illustration of a divergent path prefix. The root-to-leaf path $\pi$ is illustrated in purple. At depth $\kappa$ the path $\pi\vert_{\kappa}$ diverges and terminates.}
    \label{fig:divergent path prefix}
\end{figure}
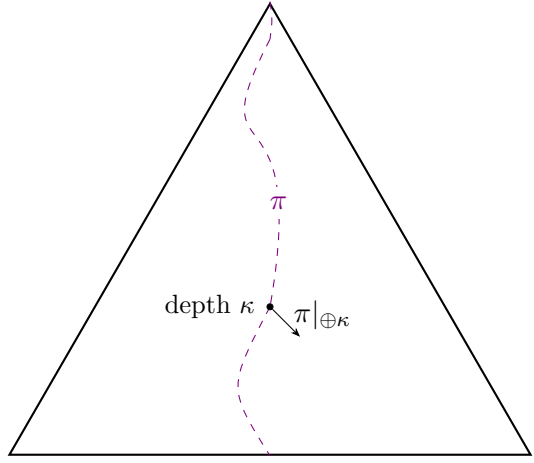

If $\pi$ is the path corresponding to a vertex cover, then $\pi\vert_{\oplus \kappa}$ corresponds to the path followed by edges in $E_\kappa$ (here we are conflating edges and edge indicator strings).
}

\subsubsection{Proof of~\Cref{claim:isedge-intro}}

\begin{proof}[Proof of the Yes case] 
Let $C = \{ v_{i_1},\ldots, v_{i_k}\}$ be a vertex cover for $G$. The leftmost branch $\pi$ of our decision tree queries these vertices successively and terminates with a $0$-leaf. These are the vertices colored blue in~\Cref{fig:isedge upper bound}.

We move on to describing each of the subtrees branching off of $\pi$. More formally, for each $\kappa\in [k]$, we describe the subtree rooted at the end of $\pi |_{\oplus \kappa}$ (i.e.~the subtree that is the $1$-successor of $v_{i_k}$). %Recall the definition $E_\kappa$ and $V_\kappa$ and note that every $\ind[e]$ with $e\in E_\kappa$ follows the path $\pi |_{\oplus \kappa}$. 
At this point $T$ ``knows" that $v_{i_k}$ is set to $1$. For $\mathrm{\isedge}$ to output $1$, exactly one of $v_{i_\kappa}$'s neighbors must also be set to $1$, and all $n-2$ other vertices must be set to $0$ (i.e.~these are precisely the inputs $\ind[e]$ for $e\in E_{\kappa}$). Therefore $T$ queries all $v\in V_\kappa$ (i.e.~the neighbors of $v_{i_\kappa}$ that have not already been queried along $\pi$), testing to see whether any of them are 1, and terminates with a 0-leaf if they are all set to $0$. These are the vertices colored teal in~\Cref{fig:isedge upper bound}.

Finally, we describe the subtree that is the $1$-successor of each $v\in V_\kappa$. At this point $T$ knows that $v_{i_{\kappa}}$ and this neighbor $v$ are both set to $1$, and it remains only to check that all other vertices are set to $0$ before outputting $1$: it queries all $n-2$ vertices in $V$ that are not $v$ or $v_{i_\kappa}$ and outputs $1$ iff all of them are set to $0$. These are the vertices colored orange in~\Cref{fig:isedge upper bound}. 

We complete the proof by bounding the size of $T$. Its leftmost branch has size $k$ (the blue vertices). By~\Cref{fact:restricted vertex neighborhood partitions}, querying all $v\in V_\kappa$ for $\kappa\in [k]$ results in an additional $\sum_\kappa |V_\kappa| =  \sum_\kappa |E_\kappa|= m$ internal nodes (the teal vertices). After each of these $m$ internal nodes, we query $n-2$ more vertices, resulting in an additional $m(n-2) < mn$ internal nodes (the orange vertices).  Thus, the total size of $T$ is at most $k+m+mn$.
\end{proof} 

\begin{figure}[H]
  \captionsetup{width=.9\linewidth}

    \centering
    \begin{tikzpicture}[tips=proper]
        % points of left path
        \draw[] (3,3) node [fill=white,text=blue] (root) {$v_{i_1}$};
        \draw[] (3.7, 3.05) node [fill=white] (rootlabel) {\footnotesize (root)};
        \draw[] (2.5,2.5) node [fill=white,rotate=-45] (incoming) {$\vdots$};
        \draw[] (2,2) node [fill=white,text=blue] (x1) {{$v_{i_\kappa}$}};
        \draw[] (0.5,0.5) node [fill=white,text=blue] (x2) {$v_{i_{\kappa+1}}$};
        \draw[] (-1,-1) node [fill=white,text=blue] (x3) {$v_{i_{\kappa+2}}$};
        \draw[] (-2.5,-2.5) node [fill=white,rotate=-45] (x3_left) {$\vdots$};
        \draw[] (-3,-3) node [fill=white,text=blue] (xk) {$v_{i_k}$};
        \draw[] (-4.5,-4.5) node [fill=white] (left0) {$0$};
        
        % dots off main path
        \draw[] (1,0) node [fill=white,rotate=45] (x2_right_dots) {$\vdots$};
        \draw[] (-0.5,-1.5) node [fill=white,rotate=45] (x3_right_dots) {$\vdots$};
        \draw[] (3,-2) node [fill=white,rotate=45] (u2_dots) {$\vdots$};
        \draw[] (-2.5,-3.5) node [fill=white,rotate=45] (xk_dots) {$\vdots$};

        % second main path
        \draw[] (4,0) node [fill=white,text=teal] (x1_right) {$v_{3}$};
        \draw[] (2.5,-1.5) node [fill=white,text=teal] (x2_right) {$v_{11}$};
        \draw[] (1,-3) node [fill=white,rotate=-45] (x3_right) {$\vdots$};
        \draw[] (0.5,-3.5) node [fill=white,text=teal] (xk_right) {$v_{25}$};
        \draw[] (1,-4) node [fill=white,rotate=45] (xk_right_dots) {$\vdots$};
        \draw[] (-1,-5) node [fill=white] (xk_right0) {$0$};

        % third main path
        \draw[] (6,-2) node [fill=white,text=orange] (x1_right_right) {$v_{1}$};
        \draw[] (4.5,-3.5) node [fill=white,text=orange] (x2_right_right) {$v_2$};
        \draw[] (3,-5) node [fill=white,rotate=-45] (x3_right_right) {$\vdots$};
        \draw[] (2.5,-5.5) node [fill=white,text=orange] (xk_right_right) {$v_n$};
        \draw[] (1,-7) node [fill=white] (xk_right_right0) {$1$};

        % fourth main path (leafs)
        \draw[] (7,-3) node [fill=white] (x1_leaf) {$0$};
        \draw[] (5.5,-4.5) node [fill=white] (x2_leaf) {$0$};
        \draw[] (3.5,-6.5) node [fill=white] (xk_leaf) {$0$};

        % left branch arrows
        \draw[-Stealth] (x1) to (x2);
        \draw[-Stealth] (x2) to (x3);
        \draw[-Stealth] (x3) to (x3_left);
        \draw[-Stealth] (xk) to (left0);

        \draw[-Stealth] (x1_right) to (x2_right);
        \draw[-Stealth] (x2_right) to (x3_right);
        \draw[-Stealth] (xk_right) to (xk_right0);

        \draw[-Stealth] (x1_right_right) to (x2_right_right);
        \draw[-Stealth] (x2_right_right) to (x3_right_right);
        \draw[-Stealth] (xk_right_right) to (xk_right_right0);

        % right branch arrows
        \draw[-Stealth] (x1) to node[midway,above,sloped] {\footnotesize {Path $\pi\vert_{\oplus \kappa}$}} (x1_right);
        \draw[-Stealth] (x1_right) to (x1_right_right);

        \draw[-Stealth] (x1_right_right) to (x1_leaf);
        \draw[-Stealth] (x2_right_right) to (x2_leaf);
        \draw[-Stealth] (xk_right_right) to (xk_leaf);
    
        % labels of branches
        % \coordinate (labels) at (0,1);
        \coordinate (labels) at (-0.5,0.6);
        \draw[] (labels) node [fill=white,rotate=45,text=blue] (vc_label) {\footnotesize {Path $\pi$ corresponding to a vertex cover of $G$}};
        \draw[] ($(labels)+(2,-2)$) node [fill=white,rotate=45,text=teal] (neighbor_label) {\footnotesize {Vertices in $V_\kappa$}};
        \draw[] ($(labels)+(4,-4)$) node [fill=white,rotate=45,text=orange] (rest_label) {\footnotesize {Vertices in $V\setminus\{v_{i_\kappa},v_3\}$}};
        
    \end{tikzpicture}
    \caption{An illustration of the proof of the Yes case of~\Cref{claim:isedge-intro}. Given a vertex cover $C=\{v_{i_1},\ldots,v_{i_k}\}$ of $G$, our decision tree for $\mathrm{\isedge}$ queries $C$ among the leftmost branch (colored  blue in the figure). If some vertex $v_{i_\kappa}\in C$ is set to $1$ then the decision tree queries all vertices in $V_\kappa$ (colored teal). Once some $v\in V_\kappa$ is set to $1$, the decision tree queries the remaining unqueried vertices to ensure that they are set to $0$ (colored orange) before outputting $1$.}
    \label{fig:isedge upper bound}
\end{figure}
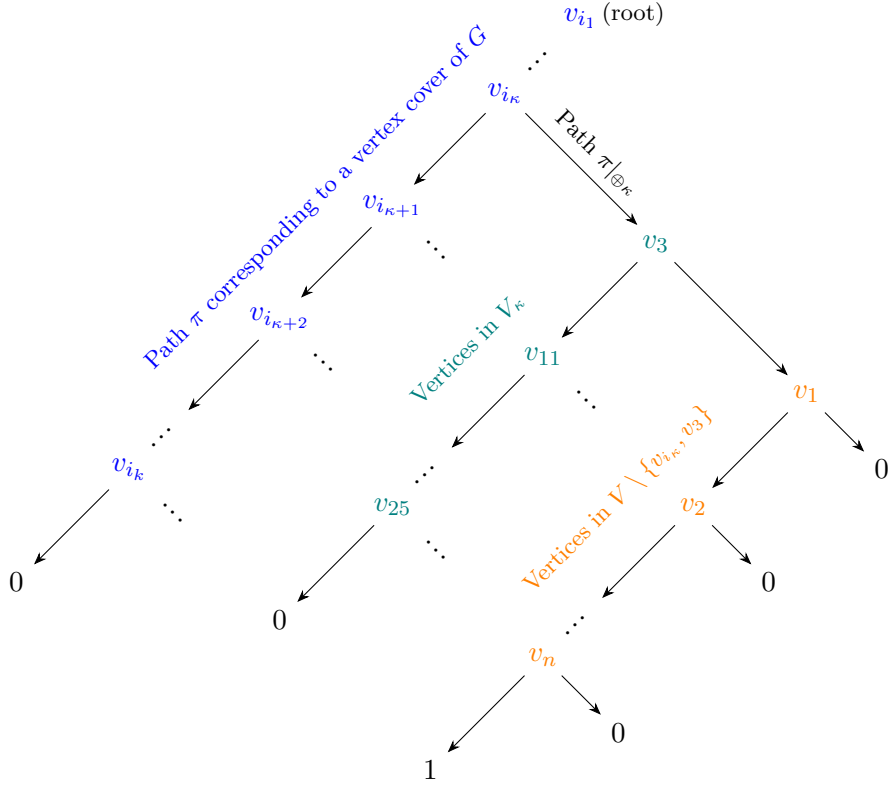

We proceed to a proof of the lower bound.

\begin{proof}[Proof of the No case] 
Our proof consists of two parts: (1) proving that the leftmost branch of $T$ must be a vertex cover and therefore has size at least $k'$ and (2) showing that the rest of the tree has size at least $m$. See~\Cref{fig:isedge lower bound} for an illustration. 

\begin{enumerate} 
\item {\sl Leftmost branch must be a vertex cover.} Let $\pi$ be the leftmost branch of $T$  and suppose for contradiction that the vertices queried along $\pi$ do not form a vertex cover for $G$. This means that there is some edge $e\in E$ that is not queried along $\pi$, and hence both $\ind[e]$ and $0^n$ will follow $\pi$ and reach the same leaf. Since $\mathrm{\isedge}(0^n) = 0 \ne 1 = \mathrm{\isedge}(\ind[e])$, this is a contradiction. 

\item {\sl Rest of the tree has at least $m$ nodes.}  Let us order the vertices of $\pi$ from root downwards as $v_{i_1}, \ldots , v_{i_{|\pi|}}$. For each $\kappa\in [|\pi|]$, we consider the subtree $T_\kappa$ that is the $1$-successor of $v_\kappa$. %This subtree must correctly compute the restricted function \isedge$_{\pi|_{\oplus \kappa}}$, and in particular, it must correctly compute all $\ind[e]$ for $e\in E_\kappa$. 
Consider $e \in E_{\kappa}$ and suppose $e=(v_{i_\kappa},v)$. By the definition of $E_\kappa$, the endpoint $v$ has not yet been queried when $\ind[e]$ enters $T_\kappa$. Thus, $T_\kappa$ must query $v$, since otherwise $T$ cannot distinguish between $\ind[e]$ and $\ind[e]^{\oplus v}$ (note that $\mathrm{\isedge}(\ind[e]) = 1 \ne 0 = \mathrm{\isedge}(\ind[e]^{\oplus v})$). Further, all $e\in E_\kappa$ will have distinct second endpoints that $T_\kappa$ must query (since if not, then they would share both their endpoints and be the exact same edge). In other words, we have argued that $T_\kappa$ must query all the vertices in $V_\kappa$. %We conclude that this subtree must query $|E_\kappa|$ additional vertices. 

Since the sets $E_\kappa$ for $\kappa\in[|\pi|]$ partition the edges (\Cref{fact:restricted vertex neighborhood partitions}), we have that all these disjoint subtrees $T_1,\ldots,T_{|\pi|}$ taken together must query at least $\sum_\kappa |V_\kappa| = |E_\kappa| = |E|= m$ additional vertices.  
\end{enumerate} 
Combining the two claims above we show shown that $|T| \ge k'+m$ and the proof is complete. \end{proof}

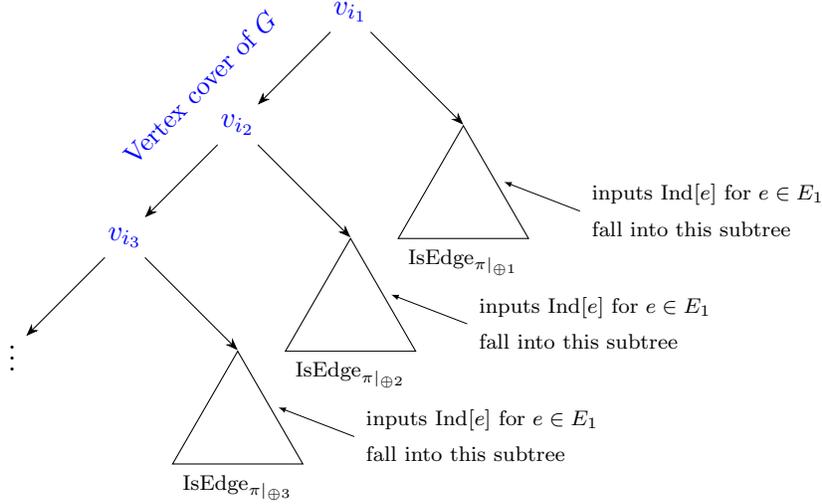
\begin{figure}[H]
    \centering
    \begin{tikzpicture}[]
        % points of left path
        \draw[] (2,2) node [fill=white,text=blue] (x1) {$v_{i_1}$};
        \draw[] (0.5,0.5) node [fill=white,text=blue] (x2) {$v_{i_2}$};
        \draw[] (-1,-1) node [fill=white,text=blue] (x3) {$v_{i_3}$};
        \draw[] (-2.5,-2.5) node [fill=white] (x4) {$\vdots$};

        %vdots label at the end of pi path
        % \draw[color=black] (x4) node [below] {$\vdots$};

        % arrows of path \pi
        \draw[-Stealth] (x1) to (x2);
        \draw[-Stealth] (x2) to (x3);
        \draw[-Stealth] (x3) to (x4);
        
        % subtrees
        \node[isosceles triangle,
            draw,
            anchor=apex,
            isosceles triangle apex angle=60,
            rotate=90,
            minimum size=1.5cm] (T1) at (3.5,0.5){};
        \node[isosceles triangle,
            draw,
            anchor=apex,
            isosceles triangle apex angle=60,
            rotate=90,
            minimum size=1.5cm] (T2) at (2,-1){};
        \node[isosceles triangle,
            draw,
            anchor=apex,
            isosceles triangle apex angle=60,
            rotate=90,
            minimum size=1.5cm] (T3) at (0.5,-2.5){};

        % subtree labels
        \draw[color=black] (T1.west) node [fill=white,below] {\scriptsize $\mathrm{\isedge}_{\pi\vert_{\oplus 1}}$};
        \draw[color=black] (T2.west) node [fill=white,below] {\scriptsize $\mathrm{\isedge}_{\pi\vert_{\oplus 2}}$};
        \draw[color=black] (T3.west) node [fill=white,below] {\scriptsize $\mathrm{\isedge}_{\pi\vert_{\oplus 3}}$};

%         \draw[black,-stealth] (k'M) to node[midway,above,sloped] {\footnotesize gap amplification} (k'l);
        % arrows to subtrees
        \draw[-Stealth] (x1) to (T1.apex);
        \draw[-Stealth] (x2) to (T2.apex);
        \draw[-Stealth] (x3) to (T3.apex);
        
        % forms a vertex cover label
        % \draw [black,decorate,decoration={brace,mirror,raise=10pt,amplitude=8pt}] (2.2,2.2) -- (x4) node [black,pos=0.5,yshift=0.8cm,above,sloped] {\small vertex cover of $G$};
        \draw[] (0,1) node [fill=white,rotate=45,text=blue] (vc_label) {\small {Vertex cover of $G$}};

        % input labels
        \coordinate (L) at (6.75,-1.25);
        \node[label={[align=left]{\scriptsize inputs $\ind[e]$ for $e\in E_1$}\\ {\scriptsize fall into this subtree}}] (e1) at (L){};
        \node[label={[align=left]{\scriptsize inputs $\ind[e]$ for $e\in E_1$}\\ {\scriptsize fall into this subtree}}] (e2) at ($(L)+(-1.5,-1.5)$){};
        \node[label={[align=left]{\scriptsize inputs $\ind[e]$ for $e\in E_1$}\\ {\scriptsize fall into this subtree}}] (e3) at ($(L)+(-3,-3)$){};

        % arrows to labels
        \coordinate (A) at (5.05,-.65);
        \draw[latex-,line width=0.01mm] ([xshift=0.1cm]T1.right side) to (A);
        \draw[latex-,line width=0.01mm] ([xshift=0.1cm]T2.right side) to ($ (A) + (-1.5,-1.5) $);
        \draw[latex-,line width=0.01mm] ([xshift=0.1cm]T3.right side) to ($ (A) + (-3,-3) $);
        
    \end{tikzpicture}
  \captionsetup{width=.9\linewidth}
    \caption{An illustration of the No case of~\Cref{claim:isedge-intro}. Given any decision tree $T$ computing $\mathrm{\isedge}$, the leftmost branch $\pi$ must form a vertex cover of $G$.  Furthermore, for each vertex $v_{i_\kappa}$ queried along $\pi$, the subtree $T_\kappa$ branching off of $\pi$ at $v_{i_{\kappa}}$ must query all the vertices in $V_\kappa$.  The size of $T$ is therefore at least $k' + \sum_{\kappa}|V_\kappa| = k'+ m.$ } \label{fig:isedge lower bound}
\end{figure}

%In summary, if $G$ has a small vertex cover, then \isedge$_G$ has a size $|T|$ decision tree, and if every vertex cover for $G$ is large, then every decision tree for \isedge$_G$ has size at least $|T'|$. Crucially,  $|T|>|T'|$, which is the critical failure of the above construction. In the next section, we provide a modified function for which we show $|T|<|T'|$. 

\subsection{$\ell$-$\mathrm{\isedge}$: an amplified version of {\isedge}}
\label{sec:ell-isedge}

%\lnote{Some prose here, emphasize that we are only amplifying the $1$-sensitivity of satisfying assignments}

%\gray{ 
%\begin{definition}[The $\ell$-amplified {\isedge} function]\lnote{I find this definition slightly hard to parse} 
 %   For an $n$-vertex graph $G$, the $\ell$-amplified {\isedge} function with respect to $G$, $\ell\text{-\isedge}_G:\zo^N\to\zo$ on $N\coloneqq n+\ell n$ variables is defined as
    % on $N\coloneqq n+\ell n$ variables is defined as
    % \begin{align*}
    %     &\ell\text{-\isedge}:\zo^{n}\times (\zo^\ell)^n\to\zo\\
    %     &(v,v^{(1)},\ldots,v^{(n)})\mapsto  \begin{cases}
    % 1 & \mathrm{\isedge}(v)=1\text{ and }v_i=1\Rightarrow \mathrm{AND}(v^{(i)})=1\\
    % 0 & \mathrm{otherwise}
    % \end{cases}
    % \end{align*}

 %   $$
 %   \ell\text{-\isedge}_G(v^{(0)},v^{(1)},\ldots,v^{(n)})
 %   =
 %   \begin{cases}
%    1 & \mathrm{\isedge}_G(v^{(0)})=1\text{ and }v_i^{(0)}=1\Rightarrow \mathrm{AND}(v^{(i)})=1\\
%    0 & \mathrm{otherwise}
 %   \end{cases}.
%    $$
%\end{definition}} 

\begin{definition}[The $\ell$-amplified {\isedge} function]\label{def:ell-isedge} Let $G = (V,E)$ be an $n$-vertex graph and $\ell \in \N$. The {\sl $\ell$-amplified edge indicator function of $G$} is the function 
\[ \ell\text{-}\mathrm{\isedge}_G : \zo^n \times (\zo^n)^\ell \to \zo\] 
 defined as follows: $\ell\text{-}\mathrm{\isedge}_G(v^{(0)},v^{(1)},\ldots,v^{(\ell)}) = 1$ iff 
\begin{enumerate} 
\item $\mathrm{\isedge}_G(v^{(0)}) = 1$ (i.e.~$v^{(0)} = \Ind[e]$ for some $e\in E$), and
\item $v_{i}^{(1)} = \cdots = v_i^{(\ell)} = 1$ for all $i\in [n]$ such that $v_i^{(0)} = 1$. 
\end{enumerate} 
\end{definition}

\paragraph{Notation and terminology.} When $G$ is clear from context, we drop the subscript and just write $\ell$-{\isedge}. We also use $N\coloneqq n+n\ell$ to denote the number of inputs to $\ell$-{\isedge}.  We refer to $v^{(0)}_1,\ldots,v^{(0)}_n$ as the {\sl original} variables. As in the nonamplified ${\mathrm{\isedge}}$ function, there is a natural correspondence between these original variables and the vertices $V = \{v_1,\ldots,v_n\}$ of $G$. For each original variable $v^{(0)}_i$, we refer to $v^{(1)}_i,\ldots,v^{(\ell)}_i$ as its {\sl duplicated} variables and write
\[ \textsc{Dup}(v_i)\coloneqq \left\{v_i^{(1)},\ldots,v_i^{(\ell)}\right\}. \]
We write $\ell\text{-}\ind[e] \in (\zo^n)^{\ell+1}$ to denote the string $(\ind[e],\ldots,\ind[e])$. Note that $\ell\text{-}\mathrm{\isedge}(\ell\text{-}\ind[e]) = 1$ and these are the $1$-inputs of minimum Hamming weight. 

\paragraph{Asymmetries in the definition of $\ell\text{-}\mathrm{\isedge}$.} We note two sources of asymmetry in the definition of $\ell$-$\mathrm{\isedge}$, both of which are crucial for~\Cref{thm:ell-isedge-intro} (specifically,~\Cref{rem:asymm}) to hold. First, the original variables play a distinct role from the duplicated ones: for $\ell$-${\mathrm{\isedge}}$ to output $1$, the original variables have to agree with an edge indicator but the duplicated variables do not.  Second, there is also an asymmetry between $1$- and $0$-coordinates: for $\ell$-$\mathrm{\isedge}$ to output $1$, the duplicated variables have to be set to $1$ whenever the original variables are set to $1$, but the same is not true for the $0$-coordinates.

\subsubsection{Proof of~\Cref{thm:ell-isedge-intro}}
\begin{figure}[H]
  \captionsetup{width=.9\linewidth}

    \centering
    \begin{tikzpicture}[tips=proper]
        % points of left path
        \draw[] (3,3) node [fill=white,text=blue] (root) {\small $v_{i_1}^{(0)}$};
        \draw[] (3.7, 3.05) node [fill=white] (rootlabel) {\footnotesize (root)};
        \draw[] (2.5,2.5) node [fill=white,rotate=-45] (incoming) {$\vdots$};
        \draw[] (2,2) node [text=blue] (x1) {{\small$v_{i_\kappa}^{(0)}$}};
        \draw[] (0.5,0.5) node [fill=white,text=blue] (x2) {\small$v_{i_{\kappa+1}}^{(0)}$};
        \draw[] (-1,-1) node [text=blue] (x3) {\small $v_{i_{\kappa+2}}^{(0)}$};
        \draw[] (-2.5,-2.5) node [rotate=-45] (x3_left) {$\vdots$};
        \draw[] (-3,-3) node [text=blue] (xk) {\small$v_{i_k}^{(0)}$};
        \draw[] (-4.5,-4.5) node [fill=white] (left0) {$0$};
        
        % dots off main path
        \draw[] (1,0) node [rotate=45] (x2_right_dots) {$\vdots$};
        \draw[] (-0.5,-1.5) node [rotate=45] (x3_right_dots) {$\vdots$};
        % \draw[] (3,-2) node [fill=white,rotate=45] (u2_dots) {$\vdots$};
        \draw[] (-2.5,-3.5) node [fill=white,rotate=45] (xk_dots) {$\vdots$};

        % k dup variables
        \draw[] (3.5,.5) node [fill=white,text=teal] (v1k) {\small$v_{_\kappa}^{(1)}$};
        \draw[] (4.5,-.5) node [rotate=45] (v1k_dots) {$\vdots$};
        \draw[] (5,-1) node [text=teal] (vlk) {\small$v_{_\kappa}^{(\ell)}$};

        % zeroes from dups
        \draw[] (2,-1) node [fill=white] (v1k_zero) {\small $0$};
        \draw[] (3.5,-2.5) node [fill=white] (vlk_zero) {\small $0$};

        % Vk vertices
        \draw[] (6,-2) node [fill=white,text=violet] (v03) {\small$v_3^{(0)}$};
        \draw[] (4.5,-3.5) node [fill=white,text=violet] (v011) {\small$v_{11}^{(0)}$};
        \draw[] (3.5,-4.5) node [rotate=-45] (dupdots) {$\vdots$};
        \draw[] (3,-5) node [text=violet] (v0n) {\small $v_{25}^{(0)}$};
        \draw[] (1.5,-6.5) node [text=black] (v0n_zero) {$0$};

        % Vk dots
        \draw[] (4.8,-3.8) node [rotate=45] (v011_dots) {\small $\vdots$};
        \draw[] (3.3,-5.3) node [rotate=45] (v0n_dots) {\small $\vdots$};
        
        % Vk dup variables
        % \draw[] (7,-3) node[text=cyan] (v13) {\small $v_3^{(1)}$};
        % \draw[] (8,-4) node[rotate=45] (vkdupdots) {$\vdots$};
        % \draw[] (8.5,-4.5) node[text=cyan] (vl3) {\small $v_3^{(\ell)}$};

        % zeroes from Vk dups
        % \draw[] (5.5,-4.5) node[] (v13_zero) {\small $0$};
        % \draw[] (7,-6) node[] (v1l_zero) {\small $0$};

        % rest of the variables
        \coordinate (rest) at (7.5,-3.5);
        \draw[] (rest) node[text=red] (v10) {\small $v_1^{(0)}$};
        \draw[] ($(rest)+(-1.5,-1.5)$) node[text=red] (v20) {\small $v_2^{(0)}$};
        \draw[] ($(rest)+(-3,-3)$) node[text=red] (vn0) {\small $v_n^{(0)}$};

        % zeros of RotV
        \draw[] ($(rest)+(1.5,-1.5)$) node[] (v10_zero) { $0$};
        \draw[] ($(rest)+(0,-3)$) node[] (v20_zero) { $0$};
        \draw[] ($(rest)+(-1.5,-4.5)$) node[] (vn0_zero) { $0$};

        % Vk dup variables
        \draw[] ($(rest)+(-4.5,-4.5)$) node[text=cyan] (v13) {\small $v_3^{(1)}$};
        \draw[] ($(rest)+(-3.5,-5.5)$) node[text=black,rotate=45] (vkdupdots) {$\vdots$};
        \draw[] ($(rest)+(-3,-6)$) node[text=cyan] (vl3) {\small $v_3^{(\ell)}$};

        % Vk leafs
        \draw[] ($(rest)+(-6,-6)$) node[text=black] (v13_zero) {0};
        \draw[] ($(rest)+(-4.5,-7.5)$) node[text=black] (vl3_zero) {0};
        \draw[] ($(rest)+(-1.5,-7.5)$) node[text=black] (vl3_one) {1};

        % left branch arrows
        \draw[-Stealth] (x1) to (x2);
        \draw[-Stealth] (x2) to (x3);
        \draw[-Stealth] (x3) to (x3_left);
        \draw[-Stealth] (xk) to (left0);

        \draw[-Stealth] (v1k) to (v1k_zero);
        \draw[-Stealth] (vlk) to (vlk_zero);

        \draw[-Stealth] (v03) to (v011);
        \draw[-Stealth] (v011) to (dupdots);
        \draw[-Stealth] (v0n) to (v0n_zero);

        \draw[-Stealth] (v10) to (v20);
        \draw[-Stealth] (v20) to (vn0);
        \draw[-Stealth] (vn0) to (v13);
        \draw[-Stealth] (v13) to (v13_zero);
        \draw[-Stealth] (vl3) to (vl3_zero);

        % right branch arrows
        \draw[-Stealth] (x1) to node[midway,above,sloped] {\scriptsize {Path $\pi\vert_{\oplus \kappa}$}} (v1k);

        \draw[-Stealth] (v1k) to (v1k_dots);
        \draw[-Stealth] (vlk) to (v03);
        
        \draw[-Stealth] (v03) to (v10);
        \draw[-Stealth] (v10) to (v10_zero);
        \draw[-Stealth] (v20) to (v20_zero);
        \draw[-Stealth] (vn0) to (vn0_zero);

        \draw[-Stealth] (v13) to (vkdupdots);
        \draw[-Stealth] (vl3) to (vl3_one);
    
        % labels of branches
        % \coordinate (labels) at (0,1);
        \coordinate (labels) at (-0.5,0.6);
        \draw[] (labels) node [fill=white,rotate=45,text=blue] (vc_label) {\footnotesize {Path $\pi$ corresponding to a vertex cover of $G$}};
        \draw[] (4.75,0) node [rotate=-45,text=teal] (dup1_label) {\footnotesize $\textsc{Dup}(v_{i_\kappa}^{(0)})$};
        \draw[] (4,-3.2) node [rotate=45,text=violet] (2_label) {\footnotesize Vertices from $V_\kappa$};
        \draw[] (5.5,-4.7) node [rotate=45,text=red] (3_label) {\footnotesize Original variables};
        \draw[] (4.5,-8.5) node [rotate=-45,text=cyan] (4_label) {\footnotesize $\textsc{Dup}(v_{3}^{(0)})$};
        % \draw[] ($(labels)+(2,-2)$) node [fill=white,rotate=45,text=teal] (neighbor_label) {\footnotesize {Vertices in $V_\kappa$}};
        % \draw[] ($(labels)+(4,-4)$) node [fill=white,rotate=45,text=orange] (rest_label) {\footnotesize {Vertices in $V\setminus\{v_{i_\kappa},v_3\}$}};
        
    \end{tikzpicture}
    \caption{An illustration of the proof of the Yes case of~\Cref{thm:ell-isedge-intro}. Given a vertex cover $C=\{v_{i_1},\ldots,v_{i_k}\}$ of $G$, our decision tree for $\ell\text{-}\mathrm{\isedge}$ queries the original variables corresponding to $C$ among the leftmost branch (colored  blue in the figure). If some vertex $v_{i_\kappa}^{(0)}\in C$ is set to $1$ then the decision tree queries all vertices in $\textsc{Dup}(v_{i_\kappa}^{(0)})$ (colored in teal). If all of these are $1$, then it proceeds to compute the appropriate $\mathrm{\isedge}$ subfunction on the remaining vertices. }
    \label{fig:ell isedge upper bound}
\end{figure}
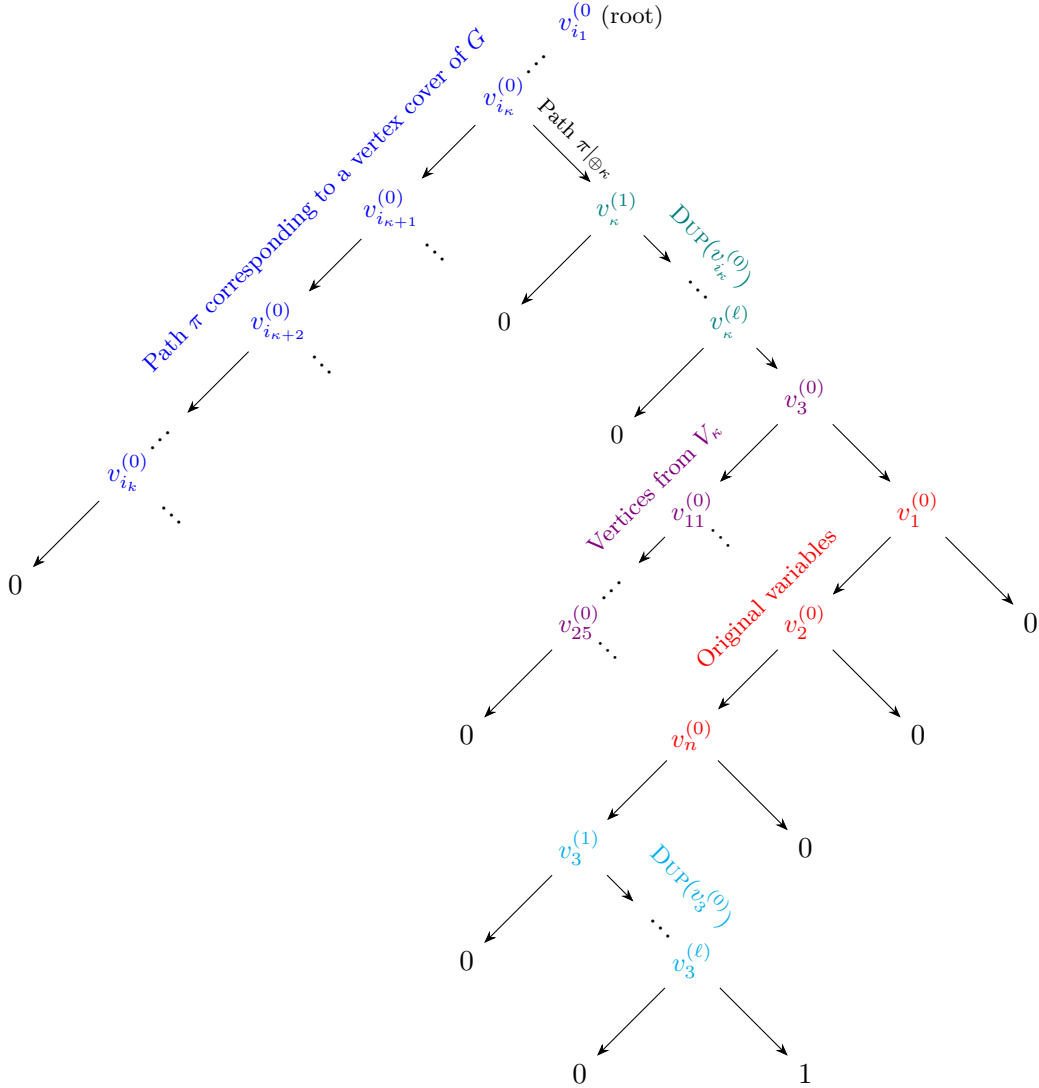
\begin{proof}[Proof of the Yes case]     The construction is a slight extension of our tree for $\mathrm{\isedge}$ that we constructed for the Yes case of~\Cref{claim:isedge-intro}. See \Cref{fig:ell isedge upper bound} for an illustration of this construction. Let $C = \{v_{i_1},\ldots,v_{i_k}\}$ be a vertex cover of $G$. Similar to before, the leftmost branch $\pi$ of our tree $T$ queries the original variables $v^{(0)}_{i_1},\ldots,v^{(0)}_{i_k}$ corresponding to these vertices and terminates with a $0$-leaf.

    We now describe the subtree $T_\kappa$ that is the $1$-successor of $v^{(0)}_{i_\kappa}$ for $\kappa\in [k]$. It first checks if the duplicated variables of $v_{i_\kappa}$ are all set to $1$, since that is a necessary criterion for $\ell\text{-}\mathrm{\isedge}$ to output $1$: it queries all $\ell$ variables in $\textsc{Dup}(v_{i_\kappa})$ and outputs $0$ once any of them are set to $0$.  If all of them are indeed set to $1$, then for $T_\kappa$ to output $1$ it must check that (i) there is a neighbor $v$ of $v_{i_k}$ whose original variable is set to $1$, (ii) all the other original variables are set to $0$, and (iii) the duplicated variables of $v$ are set to $1$. 
    
    For (i) and (ii), $T_\kappa$ queries the remaining original variables in a manner identical to the tree for $\mathrm{\isedge}$. Briefly restating that construction, it verifies Condition (i) by querying the original variables of $v\in V_\kappa$, testing to see of any of them are set to 1. If none of them are set to $1$, it outputs $0$. Otherwise, once the original variable of some $v\in V_\kappa$ is set to $1$, the tree $T_\kappa$ moves on to verifying Condition (ii): it queries the original variables of all $n-2$ vertices in $V \setminus \{v_{i_\kappa}, v\}$ and outputs $0$ once any of them are set to $1$. If all of them are indeed set to $0$, it moves on to verifying Condition (iii). It queries all $\ell$ variables in $\textsc{Dup}(v)$ and outputs $1$ iff all of them are set to $1$. 

We now bound the size of this tree. The portion of it that is identical to the tree for $\mathrm{\isedge}$ will have size at most $k+m+mn$, as proved in~\Cref{claim:isedge-intro}. We incur an additional $k\ell$ nodes to query the duplicate variables for the vertex cover: $\ell$ duplicate variables for each vertex in the size-$k$ vertex cover. Finally, since we additionally query the $\ell$ many duplicate variables for each $v \in V_\kappa$ in the subtree $T_\kappa$, we incur another additional $\ell \sum_{\kappa} |V_\kappa| = \ell \sum_{\kappa} |E_\kappa| = \ell m$
many nodes.  In total, this results in a tree of size 
\[ |T| \le k+ m+mn + k\ell + \ell m = (\ell +1)(k+m)+mn. \qedhere \]  
\end{proof} 

We now prove the lower bound. %We use the following helpful notation. For $e\in E$, let $\ell-\ind[e]:= (\ind[e])^{\ell+1} $ be a generalized edge indicator for $\ell$-\isedge. \carnote{Maybe want to add this notation to the definition of $\ell$-\isedge} It sets all variables to 0 that do not correspond to an endpoint of $e$ (or duplicate of an endpoint).

\begin{proof}[Proof of No case]
Just as in the proof of~\Cref{claim:isedge-intro}, we divide our proof into two parts. We show that (1) the leftmost branch of any decision tree for $\ell$-\isedge\ must correpond to a vertex cover and hence has size at least $k'$, and (2) the rest of the tree must have size at least $\ell k'+(\ell+1)m$.

\begin{enumerate} 
\item {\sl Leftmost branch must be a vertex cover.} Let $\pi$ be the leftmost branch of $T$  and $v_{i_1}^{(j_1)},\ldots,v_{i_{|\pi|}}^{(j_{|\pi|})}$ be the variables queried along $\pi$.  We claim that the corresponding vertices $v_{i_1},\ldots,v_{i_{|\pi|}}\in V$ must form a vertex cover for $G$.  Suppose for contradiction that they do not. This means that there is some edge $e \in E$ such that neither the original nor duplicated variables of $e$'s endpoints are queried along $\pi$. Therefore both $\ell\text{-}\ind[e]$ and $0^N$ will follow $\pi$ and reach the same leaf. Since $\mathrm{\isedge}(0^N) = 0 \ne 1 = \ell\text{-}\mathrm{\isedge}(\ell\text{-}\ind[e])$, this is a contradiction. 

\item {\sl Rest of the tree has at least $\ell k' +(\ell +1)m$ nodes.}  Order the vertices of $\pi$ from root downwards as $v_{i_1}, \ldots , v_{i_{|\pi|}}$. We will consider only the indices $\kappa \in [|\pi|]$ such that $E_{\kappa}$ is nonempty, noting that the vertices corresponding to these indices still form a vertex cover, and hence there are at least $k'$ such indices. Fix such a $\kappa$ and consider the subtree $T_\kappa$ that is the $1$-successor of $v_{i_{\kappa}}^{(j_{\kappa})}$. Let $e = (v_{i_\kappa},v)\in E_\kappa$. We argue that $v_{i_{\kappa}}$ is responsible for $\ell$ additional queries within $T_\kappa$, and $v$ for $\ell+1$ additional ones.  For the former claim, let $j \in \{0,\ldots,\ell\}\setminus \{ j_\kappa\}$. By the definition of $E_\kappa$, the variable $v_{i_\kappa}^{(j)}$ has not yet been queried when $\ell\text{-}\ind[e]$ enters $T_\kappa$. Since 
\[ \ell\text{-}\mathrm{\isedge}(\ell\text{-}\ind[e]) = 1 \ne 0 = \ell\text{-}\mathrm{\isedge}(\ell\text{-}\ind[e]^{\oplus v_{i_\kappa}^{(j)}}), \]
it follows that $v^{(j)}_{i_\kappa}$ must be queried within $T_\kappa$. Similarly, the latter claim follows from the fact that $T_\kappa$ must query the original and all the duplicated variables of $v$, a total of  $\ell+1$ many variables. This latter claim holds for all endpoints of edges $e\in E_\kappa$ (i.e.~the vertices $v\in V_{\kappa}$), so in total $T_\kappa$ must contain at least $\ell + (\ell+1)|V_\kappa|$ nodes. 
Summing over all $\kappa \in [|\pi|]$ such that $E_\kappa$ is nonempty and applying~\Cref{fact:restricted vertex neighborhood partitions}, we get that the disjoint subtrees $T_1,\ldots,T_{|\pi|}$ must query at least 
\[ \sum_{\kappa\colon E_\kappa\ne \varnothing} \ell + (\ell+1)|E_\kappa| = \ell k' + (\ell+1)m \]
many variables.
\end{enumerate} 
Combining the two claims above we have shown that \[ |T| \ge k' + \ell k' + (\ell+1)m  = (\ell+1)(k'+m) \]  and the proof is complete.
\end{proof}

\subsection{Hardness of decision tree minimization} 
\medskip 

\begin{tcolorbox}[colback = white,arc=1mm, boxrule=0.25mm]
\vspace{3pt} 
\textsc{DT-Min}:  Given a decision tree $T^\star:\zo^n\to\zo$, construct a minimum-size decision tree $T$ such that $T \equiv T^\star$ (i.e.~$T(x)=T^\star(x)$ for all $x\in\zo^n$).
\vspace{3pt} 
\end{tcolorbox}
\medskip

This problem of decision tree minimization was first shown to be $\NP$-hard by Zantema and Bodlaender \cite{ZB00}. That result was subsequently improved by Sieling \cite{Sie08} who showed that the problem is even $\NP$-hard to approximate. Using \Cref{thm:ell-isedge-intro} we recover this hardness of approximation.  We begin by observing that our proofs of the Yes and No cases of~\Cref{thm:ell-isedge-intro} are algorithmic in the following sense:  
\begin{itemize} 
\item[$\circ$] In the Yes case, we showed that given a graph $G$ and a size-$k$ vertex cover, the tree $T$ for $\ell\text{-}\mathrm{\isedge}$ of size $(\ell+1)\cdot (k+m)+ mn$ can be constructed in $\poly(\ell,n)$ time. 
\item[$\circ$] In the No case, we showed that given a size-$s'$ tree $T$ for $\ell\text{-}\mathrm{\isedge}$, a size-$k'$ vertex cover for $G$ satisfying $(\ell+1)\cdot (k'+m) \le s'$ can be constructed in $\poly(\ell,n)$ time. 
\end{itemize} 

With these observations in hand, we are now ready to recover~\cite{Sie08}'s result. 

\begin{lemma}[A reduction from {\sc VertexCover} to {\sc DT-Min}]
\label{lem:hardness of apx dt size}
 There is a polynomial-time reduction that takes a degree-$d$, $n$-vertex, $m$-edge graph $G$ and  produces a decision tree $T^\star$ such that the following holds. Given any tree $T$ such that $T\equiv T^\star$ and whose size is within a $(1+\delta)$ factor of the optimal for $T^\star$, one can construct in polynomial time a size-$k'$ vertex cover of $G$ satisfying $k' \le (1+\delta')\cdot \VC(G)$ where $\delta' = O(d\delta)$. 
\end{lemma}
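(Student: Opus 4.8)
The plan is to instantiate the amplified construction of \Cref{thm:ell-isedge-intro} with a carefully chosen parameter $\ell$ and let $T^\star$ be the explicit decision tree for $\ell\text{-}\mathrm{\isedge}_G$ built in the Yes case of that theorem (applied to a trivial vertex cover, e.g.\ $V$ itself, so that the construction is well-defined without solving {\sc VertexCover}). By the algorithmic observations preceding the lemma, $T^\star$ can be written down in $\poly(\ell,n)$ time, and since we will take $\ell = \poly(n)$ this is polynomial time overall. The point of the choice of $\ell$ is to make the additive $mn$ slack in the Yes bound negligible compared to the amplified $(\ell+1)$ terms, so that approximately-optimal trees for $T^\star$ translate back to approximately-optimal vertex covers of $G$.

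The key steps, in order: (1) Fix $\ell$ large enough that $mn < \delta \cdot (\ell+1) \cdot (m/d)$ — using \Cref{fact:constant degree graphs have large vc size} to know $\VC(G) \ge m/d$ — so $\ell = \Theta(dn/\delta)$ suffices; this is $\poly(n)$ since $d,\delta$ are constants. (2) Build $T^\star$ as above; note $\opt$, the decision tree complexity of $\ell\text{-}\mathrm{\isedge}_G$, satisfies $(\ell+1)(\VC(G)+m) \le \opt \le (\ell+1)(\VC(G)+m) + mn$ by combining both cases of \Cref{thm:ell-isedge-intro}. (3) Given a tree $T \equiv T^\star$ with $|T| \le (1+\delta)\opt$, apply the (algorithmic) No case of \Cref{thm:ell-isedge-intro} to extract in $\poly(\ell,n)$ time a vertex cover of size $k'$ with $(\ell+1)(k'+m) \le |T|$. (4) Chain the inequalities:
\[
(\ell+1)(k'+m) \;\le\; |T| \;\le\; (1+\delta)\opt \;\le\; (1+\delta)\bigl((\ell+1)(\VC(G)+m)+mn\bigr),
\]
so $k'+m \le (1+\delta)(\VC(G)+m) + (1+\delta)\tfrac{mn}{\ell+1}$. (5) Subtract $m$ from both sides, bound the slack term by our choice of $\ell$ (it is at most $\delta \cdot \VC(G)$ up to constants, using $\VC(G) \ge m/d$ and $m \le d\cdot\VC(G)$), and conclude $k' \le (1 + O(d\delta))\VC(G)$.

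The main obstacle I anticipate is bookkeeping around the constants: one must verify that $m = \Theta(\VC(G))$ in both directions (upper via $m \le d\cdot\VC(G)$ in a degree-$d$ graph, since the cover vertices touch all $m$ edges and each covers at most $d$; lower via \Cref{fact:constant degree graphs have large vc size}) so that the additive $+m$ appearing on both sides of the inequality can be absorbed into the multiplicative $(1+\delta')$ factor rather than spoiling it, and that the $mn/(\ell+1)$ slack likewise folds into the same factor — this is exactly where the $O(d\delta)$ blowup in $\delta'$ comes from. A secondary point to state cleanly is that $T^\star$ genuinely computes $\ell\text{-}\mathrm{\isedge}_G$ and hence the minimization instance is honest: any $T \equiv T^\star$ is a decision tree for $\ell\text{-}\mathrm{\isedge}_G$, so the No case applies to it verbatim. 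Everything else is a direct substitution using results already established in the excerpt.
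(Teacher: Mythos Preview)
Your proposal is correct and follows essentially the same approach as the paper: build $T^\star$ via the Yes case of \Cref{thm:ell-isedge-intro} applied to the trivial cover $V$, then use the algorithmic No case to extract a cover from any near-optimal $T$, and absorb the $mn$ slack and the $\delta m$ term into a multiplicative $(1+O(d\delta))$ factor via $m \le d\cdot\VC(G)$. The only cosmetic difference is the choice of $\ell$: the paper takes $\ell = 2mn$ (making the $mn/(\ell+1)$ term strictly less than $1$), whereas you take $\ell = \Theta(dn/\delta)$ (making it at most $\delta\cdot\VC(G)$); both choices are polynomial and both yield the stated bound.
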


{ 
\begin{proof} 
Let $\ell \coloneqq 2mn$. We begin by applying the Yes case of~\Cref{thm:ell-isedge-intro} to $G$ with the trivial vertex cover of all~$n$ vertices to obtain a decision tree $T^\star$ for $\ell\text{-}\mathrm{\isedge}$ of size 
\[ (\ell + 1)\cdot (n+m) + mn = (2mn + 1) \cdot (n+m) + mn. \] 
As observed above, our proof of~\Cref{thm:ell-isedge-intro} shows that $T^\star$ can be constructed from $G$ in $\poly(n)$ time. This tree $T^\star$ will be the input to {\sc DT-Min} in our reduction. 

By the Yes case of~\Cref{thm:ell-isedge-intro} again, if $\VC(G) \eqqcolon k$ then \[ \dtsize(\ell\text{-}\mathrm{\isedge}) \le (\ell+1) (k+m) + mn \eqqcolon s. \] 
Suppose an algorithm for {\sc DT-Min} returns a tree $T$ for $\ell\text{-}\mathrm{\isedge}$ of size $s'$ where $s' \le (1+\delta)\cdot s$. We claim that we can then efficiently construct a vertex cover for $G$ of size $k'$ where $k' \le (1+\delta')\cdot k$ and $\delta' = O(d\delta)$, thereby completing the reduction. Our proof of~\Cref{thm:ell-isedge-intro} shows that we can efficiently construct from $T$, in $\poly(n)$ time, a size-$k'$ vertex cover satisfying: 
\[ (\ell+1)(k'+m) \le s'.  \] %\carnote{this should be an inequality right? $s' \geq (\ell+1)(k'+m)$ it doesn't affect the rest of the proof}
The claim that $s' \le (1+\delta)\cdot s$ is therefore equivalent to
\[  (\ell+1) \cdot (k'+m) \le (1+\delta) \cdot \big[(\ell+1) (k+m) + mn\big]. \] 
Rearranging the above, we get that 
\begin{align*} 
 k' &\le (1+\delta)\cdot k + \delta m + \frac{(1+\delta)\cdot mn}{\ell+1} \\ 
 &\le (1+\delta)\cdot k + \delta k d + \frac{(1+\delta)\cdot mn}{\ell+1} \tag*{\text{($m \le kd$ by~\Cref{fact:constant degree graphs have large vc size})}}\\ 
 &< (1+\delta)\cdot k + \delta k d + 1 \tag{Our choice of $\ell$} \\
 &< \big[  1 + \delta(d+2)\big] \cdot k  
\end{align*}%\carnote{confused how we got to first line. shouldn't it be $1+\delta -m $ instead of $\delta m$}
and the proof is complete. 
\end{proof} 
}

\cite{Sie08}'s result now follows as an immediate consequence of~\Cref{lem:hardness of apx dt size} and the fact that {\sc VertexCover} is hard to approximate even for constant-degree graphs (\Cref{thm:hardness of vertex cover}):  

\begin{theorem}[Hardness of approximation for {\sc DT-Min}~\cite{Sie08}]
There is a constant $\delta\in (0,1)$ such that if \textsc{DT-Min} can be approximated to within a factor of $1+\delta$ in polynomial-time, then $\Ptime=\NP$. 
\end{theorem}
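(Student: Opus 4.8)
The plan is to combine the reduction of \Cref{lem:hardness of apx dt size} with the hardness of approximating \textsc{VertexCover} on constant-degree graphs, \Cref{thm:hardness of vertex cover}. Let $\delta_0 > 0$ and $d_0 \in \N$ be the constants furnished by \Cref{thm:hardness of vertex cover}, so that a polynomial-time algorithm deciding $(k,(1+\delta_0)k)$-\textsc{VertexCover} on $n$-vertex degree-$d_0$ graphs would put \textsc{SAT} in time $\poly(n\cdot\polylog n) = \poly(n)$, hence $\Ptime = \NP$. I would take the constant in the theorem statement to be $\delta := \delta_0/(2(d_0+2))$ (any sufficiently small constant depending on $d_0$ would do).

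Suppose there is a polynomial-time algorithm $\mathcal{A}$ that $(1+\delta)$-approximates \textsc{DT-Min}. Given a degree-$d_0$ graph $G$ and threshold $k$, I would run the polynomial-time reduction of \Cref{lem:hardness of apx dt size} on $G$ to produce the decision tree $T^\star$, feed $T^\star$ to $\mathcal{A}$ to obtain a tree $T \equiv T^\star$ whose size is within a $(1+\delta)$ factor of optimal, and then apply the ``backward'' direction of \Cref{lem:hardness of apx dt size} to extract, in polynomial time, a vertex cover of $G$ of size $k'$ with
\[ k' \le (1+\delta')\cdot \VC(G), \qquad \delta' = O(d_0\delta) \le \delta(d_0+2) = \delta_0/2. \]
Since $d_0$ is a fixed constant, the whole pipeline runs in polynomial time.

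It then remains to observe that this approximate cover decides the gap problem: output ``$\VC(G)\le k$'' iff $k' \le (1+\delta')k$. In the yes-case, $\VC(G)\le k$ forces $k' \le (1+\delta')k$; in the no-case every cover has size $\ge (1+\delta_0)k$, so $k' \ge \VC(G) \ge (1+\delta_0)k > (1+\delta')k$ because $\delta' = \delta_0/2 < \delta_0$. Hence we solve $(k,(1+\delta_0)k)$-\textsc{VertexCover} on degree-$d_0$ graphs in polynomial time, and \Cref{thm:hardness of vertex cover} yields $\Ptime = \NP$.

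There is essentially no technical obstacle here, as all the work is already packaged into \Cref{lem:hardness of apx dt size}; the only point requiring care is the bookkeeping of constants, namely choosing $\delta$ small enough relative to $\delta_0$ and $d_0$ that the multiplicative degradation $\delta \mapsto \delta' = O(d_0\delta)$ incurred by the reduction still leaves $\delta' < \delta_0$.
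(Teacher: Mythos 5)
Your proposal is correct and follows exactly the route the paper intends: the paper states the theorem ``follows as an immediate consequence'' of \Cref{lem:hardness of apx dt size} and \Cref{thm:hardness of vertex cover}, leaving the constant bookkeeping implicit, and your argument simply spells out that bookkeeping (choosing $\delta$ small enough that $\delta' = O(d_0\delta) < \delta_0$).
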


(\cite{Sie08} then amplifies this constant-factor inapproximability to a superconstant factor  using an XOR lemma from~\cite{HJLT96}. We refer the interested reader to~\cite{Sie08} for the details of this step.)

In the next section, we strengthen \cite{Sie08}'s result by showing that the same hardness holds even if the algorithm need only minimize $T$ over a small set of input points rather than all of $\zo^n$.
\section{Hardness distillation and learning consequence for small error}

\subsection{A general method for hardness distillation}
\label{subsec:general hardness distillation}

For a function $f:\zo^n\to\zo$, the quantity $\dtsize(f)$ captures the complexity of computing $f$ on \textit{all} of its inputs. If $\dtsize(f)$ is large, then any small decision tree that tries to compute $f$ must err on at least one point in $\zo^n$. For some $f$, it may be the case that, more specifically, there is a fixed set $D\sse \zo^n$ such that all small decision trees err on some point in $D$. The set $D$ then captures or ``distills'' the hardness of $f$ since any function $g$ which agrees with $f$ over the set $D$ must also have large decision tree complexity. We call this set $D$ a \textit{coreset}.\footnote{This naming convention is inspired by, though not formally related to, the notion of a coreset from the clustering literature.} Ultimately, our goal will be to identify explicitly coresets $D$ which distill the hardness of the target function $f$. This way, any learner that learns $f$ over the distribution $\text{Uniform}(D)$ to error $< \frac{1}{|D|}$  has to output a decision tree whose size captures $\dtsize(f)$. Since the error scales with $\frac{1}{|D|}$, we have a vested interest in making $D$ has small as possible so that we can tolerate large learning errors. In this section, we identify a general method for distilling the hardness of a function $f$ into a coreset $D$. We start by generalizing certificate complexity and relevant variables with respect to fixed subsets $D$.

\paragraph{Certificate complexity with respect to a set of inputs.}{
A certificate for $f:\zo^n\to\zo$ over a set of inputs $D\sse \zo^n$ on $x\in \zo^n$ is a restriction $\rho$ consistent with $x$ such that $f_\rho(y)=f_\rho(x)$ for all $y\in D$. The certificate complexity of $x$ on $f$ over $D$ is the size of the smallest certificate of $x$ on $f$ over $D$.
}

One useful fact is that a decision tree path forms a certificates for the inputs that follow it. 
\begin{fact}[Decision tree paths are certificates]
\label{fact:DT path is a cert}
    If a decision tree $T$ computes $f:\zo^n\to\zo$ over $D\sse \zo^n$, then the path that an input $x\in D$ follows in $T$ forms a certificate for $f$ over $D$ on $x$. 
\end{fact}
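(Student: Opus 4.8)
The plan is to take the certificate to be the path itself. Fixing $x\in D$, let $\pi$ be the root-to-leaf path that $x$ follows in $T$, which I would view simultaneously as a restriction by forgetting the order of its literals (as in the Preliminaries). I would then verify directly the two clauses in the definition of \emph{a certificate for $f$ over $D$ on $x$}: first that $\pi$ is consistent with $x$, and second that $f_\pi(y)=f_\pi(x)$ for every relevant $y\in D$.

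The first clause is immediate and essentially definitional: along $\pi$, at the node querying a variable $x_i$ the path branches according to the bit $x_i$, so the literal recorded in $\pi$ is $x_i$ when $x_i=1$ and $\overline{x_i}$ when $x_i=0$; hence $x$ satisfies every literal of $\pi$, i.e.\ $x$ is consistent with $\pi$, and in particular $x\vert_\pi=x$, so $f_\pi(x)=f(x)$. For the second clause I would let $b\in\zo$ be the label of the leaf at the end of $\pi$: then $T(x)=b$, and since $x\in D$ and $T$ computes $f$ over $D$ we get $f(x)=b$. Now for any $y\in D$ consistent with $\pi$, the input $y$ agrees with $\pi$ on all queried coordinates and therefore traverses exactly the internal nodes of $\pi$ and lands at the same leaf, so $T(y)=b$; using $y\in D$ again yields $f(y)=T(y)=b=f(x)$, i.e.\ $f_\pi(y)=f(y)=f(x)=f_\pi(x)$, as desired.

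There is essentially no obstacle here: the entire content is unwinding the definitions of ``path that $x$ follows'', ``consistent with $\rho$'', and ``computes $f$ over $D$''. The only point that warrants a moment of care is that $T$ is assumed to agree with $f$ only on $D$ rather than on all of $\zo^n$, so the step $f(y)=T(y)$ is legitimate precisely for $y\in D$; since the certificate property only needs to pin down the value of $f$ on the inputs of $D$ lying in the subcube cut out by $\pi$, this is exactly what is available. I would also record the corollary phrasing that is actually used later: the length $|\pi|$ of the path any $x\in D$ follows is at least the certificate complexity of $f$ on $x$ over $D$, which is the form in which this fact feeds into the hardness-distillation argument.
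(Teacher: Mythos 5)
Your proof is correct and takes essentially the same route as the paper's: you view the root-to-leaf path $\pi$ as a restriction, note it is consistent with $x$ by construction, and then use that $\pi$ terminates in a leaf to conclude $T$ (hence $f$, on $D$) is constant on the subcube cut out by $\pi$. The paper's own argument is exactly this, just stated more tersely (``$f_\pi$ is a constant function over $D$'' because $\pi$ ends in a leaf), and your explicit observation that $f(y)=T(y)$ is only available for $y\in D$ is the same care the paper is implicitly exercising.
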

Indeed, in the above, any root-to-leaf path $\pi$ terminates in a leaf which implies $f_{\pi}$ is a constant function over $D$. Any input $x\in D$ that follows $\pi$ is consistent with it and so $f(x)=f_\pi(x)=f_{\pi}(y)$ for all $y\in D$.

\paragraph{Relevant variables with respect to a set of inputs.}{
A variable $i\in [n]$ is said to be \textit{relevant} for $f:\zo^n\to\zo$ over a set of inputs $D\sse \zo^n$ if there is some $x\in D$ such that $x^{\oplus i}\in D$ and $f(x)\neq f(x^{\oplus i})$. We write $\Rel(f; D)\in [n]$ for the number of relevant variables of $f$ with respect to $D$. When referring to the number of relevant variables over the entire domain $D=\zo^n$, we drop $D$ and simply write $\Rel(f)$. If $D\sse D'$, then every relevant variable for $f$ over $D$ is also relevant for $f$ over $D'$. Therefore, $\Rel(f,D)\le \Rel(f,D')$ and in particular $\Rel(f,D)\le \Rel(f)$ for all $D$. 
}

\paragraph{Decision tree complexity with respect to a set of inputs.}{The decision tree complexity of $f:\zo^n\to\zo$ over $D\sse \zo^n$ is the size of the smallest decision tree that computes $f$ over $D$ and is denoted $\dtsize(f,D)$. Any decision tree that computes $f$ also computes $f$ over $D$ and so $\dtsize(f,D)\le\dtsize(f)$.}

% \paragraph{Hardness distillation.}\lnote{Let's have a more informative paragraph title -- this entire section is about hardness distillation.} {
% Relevant variables are useful for lower bounding decision tree size since any decision tree computing $f$ over $D$ must necessarily query all $\Rel(f;D)$ relevant variables. For hardness distillation, we would like to identify a subset $D$ such that $\dtsize(f)\approx \dtsize(f,D)$\lnote{Should we just write $=$ instead of $\approx$?}. To this end, we establish the following relationship between $\dtsize(f,D)$ and relevant variables of subfunctions of $f$\lnote{This prose should also mention certificate complexity}:
% }
We now state and prove the main result for this section.
\begin{theorem}[Hardness distillation]
\label{thm:hardness distillation}
    Let $f:\zo^n\to\zo$ and $D\sse \zo^n$ be a set of inputs. Let $s_1\in \N$ lower bound the certificate complexity of $x$ on $f$ over $D$. Let $s_2\in\N$ satisfy
    $$
    \sum_{i=1}^{|\rho|}\Rel(f_{\pi\vert_{\oplus i}};D)\ge s_2
    $$
    for every certificate $\rho$ for $x$ on $f$ over $D$ and $\pi\in \mathrm{Perm}(\rho)$, a permutation of $\rho$. Then,
    $$
    \dtsize(f,D)\ge s_1+s_2.
    $$
\end{theorem}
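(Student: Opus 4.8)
The plan is to fix an arbitrary decision tree $T$ that computes $f$ over $D$ and to exhibit $s_1+s_2$ distinct internal nodes inside it; since $T$ is an arbitrary such tree, this yields $\dtsize(f,D)\ge s_1+s_2$. The $s_1$ nodes will lie along a single root-to-leaf path, and the remaining $s_2$ will be distributed among the subtrees that hang off that path — this is the two-part structure previewed in the technical overview.

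\textbf{Step 1 (a long path).} Let $x$ be the input named in the hypotheses and let $\pi$ be the root-to-leaf path that $x$ follows in $T$. By \Cref{fact:DT path is a cert}, $\pi$ (viewed as a restriction, forgetting its order) is a certificate for $f$ over $D$ on $x$, so by hypothesis $|\pi|\ge s_1$; thus $\pi$ contributes $s_1$ internal nodes. Write $u_1,\dots,u_{|\pi|}$ for its nodes from the root down, and for each $i\in[|\pi|]$ let $T_i$ be the subtree rooted at the child of $u_i$ that is \emph{not} on $\pi$ — equivalently, the subtree entered by the divergent prefix $\pi\vert_{\oplus i}$. The $T_i$ are pairwise node-disjoint and contain no node of $\pi$, so $|T|\ge |\pi|+\sum_{i=1}^{|\pi|}|T_i|$.

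\textbf{Step 2 (the subtrees are large).} Here I would show $|T_i|\ge \Rel(f_{\pi\vert_{\oplus i}};D)$ for every $i$. Every input of $D$ consistent with $\pi\vert_{\oplus i}$ reaches $T_i$ and is labelled correctly there, so $T_i$ computes $f_{\pi\vert_{\oplus i}}$ on that portion of $D$; and a decision tree computing a function over a set of inputs must query every variable that is relevant for that function over that set (else two inputs of the set differing only in an unqueried relevant coordinate, and hence carrying different labels, would reach the same leaf). This already gives $|T_i|$ at least the number of relevant variables of $f_{\pi\vert_{\oplus i}}$ witnessed by pairs \emph{consistent with} $\pi\vert_{\oplus i}$. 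To upgrade this to $\Rel(f_{\pi\vert_{\oplus i}};D)$, I would use that $f_{\pi\vert_{\oplus i}}$ is constant in each variable queried along $\pi\vert_{\oplus i}$: a relevant coordinate of $f_{\pi\vert_{\oplus i}}$ is never one of those, so the two members of any witnessing pair agree on all the queried variables and are hence \emph{both} consistent or \emph{both} inconsistent with $\pi\vert_{\oplus i}$; in the latter case one restricts the pair by $\pi\vert_{\oplus i}$ to move it into the consistent portion of $D$ without changing the $f_{\pi\vert_{\oplus i}}$-values. I expect this domain-matching step to be the main obstacle: making it rigorous needs a closure property of the set $D$ (which holds for the coresets we construct), and getting the interaction between $D$, the restriction $\pi\vert_{\oplus i}$, and the set of inputs reaching $T_i$ exactly right is the delicate point; everything else is bookkeeping.

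\textbf{Step 3 (assemble).} The path $\pi$, together with the permutation of its literals given by the order of queries from the root, is a certificate for $x$ on $f$ over $D$, so the hypothesis on $s_2$ applies with this $\rho$ and this $\pi\in\mathrm{Perm}(\rho)$ and gives $\sum_{i=1}^{|\pi|}\Rel(f_{\pi\vert_{\oplus i}};D)\ge s_2$. Combining Steps 1 and 2,
\[ |T|\ \ge\ |\pi|+\sum_{i=1}^{|\pi|}|T_i|\ \ge\ s_1+\sum_{i=1}^{|\pi|}\Rel(f_{\pi\vert_{\oplus i}};D)\ \ge\ s_1+s_2, \]
and since $T$ was an arbitrary decision tree computing $f$ over $D$, we conclude $\dtsize(f,D)\ge s_1+s_2$.
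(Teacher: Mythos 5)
Your decomposition is the paper's decomposition, and your Step 2 correctly isolates the one delicate step. The paper's own proof writes the chain $|T| \ge |\pi| + \sum_i \Rel(T_{\pi\vert_{\oplus i}}) \ge s_1 + \sum_i \Rel(T_{\pi\vert_{\oplus i}}; D)$ and then asserts $\Rel(T_{\pi\vert_{\oplus i}}; D) = \Rel(f_{\pi\vert_{\oplus i}}; D)$ because ``$T$ computes $f$ over $D$''. That equality is exactly your ``domain-matching step'', and your worry is well founded: it is false without a closure condition on $D$, and the theorem as literally stated fails without it. Concretely, take $n=2$, $f=x_1\wedge x_2$, $D=\{00,01,11\}$, $x=00$. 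The certificate complexity of $00$ over $D$ is $1$, and for every certificate $\rho$ of $00$ and every $\pi\in\mathrm{Perm}(\rho)$ one has $\sum_i\Rel(f_{\pi\vert_{\oplus i}};D)=1$ (for $\pi=(\overline{x}_1)$ the subfunction is $f_{\pi\vert_{\oplus 1}}(y)=y_2$, whose only relevant variable over $D$ is $x_2$, witnessed by the pair $(00,01)$; the remaining choices of $(\rho,\pi)$ behave the same), so one may take $s_1=s_2=1$ and the conclusion would be $\dtsize(f,D)\ge 2$. Yet the single-node tree that queries $x_1$ and returns it agrees with $f$ on all of $D$, so $\dtsize(f,D)=1$. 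The failure is the one you describe: the witness $(00,01)$ is \emph{not} consistent with $\pi\vert_{\oplus 1}=\{x_1\}$, and restricting it yields $(10,11)$ with $10\notin D$, a point on which the tree is free to err, so nothing forces its right subtree to query $x_2$.

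The repair, for both the theorem statement and your Step 2, is to count in place of $\Rel(f_{\pi\vert_{\oplus i}};D)$ only those variables $j$ witnessed by a pair $y,y^{\oplus j}\in D$ that is additionally \emph{consistent with} $\pi\vert_{\oplus i}$. For such a pair the restriction acts as the identity, both points lie in $D$ and reach $T_i$, and $T$ agrees with $f$ on them, so $T_i$ must query $j$; this gives $|T_i|$ at least the corrected count and your Steps 1 and 3 then go through unchanged. Reassuringly, this corrected quantity is exactly what the paper establishes when it verifies the $s_2$ hypothesis for $\mathrm{\isedge}$ and $\ell$-$\mathrm{\isedge}$: the relevant-variable lemmas there always exhibit witnessing pairs in the coreset that are explicitly consistent with $\pi\vert_{\oplus\kappa}$. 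So the applications and the main result are unaffected; only the general-purpose theorem statement and this one step of its proof need the tightening you anticipated.
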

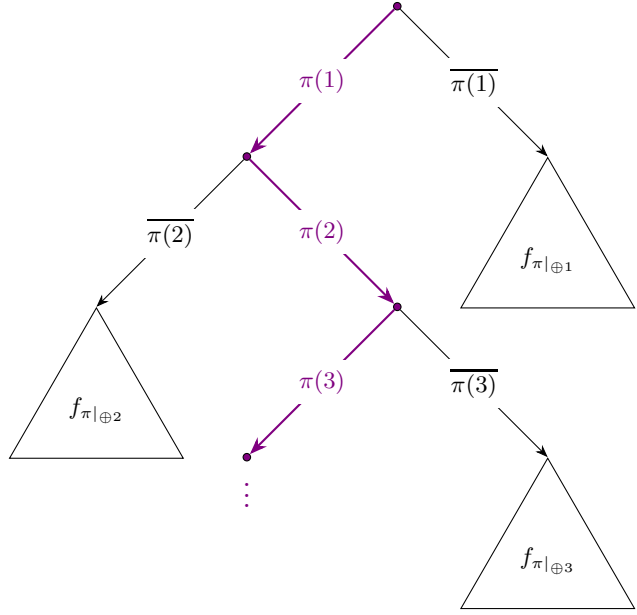
\begin{figure}[h!]
    \centering
    \begin{tikzpicture}[tips=proper]
        % points of path \pi
        \node[draw,circle,fill=violet,inner sep=1pt] (x1) at (2,2) {};
        \node[draw,circle,fill=violet,inner sep=1pt] (x2) at (0,0) {};
        \node[draw,circle,fill=violet,inner sep=1pt] (x3) at (2,-2) {};
        \node[draw,circle,fill=violet,inner sep=1pt] (x4) at (0,-4) {};

        %vdots label at the end of pi path
        \draw[color=violet] (x4) node [below] {$\vdots$};

        % arrows of path \pi
        \draw[-Stealth, violet,thick] (x1) to node[midway,fill=white!30] {\footnotesize ${\pi(1)}$} (x2);
        \draw[-Stealth, violet,thick] (x2) to node[midway,fill=white!30] {\footnotesize ${\pi(2)}$} (x3);
        \draw[-Stealth, violet,thick] (x3) to node[midway,fill=white!30] {\footnotesize ${\pi(3)}$} (x4);

        % subtrees
        \node[isosceles triangle,
            draw,
            anchor=apex,
            isosceles triangle apex angle=60,
            rotate=90,
            minimum size=2cm] (T1) at (4,0){};

        \node[isosceles triangle,
            draw,
            anchor=apex,
            isosceles triangle apex angle=60,
            rotate=90,
            minimum size=2cm] (T2) at (-2,-2){};

        \node[isosceles triangle,
            draw,
            anchor=apex,
            isosceles triangle apex angle=60,
            rotate=90,
            minimum size=2cm] (T3) at (4,-4){};
            
        % subtree labels
        \draw[color=black] (T1.center) node [] {\footnotesize $f_{\pi\vert_{\oplus 1}}$};
        \draw[color=black] (T2.center) node [] {\footnotesize $f_{\pi\vert_{\oplus 2}}$};
        \draw[color=black] (T3.center) node [] {\footnotesize $f_{\pi\vert_{\oplus 3}}$};

        % arrows to subtrees
        \draw[-Stealth] (x1) to node[midway,fill=white!30] {\footnotesize $\overline{\pi(1)}$} (T1.apex);
        \draw[-Stealth] (x2) to node[midway,fill=white!30] {\footnotesize $\overline{\pi(2)}$} (T2.apex);
        \draw[-Stealth] (x3) to node[midway,fill=white!30] {\footnotesize $\overline{\pi(3)}$} (T3.apex);
        
    \end{tikzpicture}
  \captionsetup{width=.9\linewidth}
    \caption{An illustration of hardness distillation for a function $f$. A path $\pi$ through the decision tree is highlighted in purple. This path corresponds to an ordering of a certificate for some input $x$ that follows this path. The subtrees hanging off the main path $\pi$ compute the subfunctions $f_{\pi\vert_{\oplus i}}$ where $\pi\vert_{\oplus i}$ corresponds to the path leading to the root of the subtree. The sum of the number of relevant variables of these subfunctions plus the length of the path $\pi$ lower bounds the overall size of the decision tree.}
    \label{fig:hardness distillation}
\end{figure}
If one can show for some $D$ that the quantity $s_1+s_2$ captures the decision tree complexity of $f$, then $D$ is a good candidate for hardness distillation. \Cref{fig:hardness distillation} illustrates some intuition for the quantity $\sum_{i=1}^{|\rho|}\Rel(f_{\pi\vert_{\oplus i}};D)$ in \Cref{thm:hardness distillation}. If $x\in D$, then any decision tree for $f$ over $D$ contains a certificate, $\rho$, for $f$ on $x$. The depth-$|\rho|$ path followed by $x$ induces an ordering over $\rho$ and naturally yields $|\rho|$ disjoint subtrees, each of which hangs off the main path. The size of the main decision tree is lower bounded by the sizes of these subtrees plus the length of the path followed by $x$. The sizes of these subtrees can be lower bounded by the number of relevant variables of the corresponding subfunctions which then yields the desired lower bound.

Before proving \Cref{thm:hardness distillation}, we establish a lemma stating that the number of relevant variables of disjoint subtrees of a decision tree lower bounds its size. 

\begin{lemma}[Relevant variables of disjoint subtrees lower bound decision tree size]
\label{lemma:relevant variables lb dt size}
    Let $T$ be a decision tree, and let $T_1,\ldots,T_d$ be disjoint subtrees of $T$. Then,
    $$
    |T|\ge \sum_{i=1}^d\Rel(T_i).
    $$
\end{lemma}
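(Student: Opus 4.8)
The plan is to prove the statement by induction on the structure of $T$, using as the key invariant that each relevant variable of a subtree $T_i$ must actually be \emph{queried} somewhere inside $T_i$, and that distinct internal nodes of $T$ are available to "pay for" these queries because the $T_i$ are disjoint. First I would isolate the atomic fact underlying everything: if $j$ is a relevant variable for the function computed by a decision tree $S$, then some internal node of $S$ queries $x_j$. This is immediate—if no node of $S$ queried $x_j$, then for every input $x$ the computation path, and hence the output, is unchanged under flipping coordinate $j$, so $S(x) = S(x^{\oplus j})$ for all $x$, contradicting relevance of $j$. Consequently, for each $i$, $T_i$ contains at least $\Rel(T_i)$ internal nodes that query $\Rel(T_i)$ \emph{distinct} variables; in particular $|T_i| \ge \Rel(T_i)$.

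The remaining work is to combine these per-subtree bounds into a single bound on $|T|$ using disjointness. Since $T_1,\dots,T_d$ are disjoint (share no internal nodes), the sets of internal nodes $N(T_1),\dots,N(T_d)$ are pairwise disjoint subsets of the internal nodes of $T$, so
\[
|T| \;\ge\; \sum_{i=1}^d |N(T_i)| \;=\; \sum_{i=1}^d |T_i| \;\ge\; \sum_{i=1}^d \Rel(T_i).
\]
That is essentially the whole argument: the first inequality is disjointness, the last is the atomic fact applied to each $T_i$. I would write this cleanly, perhaps first stating and proving the atomic fact as a one-line sub-claim (``every relevant variable of a decision tree is queried in it'') and then doing the two-line counting.

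The one point that deserves a little care—and which I expect to be the only genuine obstacle—is making sure the notion of ``disjoint subtrees'' is being used in the sense the paper intends, namely disjointness of \emph{internal} node sets (as defined in the Preliminaries), rather than something stronger like the $T_i$ being the subtrees rooted at an antichain of nodes. The bound $|T| \ge \sum_i |N(T_i)|$ only needs internal-node-disjointness, so this is fine, but I would be explicit that $N(T_i) \subseteq N(T)$ for each $i$ (a subtree's internal nodes are internal nodes of $T$) and that pairwise disjointness of the $N(T_i)$ is exactly the hypothesis; no induction is actually required once the atomic fact is in hand. If one instead wanted a self-contained proof of the atomic fact without invoking ``flipping coordinate $j$ doesn't change the path,'' one could induct on the depth of $S$, but the direct argument above is cleaner and I would go with it.
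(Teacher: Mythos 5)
Your proposal is correct and follows essentially the same route as the paper's proof: establish that every relevant variable of $T_i$ must be queried at an internal node of $T_i$ (hence $|T_i|\ge\Rel(T_i)$), then use disjointness of internal-node sets to sum these bounds into $|T|\ge\sum_i|T_i|\ge\sum_i\Rel(T_i)$. Your observation that no induction is needed once the atomic fact is in hand matches the paper, which likewise gives the direct two-line counting argument.
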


\begin{proof}
    If a variable $x_j$ is \textit{not} queried in the subtree $T_i$, then $x_j$ cannot be relevant for the function $T_i$. Indeed, in this case, the leaf in $T_i$ that any input $x$ reaches is the same as the leaf that $x^{\oplus j}$ reaches. Therefore, every relevant variable of $T_i$ is queried in the subtree. Since the subtrees $T_1,\ldots,T_d$ are disjoint, each relevant variable of $T_i$ can be mapped to a \textit{unique} internal node of $T$. It follows that 
    \[ 
    |T|\ge\sum_{i=1}^d |T_i| \ge \sum_{i=1}^d\Rel(T_i).
    \qedhere \]  
\end{proof}
With this lemma in hand, we are able to prove \Cref{thm:hardness distillation}.
\begin{proof}[Proof of \Cref{thm:hardness distillation}]
    Let $T$ be any decision tree computing $f$ over $D$. We will show that $|T|\ge s_1+s_2$.  Let $\pi$ be the path followed by $x\in D$ in $T$. By \Cref{fact:DT path is a cert}, $\pi$ is a certificate for $f$ over $D$ on $x$. Therefore $|\pi|\ge s_1$. Recall from \Cref{defn:path prefix} that $\pi\vert_{\oplus i}=\left\{\pi(1),\ldots,\pi({i-1}),\overline{\pi(i)}\right\}$ corresponds to the depth $i$ path in $T$ that follows $x$ to depth $i-1$ and then diverges from $x$ on the $i$th variable queried. Let $T_{\pi\vert_{\oplus i}}$ denote the subfunction of $T$ computed by the subtree rooted at the last variable queried in $\pi\vert_{\oplus i}$. Each $T_{\pi\vert_{\oplus i}}$ contributes $\Rel(T_{\pi\vert_{\oplus i}})$ many variables to the size of $T$ by \Cref{lemma:relevant variables lb dt size} and the path $\rho$ itself contributes at least $s_1$ many variables since $\pi$ is also disjoint from the subtrees. It follows that
    \begin{align*}
        |T|&\ge |\pi|+\sum_{i=1}^{|\pi|} \Rel(T_{\pi\vert_{\oplus i}})\tag{\Cref{lemma:relevant variables lb dt size}}\\
        &\ge s_1+\sum_{i=1}^{|\pi|} \Rel(T_{\pi\vert_{\oplus i}}; D)\tag{Definition of $\Rel$}\\
        &=s_1+\sum_{i=1}^{|\pi|} \Rel(f_{\pi\vert_{\oplus i}}; D)\tag{$T$ computes $f$ over $D$}\\
        &\ge s_1+s_2\tag{Assumption from theorem statement}.
    \end{align*}
\end{proof}

\subsection{Warmup: hardness distillation for ${\mathrm{\sc IsEdge}}$}

\label{subsec:hardness distillation for isedge}

We start by applying the framework from \Cref{subsec:general hardness distillation} to the function \isedge. The first step is to identify a small coreset $D$ which captures decision tree size.

\begin{definition}[Decision tree coreset of the ${\mathrm{\sc IsEdge}}$ function]
For an $n$-vertex graph $G$, the set $D_G\sse \zo^n$ consists of the points
\begin{itemize}
    \item all edge indicators: $\ind[e]\in \zo^n$ such that $e\in E$;
    \item all 1-coordinate perturbations of edge indicators: $\ind[e]^{\oplus i}$ and $\ind[e]^{\oplus j}$ for all $e=\{v_i,v_{j}\}\in E$;
    \item the all 0s inputs: $0^n$.
\end{itemize}
\end{definition}

\paragraph{Example.}{See \Cref{fig:example of G and DG} for an example of a graph $G$ and the associated set of inputs $D_G$.
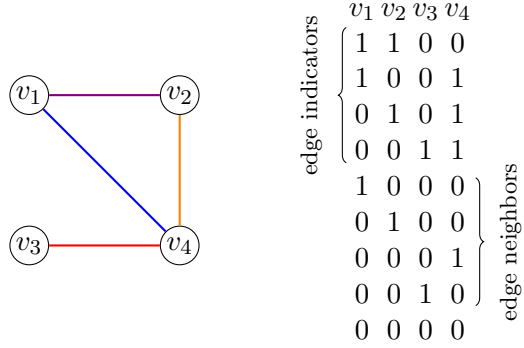
\begin{figure}[H]
\centering
\begin{tikzpicture}
\node [shape=rectangle] at (5,-1) {
        % \begin{table}[]
         \setlength{\tabcolsep}{1pt}
        \begin{tabular}{cccc}
        $v_1$ & $v_2$ & $v_3$ & $v_4$ \\
        \violet 1 & \violet 1 & \violet 0 & \violet 0 \\
        \blue 1 & \blue 0 & \blue 0 & \blue 1 \\
        \orange 0 & \orange 1 & \orange 0 & \orange 1\\
        \red 0 & \red 0 & \red 1 & \red 1\\
        1 & 0 & 0 & 0\\
        0 & 1 & 0 & 0\\
        0 & 0 & 0 & 1\\
        0 & 0 & 1 & 0\\
        0 & 0 & 0 & 0
        \end{tabular}
        % \end{table}
    };

\draw [black,decorate,decoration={brace,mirror,raise=1pt,amplitude=3pt}] (4.25,0.9) -- (4.25,-0.9) node [black,pos=0.5,rotate=90,yshift=0.5cm] {\footnotesize edge indicators};
\draw [black,decorate,decoration={brace,raise=1pt,amplitude=3pt}] (5.9,-1.1) -- (5.9,-2.8) node [black,pos=0.5,rotate=90,yshift=-0.5cm] {\footnotesize edge neighbors};

\node[shape=circle,draw=black,inner sep=1pt] (v1) at (0,0) {$v_1$};
\node[shape=circle,draw=black,inner sep=1pt] (v2) at (2,0) {$v_2$};
\node[shape=circle,draw=black,inner sep=1pt] (v3) at (0,-2) {$v_3$};
\node[shape=circle,draw=black,inner sep=1pt] (v4) at (2,-2) {$v_4$};

\path [-,thick, violet] (v1) edge node[left] {} (v2);
\path [-,thick, blue] (v1) edge node[left] {} (v4);
\path [-,thick, orange] (v2) edge node[left] {} (v4);
\path [-,thick, red] (v3) edge node[left] {} (v4);
\end{tikzpicture}
  \captionsetup{width=.9\linewidth}
\caption{Example of a graph $G$ on four vertices and the associated set of inputs $D_G\sse \zo^4$. Each row in the table corresponds to a data point in $D_G$. The first $4$ rows correspond to the edges in $G$ and are color coded to highlight which row corresponds to which edge. The next $4$ rows correspond to $1$-coordinate perturbations of the edge indicators, all of which are Hamming neighbors of edge indicators.}
\label{fig:example of G and DG}
\end{figure}
}

Recall from \Cref{claim:isedge-intro} that $\dtsize(\mathrm{\isedge}_G)\ge k'+m$ where $k'$ is the size of a vertex cover for $G$. The main claim of this section is that $D_G$ ``distills'' this hardness factor of $\mathrm{\isedge}_G$. The upper bound from \Cref{claim:isedge-intro} immediately applies to $D_G$. That is, $k+m+mn\ge \dtsize(\mathrm{\isedge}_G)\ge \dtsize(\mathrm{\isedge}_G, D_G)$. Therefore, the lower bound is all that remains for showing $D_G$ is a good coreset.  

\begin{claim}[$D_G$ is a decision tree coreset for ${\mathrm{\sc IsEdge}}$]
\label{claim:D_G is a certificate of large DT size}
Let $G$ be an $m$-vertex graph. Then
$$
\dtsize(\mathrm{\isedge},D_G)\ge 
k'+m
$$
where $k'$ is the size of a vertex cover for $G$.
\end{claim}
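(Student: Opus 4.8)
The plan is to mimic the two-part structure of the No case of \Cref{claim:isedge-intro}, but phrased entirely in terms of the coreset $D_G$ and routed through the hardness-distillation machinery of \Cref{thm:hardness distillation}. Concretely, I will take $x^\star = 0^n \in D_G$ as the distinguished input, identify a lower bound $s_1$ on the certificate complexity of $0^n$ on $\isedge$ over $D_G$, and a lower bound $s_2$ on the relevant-variable sum, and then invoke \Cref{thm:hardness distillation} to conclude $\dtsize(\isedge, D_G) \ge s_1 + s_2 = k' + m$.

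\textbf{Step 1: certificate complexity of $0^n$ is at least $k'$.} Let $\rho$ be any certificate for $\isedge$ on $0^n$ over $D_G$; write $S \subseteq [n]$ for the set of variables it fixes (all fixed to $0$, since $\rho$ must be consistent with $0^n$). I claim $S$ is a vertex cover of $G$. If not, some edge $e = \{v_i, v_j\}$ has $i,j \notin S$, so $\Ind[e] \in D_G$ is consistent with $\rho$; but $\isedge_\rho(\Ind[e]) = 1 \neq 0 = \isedge_\rho(0^n)$, contradicting that $\rho$ is a certificate over $D_G$ (which contains both $0^n$ and $\Ind[e]$). Hence $|\rho| = |S| \ge \VC(G) \ge k'$, so we may take $s_1 = k'$.

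\textbf{Step 2: the relevant-variable sum is at least $m$.} Fix a certificate $\rho$ for $0^n$, a permutation $\pi \in \mathrm{Perm}(\rho)$, and write $S = \{v_{i_1}, \dots, v_{i_{|\rho|}}\}$ for the corresponding vertex cover in the order given by $\pi$. For each $\kappa$, the divergent prefix $\pi|_{\oplus \kappa}$ sets $v_{i_1}, \dots, v_{i_{\kappa-1}}$ to $0$ and $v_{i_\kappa}$ to $1$. I claim that for every edge $e = \{v_{i_\kappa}, v\} \in E_\kappa$ (recall $E_\kappa = E(v_{i_\kappa}; v_{i_1}, \dots, v_{i_{\kappa-1}})$), the variable indexed by $v$ is relevant for $f_{\pi|_{\oplus\kappa}}$ over $D_G$: the input $\Ind[e]$ restricted to the free coordinates lies in $D_G$ (it is an edge indicator) and has value $1$, while its $v$-flip $\Ind[e]^{\oplus v}$ also lies in $D_G$ (it is a $1$-coordinate perturbation of an edge indicator) and has value $0$ — and both are consistent with $\pi|_{\oplus\kappa}$ since the only coordinates $\pi|_{\oplus\kappa}$ fixes among $\{i_1,\dots,i_{\kappa-1}, i_\kappa\}$ are exactly compatible with $\Ind[e]$ by definition of $E_\kappa$. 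Distinct $e \in E_\kappa$ give distinct second endpoints $v$, so $\Rel(f_{\pi|_{\oplus\kappa}}; D_G) \ge |V_\kappa| = |E_\kappa|$. Summing and using \Cref{fact:restricted vertex neighborhood partitions} (the $E_\kappa$ partition $E$, since $S$ is a vertex cover), $\sum_{\kappa=1}^{|\rho|} \Rel(f_{\pi|_{\oplus\kappa}}; D_G) \ge \sum_\kappa |E_\kappa| = m$, so we may take $s_2 = m$. Applying \Cref{thm:hardness distillation} with $x = 0^n$, $s_1 = k'$, $s_2 = m$ yields $\dtsize(\isedge, D_G) \ge k' + m$.

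\textbf{Expected main obstacle.} The delicate point is the bookkeeping in Step 2: one must check carefully that $\Ind[e]$ and $\Ind[e]^{\oplus v}$ are genuinely \emph{consistent with} $\pi|_{\oplus\kappa}$ — i.e.\ that $v \notin \{v_{i_1}, \dots, v_{i_{\kappa-1}}\}$ (this is exactly the definition of $E_\kappa$) and that the forced-$0$ coordinates of $\pi|_{\oplus\kappa}$ do not collide with the support of $\Ind[e]$ (again guaranteed because $e \in E_\kappa$ avoids $v_{i_1},\dots,v_{i_{\kappa-1}}$). Everything else — the certificate argument and the invocation of \Cref{thm:hardness distillation} — is a direct translation of already-proved facts. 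One should also double-check that the coreset $D_G$ really contains every input used: $0^n$, each $\Ind[e]$, and each single-coordinate perturbation $\Ind[e]^{\oplus i}, \Ind[e]^{\oplus j}$, which it does by construction.
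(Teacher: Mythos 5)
Your proposal is correct and follows essentially the same route as the paper's own proof: it takes $x^\star = 0^n$, lower-bounds its certificate complexity by $k'$ via the vertex-cover argument (the paper's \Cref{prop:cert_of_0n_is_a_vc}), lower-bounds the relevant-variable sum by $m$ using edge indicators $\ind[e]$ and their single-coordinate perturbations $\ind[e]^{\oplus v}$ with the same consistency check that $v \notin \{v_{i_1},\ldots,v_{i_{\kappa-1}}\}$ (the paper's \Cref{lemma:covered edges lemma}), and concludes by applying \Cref{thm:hardness distillation}. The ``expected obstacle'' you flag is also exactly the point the paper's lemma proof addresses, and your handling of it is correct.
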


Ultimately, we would like to prove \Cref{claim:D_G is a certificate of large DT size} by applying \Cref{thm:hardness distillation} where $f$ is the function $\mathrm{\isedge}$, $D$ is the set of inputs $D_G$, and $x$ is the input $0^n$. To this end, we first show that $k'$ is a lower bound on the certificate complexity of $0^n$ on $\mathrm{\isedge}$ over $D_G$. Then we prove a lemma showing that the number of edges in $G$ lower bounds the number of relevant variables of subfunctions of $\mathrm{\isedge}$ induced by certificates of $0^n$. 

\begin{proposition}[Any certificate of $0^n$ contains a vertex cover]
\label{prop:cert_of_0n_is_a_vc}
    Let $G$ be an $n$-vertex graph and let $\rho$ be a certificate for $\mathrm{\isedge}$ over $D_G$ on $0^n$. Then the variables in $\rho$ form a vertex cover of $G$.
\end{proposition}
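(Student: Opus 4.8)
The plan is to show the contrapositive: if the variables in $\rho$ do \emph{not} form a vertex cover of $G$, then $\rho$ is not a valid certificate for $\mathrm{\isedge}$ over $D_G$ on $0^n$. So suppose some edge $e = \{v_i, v_j\} \in E$ has neither endpoint queried by $\rho$, i.e.\ none of the literals $x_i, \overline{x}_i, x_j, \overline{x}_j$ appears in $\rho$. The key observation is that both $0^n$ and $\mathrm{Ind}[e]$ lie in $D_G$ (the latter is an edge indicator), and both are consistent with $\rho$: since $\rho$ does not mention coordinates $i$ or $j$, setting those two coordinates to $1$ (as in $\mathrm{Ind}[e]$) versus $0$ (as in $0^n$) does not affect consistency with $\rho$, and on all other coordinates $\mathrm{Ind}[e]$ agrees with $0^n$.

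Next I would invoke the definition of a certificate over a set: $\rho$ being a certificate for $\mathrm{\isedge}$ over $D_G$ on $0^n$ means $\mathrm{\isedge}_\rho(y) = \mathrm{\isedge}_\rho(0^n)$ for all $y \in D_G$. Applying this with $y = 0^n$ gives $\mathrm{\isedge}_\rho(0^n) = \mathrm{\isedge}(0^n) = 0$ (since $0^n$ is consistent with $\rho$, restricting does not change its value). Applying it with $y = \mathrm{Ind}[e]$ gives $\mathrm{\isedge}_\rho(\mathrm{Ind}[e]) = 0$. But $\mathrm{Ind}[e]$ is consistent with $\rho$ (as argued above), so $\mathrm{\isedge}_\rho(\mathrm{Ind}[e]) = \mathrm{\isedge}(\mathrm{Ind}[e]) = 1$ by \Cref{def: isedge}, since $\mathrm{Ind}[e]$ is exactly the indicator string of the edge $e \in E$. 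This contradiction ($0 = 1$) completes the argument.

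The proof is essentially immediate once the consistency bookkeeping is set up correctly, so there is no serious obstacle; the only point requiring a moment of care is verifying that $\mathrm{Ind}[e]$ is consistent with $\rho$ — this is where we use that $\rho$ touches neither endpoint of $e$, so $\rho$ imposes no constraint on coordinates $i, j$, and on every other coordinate $\mathrm{Ind}[e]$ is $0$, matching $0^n$ which is itself consistent with $\rho$. One subtlety worth flagging explicitly: we only need a \emph{single} edge not covered to derive the contradiction, which is exactly what "the variables of $\rho$ do not form a vertex cover" provides by definition. Note this argument mirrors the Part~1 ("leftmost branch must be a vertex cover") step in the No-case proof of \Cref{claim:isedge-intro}, but phrased in terms of certificates over $D_G$ rather than decision tree paths over all of $\zo^n$; in particular $D_G$ is rich enough — it contains all edge indicators and $0^n$ — for the same distinguishing pair to apply.
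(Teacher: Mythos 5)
Your proof is correct and follows the same approach as the paper: identify an uncovered edge $e$, note that $\ind[e] \in D_G$ is consistent with $\rho$ (since $\rho$ only fixes coordinates to $0$ and avoids $i,j$), and conclude $\mathrm{\isedge}_\rho(\ind[e]) = 1 \ne 0 = \mathrm{\isedge}_\rho(0^n)$ contradicts the certificate property. You spell out the consistency bookkeeping more explicitly than the paper's one-line argument, but the content is identical.
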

\begin{proof}
    If the variables in $\rho$ do not cover some edge $e\in E$, then $\ind[e]\in \zo^n$ is consistent with $\rho$ and $\mathrm{\isedge}_G(0^n)=0\neq 1=\mathrm{\isedge}_G(\ind[e])$ implies $\rho$ is not a certificate. Therefore, any certificate $\rho$ must contain a vertex cover.    
\end{proof}

\begin{lemma}[Lower bounding the number of relevant variables of $\mathrm{\isedge}$ subfunctions]
\label{lemma:covered edges lemma}
    Let $G$ be an $n$-vertex graph, $\rho$ a certificate for $\mathrm{\isedge}:\zo^n\to\zo$ over $D_G$, and $\pi=(\overline{v}_{i_1},\ldots,\overline{v}_{i_k})\in\mathrm{Perm}(\rho)$ a permutation of $\rho$. Then
    $$
    \Rel(\mathrm{\isedge}_{\pi\vert_{\oplus \kappa}}; D_G)\ge |E({v_{i_\kappa}; v_{i_1},\ldots,v_{i_{\kappa-1}}})|
    $$
    for all $\kappa\in [k]$.
\end{lemma}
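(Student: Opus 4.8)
The first observation is that, since $\pi = (\overline{v}_{i_1},\ldots,\overline{v}_{i_k})$ is a tuple of negated literals, the divergent path prefix $\pi\vert_{\oplus\kappa}$ of \Cref{defn:path prefix} is simply the restriction that sets $v_{i_1} = \cdots = v_{i_{\kappa-1}} = 0$ and $v_{i_\kappa} = 1$, leaving every other coordinate free. Writing $E_\kappa \coloneqq E(v_{i_\kappa}; v_{i_1},\ldots,v_{i_{\kappa-1}})$ as in \Cref{defn:vertex neighborhood}, the plan is to exhibit, for each edge of $E_\kappa$, one relevant variable of $\mathrm{\isedge}_{\pi\vert_{\oplus\kappa}}$ over $D_G$, and then to argue that the variables obtained this way are pairwise distinct, which yields the bound $\Rel(\mathrm{\isedge}_{\pi\vert_{\oplus\kappa}}; D_G) \ge |E_\kappa|$.

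For a fixed $e = \{v_{i_\kappa}, v\} \in E_\kappa$, I would take the ``other'' endpoint $v$ and show it is relevant using the Hamming-neighbor pair $\ind[e]$ and $\ind[e]^{\oplus v}$ --- precisely the two points the coreset $D_G$ was designed to contain (the first is an edge indicator, the second a $1$-coordinate perturbation of one, which is legitimate because $v \in e$). The key bookkeeping step is to check that applying the restriction $\pi\vert_{\oplus\kappa}$ does nothing to either string: since $e \in E_\kappa$, none of $v_{i_1},\ldots,v_{i_{\kappa-1}}$ lie in $e$ and $v \neq v_{i_\kappa}$, so $\ind[e]$ already agrees with the restriction (it has $v_{i_1},\ldots,v_{i_{\kappa-1}}$ at $0$ and $v_{i_\kappa}$ at $1$) while $\ind[e]^{\oplus v}$ differs from $\ind[e]$ only in coordinate $v \notin \{i_1,\ldots,i_\kappa\}$. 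Hence $\mathrm{\isedge}_{\pi\vert_{\oplus\kappa}}(\ind[e]) = \mathrm{\isedge}(\ind[e]) = 1$, whereas $\mathrm{\isedge}_{\pi\vert_{\oplus\kappa}}(\ind[e]^{\oplus v}) = \mathrm{\isedge}(\ind[e]^{\oplus v}) = 0$, the latter because $\ind[e]^{\oplus v}$ has Hamming weight $1$ and is therefore not any edge indicator. This exhibits $v$ as a relevant variable of $\mathrm{\isedge}_{\pi\vert_{\oplus\kappa}}$ over $D_G$.

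Finally, I would note that the map sending $e \in E_\kappa$ to its second endpoint $v$ is injective --- two edges through $v_{i_\kappa}$ with the same other endpoint coincide --- so these $|E_\kappa|$ edges contribute $|E_\kappa|$ distinct relevant variables, giving the claimed inequality. I do not expect a genuine obstacle here; the statement follows by direct inspection, and the only place demanding care is the verification in the middle paragraph that the restriction $\pi\vert_{\oplus\kappa}$ leaves the witnessing pair $\ind[e], \ind[e]^{\oplus v}$ untouched --- which is exactly where the hypothesis $e \in E_\kappa$ (and not merely $v_{i_\kappa} \in e$) is used, and which mirrors the analogous step in the No case of \Cref{claim:isedge-intro}.
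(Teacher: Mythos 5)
Your proposal is correct and follows essentially the same argument as the paper: for each edge $e = \{v_{i_\kappa}, v\} \in E_\kappa$ you exhibit the coreset pair $\ind[e], \ind[e]^{\oplus v} \in D_G$ (both consistent with $\pi\vert_{\oplus\kappa}$ since $e$ avoids $v_{i_1},\ldots,v_{i_{\kappa-1}}$) to witness that $v$ is relevant, and observe that distinct edges of $E_\kappa$ yield distinct second endpoints. The paper phrases the count in terms of $V_\kappa$ rather than an explicit injective map $e \mapsto v$, but the two are the same observation.
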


\begin{proof}
    Let $\pi\vert_{\oplus \kappa}$ be as in the lemma statement and let $v\in V_\kappa=V(v_{i_\kappa};v_{i_1},\ldots,v_{i_{\kappa-1}})$ be arbitrary (recall the definition of these quantities from \Cref{defn:path prefix,defn:vertex neighborhood}). Let $e=(v_{i_\kappa},v)\in E_\kappa=E({v_{i_\kappa}; v_{i_1},\ldots,v_{i_{\kappa-1}}})$ be the edge containing $v$. The input $\ind[e]\in D_G$ has a $1$ for the coordinates corresponding to $v$ and $v_{i_\kappa}$ and $0$s elsewhere. Therefore, it is consistent with $\pi\vert_{\oplus \kappa}=\{\overline{v}_{i_1},\ldots,\overline{v}_{i_{\kappa-1}},v_{i_\kappa}\}$ (since $v\not\in\{v_{i_1},\ldots,v_{i_{\kappa-1}}\}$ by the definition of $V_\kappa$). The input $\ind[e]^{\oplus v}\in D_G$ is similarly consistent with $\pi\vert_{\oplus\kappa}$. Therefore, each $v\in V_\kappa$ is a distinct relevant variable for $\mathrm{\isedge}_{\pi\vert_{\oplus \kappa}}$ over $D_G$:
     $$
    \mathrm{\isedge}_{\pi\vert_{\oplus \kappa}}\left(\ind[e]\right)=1\quad\text{and}\quad \mathrm{\isedge}_{\pi\vert_{\oplus \kappa}}\left(\ind[e]^{\oplus v}\right)=0.
    $$   
    It follows that $\Rel(\mathrm{\isedge}_{\pi\vert_{\oplus \kappa}}; D_G)\ge |V_\kappa|=|E_\kappa|$ as desired.
\end{proof}

\begin{proof}[Proof of \Cref{claim:D_G is a certificate of large DT size}]
Let $\rho$ be a certificate for $\mathrm{\isedge}$ over $D_G$ on $0^n$. By \Cref{prop:cert_of_0n_is_a_vc}, the variables of $\rho$ form a vertex cover and so $|\rho|\ge k'$ where $k'$ is the size of a vertex cover of $G$. Let $\pi=(\overline{v}_{i_1},\ldots,\overline{v}_{i_k'})\in\mathrm{Perm}(\rho)$ be an arbitrary permutation of $\rho$. Then:
\begin{align*}
    \sum_{\kappa=1}^{|\pi|}\Rel(\mathrm{\isedge}_{\pi\vert_{\oplus j}};D_G)&\ge \sum_{\kappa=1}^{|\pi|}|E({v_{i_\kappa}; v_{i_1},\ldots,v_{i_{\kappa-1}}})|\tag{\Cref{lemma:covered edges lemma}}\\
    &= m\tag{\Cref{fact:restricted vertex neighborhood partitions}}.
\end{align*}
It follows from \Cref{thm:hardness distillation} that $\dtsize(\mathrm{\isedge},D_G)\ge k'+m$. 
\end{proof}

\subsection{Hardness distillation for $\ell$-$\mathrm{\isedge}$}

Following the ideas from \Cref{subsec:hardness distillation for isedge}, we show that the following set of inputs forms a coreset of $\ell$-{\isedge}. 

\begin{definition}[Coreset for $\ell$-{\isedge}]
    For an $n$-vertex, $m$-edge graph $G$ and $\ell\in\N$, the set $\ell\text{-}D_{G}\sse \zo^{n}\times (\zo^\ell)^n$ consists of the $m+m(2\ell+2)+1$ many points
    \begin{itemize}
        \item all generalized edge indicators:  $\ell\text{-}\ind[e]\in (\zo^n)^{\ell+1}$ for each edge $e\in E$ where $\ell\text{-}\ind[e]\coloneqq (\ind[e])^{\ell+1}$;
        \item $1$-coordinate perturbations of edge indicators: $2\ell+2$ many points for each $e\in E$ obtained by flipping one of the $1$-coordinates in $\ell\text{-}\ind[e]$; and
        \item the all $0$s input: $0^{n\ell+n}$.
    \end{itemize}
\end{definition}

\paragraph{Example.}{See \Cref{fig:example of G and ell-DG} for an example of a graph $G$ and the associated set of inputs $\ell\text{-}D_G$.
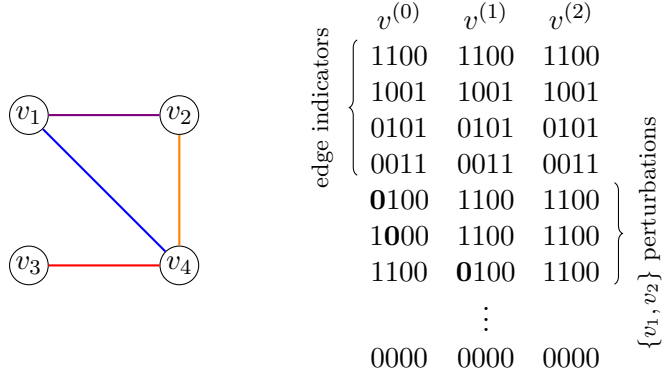
\begin{figure}[H]
\centering
\begin{tikzpicture}
\node [shape=rectangle] at (6,-1) {
        % \begin{table}[]
         \setlength{\tabcolsep}{5pt}
        \begin{tabular}{ccc}
        {$v^{(0)}$} & $v^{(1)}$ & $v^{(2)}$ \\
        \violet{1100} & \violet{1100} & \violet{1100} \\
        \blue{1001} & \blue{1001} & \blue{1001} \\
        \orange{0101} & \orange{0101} & \orange{0101} \\
        \red{0011} & \red{0011} & \red{0011} \\
        $\mathbf{0}$100 & 1100 & 1100 \\
        1$\mathbf{0}$00 & 1100 & 1100 \\
        1100 & $\mathbf{0}$100 & 1100 \\
        & $\vdots$ &\\
        0000 & 0000 & 0000
        \end{tabular}
        % \end{table}
    };
\draw [black,decorate,decoration={brace,mirror,raise=1pt,amplitude=3pt}] (4.4,1) -- (4.4,-0.8) node [black,pos=0.5,rotate=90,yshift=0.5cm] {\footnotesize edge indicators};
\draw [black,decorate,decoration={brace,raise=1pt,amplitude=3pt}] (7.75,-0.9) -- (7.75,-2.25) node [black,pos=0.5,rotate=90,yshift=-0.5cm] {\footnotesize $\{v_1,v_2\}$ perturbations};
\node[shape=circle,draw=black,inner sep=1pt] (v1) at (0,0) {$v_1$};
\node[shape=circle,draw=black,inner sep=1pt] (v2) at (2,0) {$v_2$};
\node[shape=circle,draw=black,inner sep=1pt] (v3) at (0,-2) {$v_3$};
\node[shape=circle,draw=black,inner sep=1pt] (v4) at (2,-2) {$v_4$};

\path [-,thick, violet] (v1) edge node[left] {} (v2);
\path [-,thick, blue] (v1) edge node[left] {} (v4);
\path [-,thick, orange] (v2) edge node[left] {} (v4);
\path [-,thick, red] (v3) edge node[left] {} (v4);
\end{tikzpicture}
  \captionsetup{width=.9\linewidth}
\caption{Example of a graph $G$ on four vertices and the associated set of inputs $\ell\text{-}D_G$ with $\ell=2$. The colored collection of points correspond to edge indicators. The next collection of points correspond to $1$-coordinate perturbations of the duplicated variables of the edge indicator for the edge $e=\{v_1,v_2\}$. The perturbed coordinates are bold.}
\label{fig:example of G and ell-DG}
\end{figure}
}

Recall from \Cref{thm:ell-isedge-intro} that $\dtsize(\ell\text{-}{\mathrm{\sc IsEdge}})\ge (\ell+1)\cdot (k'+m)$ where $k'$ is the size of a vertex cover for $G$. The main claim of this section is that $\ell\text{-}D_G$ distills this hardness factor of $\ell\text{-}{\mathrm{\sc IsEdge}}$. 

\begin{claim}[$\ell\text{-}D_{G}$ is a coreset for $\ell\text{-}{\mathrm{\sc IsEdge}}$]
\label{claim:ell D_G is a certificate of large DT size}
Let $G$ be an $m$-vertex graph and $\ell\in\N$ be arbitrary. Then,
$$
\dtsize(\ell\text{-}\mathrm{\isedge}, \ell\text{-}D_{G})\ge (\ell+1)(k'+m)
$$
where $k'$ is the size of a vertex cover for $G$.
\end{claim}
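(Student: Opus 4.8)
The plan is to apply the general hardness distillation framework of \Cref{thm:hardness distillation} with $f = \ell\text{-}\mathrm{\isedge}_G$, the coreset $D = \ell\text{-}D_G$, and the distinguished input $x = 0^{n\ell+n}$, exactly mirroring the warmup proof of \Cref{claim:D_G is a certificate of large DT size} but tracking the extra factor of $\ell+1$ that comes from the duplicated variables. Concretely, I need two ingredients: (i) a lower bound of $k'$ on the \emph{path-length} contribution, i.e.\ on the certificate complexity of $0^{n\ell+n}$ on $\ell\text{-}\mathrm{\isedge}$ over $\ell\text{-}D_G$; and (ii) a lower bound of the form $\ell k' + (\ell+1)m$ on $\sum_{\kappa=1}^{|\rho|}\Rel\!\big((\ell\text{-}\mathrm{\isedge})_{\pi|_{\oplus\kappa}}; \ell\text{-}D_G\big)$ valid for every certificate $\rho$ and every ordering $\pi\in\mathrm{Perm}(\rho)$. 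Plugging $s_1 = k'$ and $s_2 = \ell k' + (\ell+1)m$ into \Cref{thm:hardness distillation} then yields $\dtsize(\ell\text{-}\mathrm{\isedge}, \ell\text{-}D_G)\ge k' + \ell k' + (\ell+1)m = (\ell+1)(k'+m)$.

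For ingredient (i), I would prove the analogue of \Cref{prop:cert_of_0n_is_a_vc}: any certificate $\rho$ for $\ell\text{-}\mathrm{\isedge}$ over $\ell\text{-}D_G$ on $0^{n\ell+n}$ must, when we project its literals onto the underlying vertex set $V$ (collapsing each original variable $v_i^{(0)}$ and its duplicates $v_i^{(1)},\dots,v_i^{(\ell)}$ to the vertex $v_i$), contain a vertex cover of $G$. Indeed, if some edge $e$ is uncovered in this sense, then none of the $2\ell+2$ many $1$-coordinates of $\ell\text{-}\ind[e]$ is fixed by $\rho$, so $\ell\text{-}\ind[e]\in\ell\text{-}D_G$ is consistent with $\rho$, contradicting $\ell\text{-}\mathrm{\isedge}(0^{n\ell+n}) = 0 \ne 1 = \ell\text{-}\mathrm{\isedge}(\ell\text{-}\ind[e])$. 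Hence $|\rho|\ge k'$. (I should be slightly careful here: $\rho$ could in principle fix some coordinate to $1$; but since $0^{n\ell+n}$ must be consistent with $\rho$, every literal in $\rho$ is in fact negated, so $\rho$ genuinely consists of ``$v = 0$'' constraints on a set of variables whose vertex-projection covers $E$, giving $|\rho|\ge k'$.)

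For ingredient (ii), I would prove the $\ell$-amplified analogue of \Cref{lemma:covered edges lemma}. Fix a certificate $\rho$ and an ordering $\pi$; write $\pi$'s vertex-projection as $(v_{i_1},\dots,v_{i_{|\rho|}})$ and abbreviate $E_\kappa = E(v_{i_\kappa}; v_{i_1},\dots,v_{i_{\kappa-1}})$ and $V_\kappa$ likewise. The key claim is that for each $\kappa$ with $E_\kappa\ne\varnothing$, the subfunction $(\ell\text{-}\mathrm{\isedge})_{\pi|_{\oplus\kappa}}$ has at least $\ell + (\ell+1)|V_\kappa|$ relevant variables over $\ell\text{-}D_G$ — namely the $\ell$ duplicated variables $v_{i_\kappa}^{(j)}$, $j\in\{0,\dots,\ell\}\setminus\{j_\kappa\}$ (where $j_\kappa$ is whichever copy of $v_{i_\kappa}$ appears on $\pi$), together with, for each $v\in V_\kappa$, the original variable $v^{(0)}$ and all $\ell$ duplicates $v^{(j)}$. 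I would witness relevance using exactly the points placed in $\ell\text{-}D_G$: for each such variable $w$, pick a suitable edge $e\in E_\kappa$ (one incident to $v$, or any edge in $E_\kappa$ in the case $w = v_{i_\kappa}^{(j)}$), and check that both $\ell\text{-}\ind[e]$ and its single-coordinate perturbation at $w$ lie in $\ell\text{-}D_G$, are both consistent with $\pi|_{\oplus\kappa}$ (here I use that $\pi|_{\oplus\kappa}$ fixes $v_{i_1},\dots,v_{i_{\kappa-1}}$ to $0$ but fixes the relevant copy of $v_{i_\kappa}$ to $1$, and that $v\notin\{v_{i_1},\dots,v_{i_{\kappa-1}}\}$ by definition of $V_\kappa$), and have opposite $\ell\text{-}\mathrm{\isedge}$ values ($1$ on $\ell\text{-}\ind[e]$, $0$ after the flip). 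All these variables are distinct across a fixed $\kappa$, and across distinct $\kappa$'s the $V_\kappa$-contributions are disjoint since $\{V_\kappa\}$ partitions the relevant edges; summing and using \Cref{fact:restricted vertex neighborhood partitions} gives $\sum_\kappa \big(\ell + (\ell+1)|V_\kappa|\big) = \ell k' + (\ell+1)m$, as needed.

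The main obstacle I anticipate is the bookkeeping around \emph{which} copy of $v_{i_\kappa}$ sits on the path and verifying that the perturbed points are genuinely \emph{both} in the coreset and \emph{both} consistent with $\pi|_{\oplus\kappa}$ — in particular confirming that flipping a $0$-coordinate of $\ell\text{-}\ind[e]$ (as opposed to a $1$-coordinate) never enters the argument, since only $1$-coordinate perturbations were included in $\ell\text{-}D_G$. Every witness flip above is at a coordinate that equals $1$ in $\ell\text{-}\ind[e]$ (either $v_{i_\kappa}^{(j)}$, which is $1$ since $e\ni v_{i_\kappa}$, or $v^{(j)}$ for $v\in e$), so this works out, but it is the point that most needs care. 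A secondary subtlety is making sure the $\ell$ ``self'' duplicates $v_{i_\kappa}^{(j)}$ are counted once and not double-counted against the $v\in V_\kappa$ terms; since $v_{i_\kappa}\notin V_\kappa$, there is no overlap. Once these checks are in place the proof is a direct instantiation of \Cref{thm:hardness distillation}.
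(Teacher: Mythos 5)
Your plan is the right one and matches the paper's proof almost step for step: apply \Cref{thm:hardness distillation} at $x = 0^N$, show certificates of $0^N$ project onto vertex covers (paper's \Cref{prop:certificate of 0N is a vc}), and show $\Rel(\ell\text{-}\mathrm{\isedge}_{\pi|_{\oplus\kappa}};\ell\text{-}D_G)\ge \ell+(\ell+1)|V_\kappa|$ whenever $E_\kappa\neq\varnothing$ (paper's \Cref{lemma:covered edges lemma extended for ell-isedge}). The per-$\kappa$ witness construction you describe is correct: every flipped coordinate is a $1$-coordinate of $\ell\text{-}\ind[e]$, so the perturbed point is genuinely in $\ell\text{-}D_G$, and your consistency checks against $\pi|_{\oplus\kappa}$ are sound.

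However, there is a real gap in the final summation step. You write $\sum_\kappa\big(\ell+(\ell+1)|V_\kappa|\big)=\ell k'+(\ell+1)m$, but this conflates two things. The bound $\ell+(\ell+1)|V_\kappa|$ is only valid for $\kappa$ with $E_\kappa\neq\varnothing$; for $\kappa$ with $E_\kappa=\varnothing$ the subfunction can easily have zero relevant variables over $\ell\text{-}D_G$, so you must restrict the sum to the nonempty-$E_\kappa$ indices. Once you do that, \Cref{fact:restricted vertex neighborhood partitions} still gives $\sum_{\kappa:E_\kappa\neq\varnothing}(\ell+1)|V_\kappa|=(\ell+1)m$, but the $\ell$-terms sum to $|\{\kappa:E_\kappa\neq\varnothing\}|\cdot\ell$, and you have not argued that this count is at least $k'$. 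It need not equal $|\rho|$ (a later $v_{i_\kappa}$ on $\pi$ can have all of its incident edges already covered, giving $E_\kappa=\varnothing$), so $|\rho|\ge k'$ does not immediately help. The paper closes this gap by observing that $\rho'\coloneqq\{\overline{v}_{i_\kappa}^{(j_\kappa)}:E_\kappa\neq\varnothing\}$ is still a certificate for $0^N$ --- for each edge $e$, the first $\kappa$ along $\pi$ with $v_{i_\kappa}\in e$ has $e\in E_\kappa$, so the nonempty-$E_\kappa$ vertices still cover $E$ --- and hence $|\rho'|\ge k'$ by \Cref{prop:certificate of 0N is a vc}. Adding that two-sentence observation makes your proof correct; without it, the $\ell k'$ term is unjustified. (A secondary, cosmetic point: the $V_\kappa$'s need not be disjoint as vertex sets --- it is the $E_\kappa$'s that partition $E$, with $|V_\kappa|=|E_\kappa|$ --- but this does not affect the argument since \Cref{thm:hardness distillation} only needs a lower bound on the sum of $\Rel$'s, not disjointness of the named relevant variables.)
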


This claim is analgous to \Cref{claim:D_G is a certificate of large DT size} and the proof similarly proceeds in two steps. Ultimately, we will apply \Cref{thm:hardness distillation} where $f$ is $\ell\text{-}\mathrm{\isedge}:\zo^N\to\zo$, $D$ is $\ell\text{-}D_G$, and $x$ is $0^{N}$. As such, the first step extends \Cref{prop:cert_of_0n_is_a_vc} to $\ell\text{-}\mathrm{\isedge}$ and shows that certificates for $0^N$ contain vertex covers. The second step extends \Cref{lemma:covered edges lemma} and lower bounds the number of relevant variables of subfunctions of $\ell\text{-}\mathrm{\isedge}$ induced by certificates of $0^N$.

\begin{proposition}[Any certificate of $0^N$ contains a vertex cover]
\label{prop:certificate of 0N is a vc}
    Let $G$ be a graph and let $\{\overline{v}_{i_1}^{(j_1)},\ldots,\overline{v}_{i_k}^{(j_k)}\}$ be a certificate for $\ell\text{-}\mathrm{\isedge}$ over $\ell\text{-}D_G$ on $0^N$. Then, the vertices $\{v_{i_1},\ldots,v_{i_k}\}$ form a vertex cover of $G$. 
\end{proposition}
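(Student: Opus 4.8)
\textbf{Proof proposal for \Cref{prop:certificate of 0N is a vc}.}

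The plan is to mimic the argument for \Cref{prop:cert_of_0n_is_a_vc}, arguing by contradiction: suppose the vertices $\{v_{i_1},\ldots,v_{i_k}\}$ obtained from the literals of the certificate do \emph{not} form a vertex cover of $G$. Then there is some edge $e \in E$, say $e = \{v_a, v_b\}$, such that neither $v_a$ nor $v_b$ appears among $v_{i_1},\ldots,v_{i_k}$. The key observation is that the certificate $\rho = \{\overline{v}_{i_1}^{(j_1)},\ldots,\overline{v}_{i_k}^{(j_k)}\}$ only constrains variables whose underlying vertex is one of $v_{i_1},\ldots,v_{i_k}$ --- in other words, it says nothing about the original variable $v_a^{(0)}$ or $v_b^{(0)}$, nor about any of their duplicated variables in $\textsc{Dup}(v_a)$ or $\textsc{Dup}(v_b)$. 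Consequently, the input $\ell\text{-}\ind[e] \in \ell\text{-}D_G$ is consistent with $\rho$: on every coordinate that $\rho$ pins to $0$, the string $\ell\text{-}\ind[e]$ is indeed $0$, because $\ell\text{-}\ind[e]$ is only nonzero on the coordinates indexed by $v_a$ and $v_b$ (across all $\ell+1$ blocks), and those are precisely the coordinates $\rho$ leaves free.

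Now both $0^N$ and $\ell\text{-}\ind[e]$ lie in $\ell\text{-}D_G$ and both are consistent with $\rho$, so the defining property of a certificate forces $\ell\text{-}\mathrm{\isedge}(0^N) = \ell\text{-}\mathrm{\isedge}(\ell\text{-}\ind[e])$. But $\ell\text{-}\mathrm{\isedge}(0^N) = 0$ (the all-zeros string is not a generalized edge indicator), while $\ell\text{-}\mathrm{\isedge}(\ell\text{-}\ind[e]) = 1$ by the remark following \Cref{def:ell-isedge}, since $\mathrm{\isedge}_G(\ind[e]) = 1$ and the duplicated blocks all equal $\ind[e]$, so Condition~2 of \Cref{def:ell-isedge} holds trivially. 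This contradiction establishes that $\{v_{i_1},\ldots,v_{i_k}\}$ must be a vertex cover.

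The only mild subtlety --- and the step I would be most careful about --- is bookkeeping around the fact that a single vertex $v_{i_\kappa}$ may be responsible for several literals of $\rho$ (one for each block $j$ for which $v_{i_\kappa}^{(j)} \in \rho$, i.e. the certificate could pin $v_{i_\kappa}^{(0)}$ and/or various $v_{i_\kappa}^{(j)} \in \textsc{Dup}(v_{i_\kappa})$). This does not affect the argument: what matters is only which \emph{vertices} are touched by $\rho$, and the claim is about that underlying vertex set. One should also note that since $0^N$ is the input whose certificate we are examining and all literals of a certificate consistent with $0^N$ must be negated (pinning coordinates to $0$), the certificate literals are indeed all of the form $\overline{v}_i^{(j)}$ as written in the statement; no unnegated literals can appear. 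With these observations in place the contradiction goes through exactly as above, and no genuine obstacle remains --- this proposition is the direct $\ell$-amplified analogue of the warmup \Cref{prop:cert_of_0n_is_a_vc}.
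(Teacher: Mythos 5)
Your proposal is correct and follows essentially the same approach as the paper: both proceed by contradiction, taking an uncovered edge $e$ and observing that $\ell\text{-}\ind[e]$ is consistent with the certificate yet $\ell\text{-}\mathrm{\isedge}(\ell\text{-}\ind[e]) = 1 \neq 0 = \ell\text{-}\mathrm{\isedge}(0^N)$. Your additional remarks (that all certificate literals for $0^N$ are negated, and that the claim concerns the underlying vertex set rather than individual literals) are sound and merely flesh out details the paper leaves implicit.
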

\begin{proof}
    If an edge $e$ is not covered by the vertices $\{v_{i_1},\ldots,v_{i_k}\}$, then the $1$-input $\ell\text{-}\ind[e]$ is consistent with any restriction of the form $\rho=\{\overline{v}_{i_1}^{(j_1)},\ldots,\overline{v}_{i_k}^{(j_k)}\}$. Therefore, any such $\rho$ cannot be a certificate.  
\end{proof}

\begin{lemma}[Lower bounding the number of relevant variables of $\ell\text{-}\mathrm{\isedge}$ subfunctions]
\label{lemma:covered edges lemma extended for ell-isedge}
Let $\rho$ be a certificate for $\ell\text{-}\mathrm{\isedge}$ over $\ell\text{-}D_G$ on $0^N$ and $\pi=(\overline{v}_{i_1}^{(j_1)},\ldots,\overline{v}_{i_k}^{(j_k)})\in\mathrm{Perm}(\rho)$, a permutation of $\rho$. Then
$$
\Rel(\ell\text{-}\mathrm{\isedge}_{\pi\vert_{\oplus\kappa}},\ell\text{-}D_G)\ge \ell+(\ell+1)\cdot |E(v_{i_\kappa};v_{i_1},\ldots,v_{i_{\kappa-1}})|
$$
for all $\kappa\in [k]$ such that $E(v_{i_\kappa};v_{i_1},\ldots,v_{i_{\kappa-1}})\neq \varnothing$. 
\end{lemma}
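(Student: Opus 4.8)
The plan is to mirror the proof of \Cref{lemma:covered edges lemma}: for the subfunction $\ell\text{-}\mathrm{\isedge}_{\pi\vert_{\oplus\kappa}}$ I will exhibit an explicit list of $\ell + (\ell+1)\cdot|E_\kappa|$ distinct relevant variables, where $E_\kappa\coloneqq E(v_{i_\kappa};v_{i_1},\ldots,v_{i_{\kappa-1}})$ and $V_\kappa\coloneqq V(v_{i_\kappa};v_{i_1},\ldots,v_{i_{\kappa-1}})$, each certified by a pair of points of $\ell\text{-}D_G$ on which the subfunction takes different values. The key preliminary observation is that, viewed as a restriction, $\pi\vert_{\oplus\kappa}$ forces $v_{i_t}^{(j_t)}=0$ for $t<\kappa$ and $v_{i_\kappa}^{(j_\kappa)}=1$. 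Consequently, for every edge $e=\{v_{i_\kappa},v\}\in E_\kappa$ the generalized edge indicator $\ell\text{-}\ind[e]\in\ell\text{-}D_G$ is consistent with $\pi\vert_{\oplus\kappa}$ (all $\ell+1$ copies of the $v_{i_\kappa}$-coordinate equal $1$, and since $e$ avoids $v_{i_1},\ldots,v_{i_{\kappa-1}}$, all copies of those coordinates equal $0$), so $\ell\text{-}\mathrm{\isedge}_{\pi\vert_{\oplus\kappa}}(\ell\text{-}\ind[e]) = \ell\text{-}\mathrm{\isedge}(\ell\text{-}\ind[e]) = 1$.

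I would first account for $v_{i_\kappa}$. Since $E_\kappa\neq\varnothing$, fix any $e^\star\in E_\kappa$. For each copy index $j\in\{0,\ldots,\ell\}\setminus\{j_\kappa\}$, flipping coordinate $v_{i_\kappa}^{(j)}$ in $\ell\text{-}\ind[e^\star]$ yields one of the $1$-coordinate perturbations stored in $\ell\text{-}D_G$, and this perturbation stays consistent with $\pi\vert_{\oplus\kappa}$ since the constrained copy $v_{i_\kappa}^{(j_\kappa)}$ is untouched. A two-case analysis shows $\ell\text{-}\mathrm{\isedge}$ outputs $0$ on it: if $j=0$ the block $v^{(0)}$ is no longer an edge indicator (Condition~1 of \Cref{def:ell-isedge} fails), while if $j\ge 1$ the block $v^{(0)}$ still equals $\ind[e^\star]$ but its $v_{i_\kappa}$-entry is $1$ whereas its $j$-th copy is now $0$ (Condition~2 fails). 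Hence each such $v_{i_\kappa}^{(j)}$ is relevant, contributing $\ell$ variables. The vertices in $V_\kappa$ are handled identically: for $v\in V_\kappa$ put $e_v=\{v_{i_\kappa},v\}$, and note that flipping any one of the $\ell+1$ coordinates of $v$ (its original variable or any of $\textsc{Dup}(v)$) in $\ell\text{-}\ind[e_v]$ gives a perturbation in $\ell\text{-}D_G$ that stays consistent with $\pi\vert_{\oplus\kappa}$, because $v$ is distinct from $v_{i_1},\ldots,v_{i_\kappa}$ (the first $\kappa-1$ are excluded from $V_\kappa$ by definition, and $v\neq v_{i_\kappa}$ as $G$ has no self-loops). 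The same two-case analysis gives output $0$, so all $\ell+1$ coordinates of $v$ are relevant.

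Finally I would verify pairwise distinctness — the variables listed are the copies of the single coordinate $v_{i_\kappa}$ other than $j_\kappa$, together with all $\ell+1$ copies of each coordinate $v\in V_\kappa$, and the $v$'s are distinct and distinct from $v_{i_\kappa}$ — and invoke $|V_\kappa|=|E_\kappa|$ (the bijection $v\mapsto\{v_{i_\kappa},v\}$ of \Cref{defn:vertex neighborhood}) to conclude that there are at least $\ell+(\ell+1)|E_\kappa|$ relevant variables, i.e.\ $\Rel(\ell\text{-}\mathrm{\isedge}_{\pi\vert_{\oplus\kappa}},\ell\text{-}D_G)\ge \ell+(\ell+1)|E_\kappa|$. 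I do not expect a conceptual obstacle; the step demanding the most care is the bookkeeping — confirming each exhibited point genuinely belongs to $\ell\text{-}D_G$, that it remains consistent with $\pi\vert_{\oplus\kappa}$ after the flip, and that the two failure modes of $\ell\text{-}\mathrm{\isedge}$ (destroying the edge-indicator structure when $j=0$ versus destroying the duplicate-equality when $j\ge 1$) are tracked correctly. This is precisely the amplified analogue of the accounting in the No case of \Cref{thm:ell-isedge-intro}.
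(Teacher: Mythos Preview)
Your proposal is correct and follows essentially the same approach as the paper: exhibit the $\ell$ copies of $v_{i_\kappa}$ (other than $v_{i_\kappa}^{(j_\kappa)}$) and the $\ell+1$ copies of each $v\in V_\kappa$ as relevant variables, witnessed by the pair $\ell\text{-}\ind[e]$ and its appropriate $1$-coordinate perturbation, both of which lie in $\ell\text{-}D_G$ and are consistent with $\pi\vert_{\oplus\kappa}$. Your write-up is in fact slightly more careful than the paper's in spelling out the two failure modes ($j=0$ versus $j\ge 1$) and the no-self-loops point ensuring $v\neq v_{i_\kappa}$.
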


\begin{proof}
Let $\pi\vert_{\oplus\kappa}$ be as in the lemma statement and let $E_\kappa$ and $V_\kappa$ denote $E(v_{i_\kappa};v_{i_1},\ldots,v_{i_{\kappa-1}})$ and $V(v_{i_\kappa};v_{i_1},\ldots,v_{i_{\kappa-1}})$, respectively (recall these quantities from \Cref{defn:path prefix,defn:vertex neighborhood}). If $E_\kappa\neq\varnothing$, then we will show that $v_{i_\kappa}$ contributes $\ell$ relevant variables to $\Rel(\ell\text{-}\mathrm{\isedge}_{\pi\vert_{\oplus\kappa}},\ell\text{-}D_G)$ and that each $v\in V_\kappa$ contributes $\ell+1$.

\paragraph{The vertex $v_{i_\kappa}$ contributes $\ell$ relevant variables.}{
By assumption, $E_\kappa$ is nonempty so there is some edge $e=(v_{i_\kappa},v)\in E_\kappa$. The restriction $\pi\vert_{\oplus\kappa}$ sets one coordinate, $v_{i_\kappa}^{j_\kappa}$, to $1$ and the other coordinates: $\{v_{i_1},\ldots,v_{i_{\kappa-1}}\}$ are set to $0$. Since $v\not\in \{v_{i_1},\ldots,v_{i_{\kappa-1}}\}$, the input $\ell\text{-}\ind[e]\in\zo^N$ is consistent with $\pi\vert_{\oplus\kappa}$. All of the coordinates in $\textsc{Dup}(v_{i_\kappa})\cup\{v_{i_\kappa}^{(0)}\}$ are set to $1$ in the input $\ell\text{-}\ind[e]$. Hence, for any $v'\in \textsc{Dup}(v_{i_\kappa})\cup\{v_{i_k}^{(0)}\}\setminus\{v_{i_\kappa}^{(j_\kappa)}\}$, the input $\ell\text{-}\ind[e]^{\oplus v'}$ is consistent with $\pi\vert_{\oplus\kappa}$ since $v'\neq v_{i_\kappa}^{(j_\kappa)}$. Therefore,
$$
    \ell\text{-}\mathrm{\isedge}_{\pi\vert_{\oplus\kappa}}(\ell\text{-}\ind[e])=1\quad \text{and}\quad \ell\text{-}\mathrm{\isedge}_{\pi\vert_{\oplus\kappa}}(\ell\text{-}\ind[e]^{\oplus v'})=0
$$
and $\ell\text{-}\ind[e],\ell\text{-}\ind[e]^{\oplus v'}\in\ell\text{-}D_G$. Since $v'$ was arbitrary this shows that each of the $\ell$ variables in \textsc{Dup}$(v_{i_\kappa})\cup\{v_{i_\kappa}^{(0)}\}\setminus\{v_{i_\kappa}^{(j_\kappa)}\}$ is relevant for $\ell\text{-}\mathrm{\isedge}_{\pi\vert_{\oplus\kappa}}$ over $\ell\text{-}D_G$. 
}

\paragraph{Each vertex $v\in V_\kappa$ contributes $\ell+1$ relevant variables.}{
Let $v\in V_\kappa$ be an arbitrary vertex and let $e=(v_{i_\kappa},v)\in E_\kappa$ be the edge incident to $v_{i_\kappa}$ that contains $v$. Let $v'\in \textsc{Dup}(v)\cup\{v^{(0)}\}$ be a coordinate of $\ell\text{-}\mathrm{\isedge}$. As above, the inputs $\ell\text{-}\ind[e]$ and $\ell\text{-}\ind[e]^{\oplus v'}$ are both consistent with the restriction $\pi\vert_{\oplus\kappa}$. Moreover,
$$
    \ell\text{-}\mathrm{\isedge}_{\pi\vert_{\oplus\kappa}}(\ell\text{-}\ind[e])=1\quad \text{and}\quad \ell\text{-}\mathrm{\isedge}_{\pi\vert_{\oplus\kappa}}(\ell\text{-}\ind[e]^{\oplus v'})=0
$$
and $\ell\text{-}\ind[e],\ell\text{-}\ind[e]^{\oplus v'}\in\ell\text{-}D_G$. This shows that all $\ell+1$ variables in $\textsc{Dup}(v)\cup\{v^{(0)}\}$ for $v\in V_\kappa$ is relevant.
}
All of these relevant variables are unique and so the total number of relevant variables of $\ell\text{-}\mathrm{\isedge}_{\pi\vert_{\oplus\kappa}}$ is at least $\ell+(\ell+1)|V_\kappa|=\ell+(\ell+1)|E_\kappa|$ as desired. 
\end{proof}

\begin{proof}[Proof of \Cref{claim:ell D_G is a certificate of large DT size}]
Let $\rho$ be a certificate for $\ell\text{-}\mathrm{\isedge}$ over $\ell\text{-}D_G$ on $0^N$. By \Cref{prop:certificate of 0N is a vc}, the variables in $\rho$ form a vertex cover and so $|\rho|\ge$ the size of a vertex cover of $G$. Let $\pi=(\overline{v}_{i_1}^{(j_1)},\ldots,\overline{v}_{i_k}^{(j_k)})\in\mathrm{Perm}(\rho)$ be a permutation of $\rho$. In order to apply hardness distillation (\Cref{thm:hardness distillation}), we need to lower bound the number of relevant variables of $\ell\text{-}\mathrm{\isedge}_{\pi\vert_{\oplus \kappa}}$ for $\kappa=1,2,\ldots,|\pi|$. However, the lower bound from \Cref{lemma:covered edges lemma extended for ell-isedge} only applies if the corresponding restricted edge neighborhood $E_\kappa=E(v_{i_\kappa};v_{i_1},\ldots,v_{i_{\kappa-1}})$ is nonempty. To this end, we consider the restriction $\rho'=\{\overline{v}_{i_\kappa}^{(j_\kappa)}\mid E_\kappa\neq\varnothing\}\sse \rho$. This restriction is still a certificate for $\ell\text{-}\mathrm{\isedge}$ over $\ell\text{-}D_G$ on $0^N$ and therefore must still contain a vertex cover by \Cref{prop:certificate of 0N is a vc}. Therefore, $|\rho'|\ge k'$ where $k'$ is the size of a vertex cover of $G$. We can now write
\begin{align*}
    \sum_{\kappa=1}^{|\pi|}\Rel(\ell\text{-}\mathrm{\isedge};\ell\text{-}D_G)&\ge \sum_{\substack{\kappa\in [|\pi|]\\ E_\kappa\neq \varnothing}}\Rel(\ell\text{-}\mathrm{\isedge}_{\pi\mid_{\oplus \kappa}};\ell\text{-}D_G)\\
    &\ge \sum_{\substack{\kappa\in [|\pi|]\\ E_\kappa\neq \varnothing}} \ell+(\ell+1) |E_\kappa|\tag{\Cref{lemma:covered edges lemma extended for ell-isedge}}\\
    &=|\rho'|\ell+(\ell+1) m\tag{\Cref{fact:restricted vertex neighborhood partitions}: $\{E_\kappa\}$ partition $E$}\\
    &\ge \ell k'+(\ell+1) m\tag{$\rho'$ contains a vertex cover}.
\end{align*}
We have satisfied the conditions of \Cref{thm:hardness distillation} with $f$ being $\ell\text{-}\mathrm{\isedge}$, $D$ being $\ell\text{-}D_G$, and $x$ being $0^N$. We conclude
\[
\dtsize(\ell\text{-}\mathrm{\isedge}, \ell\text{-}D_G)\ge k'+k'\ell+(\ell+1) m= (\ell+1) (k'+m).
\qedhere
\]
\end{proof}

\subsection{Learning consequence for inverse polynomial error} 

In this section, we use \Cref{claim:ell D_G is a certificate of large DT size} to obtain hardness of learning decision trees with membership queries. We recall, formally, the learning problem we are interested in.\medskip 

\begin{tcolorbox}[colback = white,arc=1mm, boxrule=0.25mm]
\vspace{3pt} 
\textsc{DT-Learn}$(n,s,s',\eps)$:  Given random examples from an unknown distribution $\mathcal{D}$ and membership queries to a size-$s$ target decision tree, output a size-$s'$ decision tree which $\eps$-approximates the target over $\mathcal{D}$.
\vspace{3pt} 
\end{tcolorbox}
\medskip                                                                                                                                               
\begin{theorem}[Hardness learning DTs with inverse polynomial error]
\label{thm:learning hardness inverse poly error}
For all constants $\delta'>0, d\in \N$, there is a sufficiently small constant $\delta>0$ such that the following holds. If \textsc{DT-Learn}$(n,s,(1+\delta)\cdot s,\eps)$ with $s=O(n)$, $\eps=O(1/n)$ can be solved in randomized time $t(n)$, then $\textsc{VertexCover}(k,(1+\delta')\cdot k)$ on degree-$d$, $n$-vertex graphs can be solved in randomized time $O(n^2\cdot t(n^2))$.
\end{theorem}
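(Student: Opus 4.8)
The plan is to reduce $\textsc{VertexCover}(k,(1+\delta')k)$ on a degree-$d$, $n$-vertex graph $G$ to $\textsc{DT-Learn}$, using $\ell\text{-}\mathrm{\isedge}_G$ as the target function and $\mathrm{Uniform}(\ell\text{-}D_G)$ as the distribution; the point of using $\ell\text{-}\mathrm{\isedge}_G$ rather than a cover-dependent target is that, by \Cref{remark:queries to target}, its value on any input is computable in $\poly(N)$ time from $G$ alone, so we can faithfully simulate both membership queries to it and the example oracle $\textnormal{EX}(\ell\text{-}\mathrm{\isedge}_G,\mathrm{Uniform}(\ell\text{-}D_G))$ (to draw a sample, pick a uniform point of the explicit set $\ell\text{-}D_G$ and label it). First I would preprocess: delete isolated vertices (this does not change $\VC(G)$) so that $n/2 \le m \le dn/2$, i.e.\ $m=\Theta(n)$; and if $m>dk$, immediately answer ``$\VC(G)\ge(1+\delta')k$'', which is correct because $\VC(G)\ge m/d>k$ by \Cref{fact:constant degree graphs have large vc size}. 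So henceforth $m\le dk$. Then I would fix $\delta \le \delta'/(2(d+1))$, choose $\ell=\Theta(dn/\delta')=\Theta(n)$ (exact value below), and set $N\coloneqq n+n\ell=\Theta(n^2)$ (the number of variables of $\ell\text{-}\mathrm{\isedge}_G$), $s\coloneqq(\ell+1)(k+m)+mn$, and $\eps\coloneqq 1/(|\ell\text{-}D_G|+1)$. Since $\ell,k,m=O(n)$ and $N=\Theta(n^2)$ we get $s=O(N)$, and since $|\ell\text{-}D_G|=m(2\ell+3)+1=\Theta(N)$ we get $\eps=\Theta(1/N)$; this is exactly the regime $s=O(N),\ \eps=O(1/N)$ in the theorem.

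The reduction then runs the hypothesized learner $\mathcal{L}$ on $\textsc{DT-Learn}(N,s,(1+\delta)s,\eps)$ with these simulated oracles for $t(N)$ steps, and accepts (``$\VC(G)\le k$'') iff the output $h$ is a decision tree with $|h|\le(1+\delta)s$ that agrees with $\ell\text{-}\mathrm{\isedge}_G$ on \emph{every} point of $\ell\text{-}D_G$, a $\poly(N)$-time check. For correctness, suppose first $\VC(G)\le k$: applying the Yes case of \Cref{thm:ell-isedge-intro} to a minimum cover gives a decision tree of size $\le(\ell+1)(\VC(G)+m)+mn\le s$ for $\ell\text{-}\mathrm{\isedge}_G$, so $\mathcal{L}$'s promise is met and with probability $\ge 2/3$ it returns $h$ with $|h|\le(1+\delta)s$ and $\dist_{\mathrm{Uniform}(\ell\text{-}D_G)}(\ell\text{-}\mathrm{\isedge}_G,h)\le\eps<1/|\ell\text{-}D_G|$; because the distribution is uniform on $\ell\text{-}D_G$, any single disagreement would cost at least $1/|\ell\text{-}D_G|>\eps$, so $h$ agrees with the target on all of $\ell\text{-}D_G$ and we accept. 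If instead $\VC(G)\ge(1+\delta')k$ (and $m\le dk$, else we already rejected), \Cref{claim:ell D_G is a certificate of large DT size} says every decision tree agreeing with $\ell\text{-}\mathrm{\isedge}_G$ on $\ell\text{-}D_G$ has size $\ge(\ell+1)(\VC(G)+m)\ge(\ell+1)((1+\delta')k+m)$; provided $\ell$ is chosen so that $(\ell+1)((1+\delta')k+m)>(1+\delta)s$, no size-$\le(1+\delta)s$ tree can pass our test, so we reject regardless of what $\mathcal{L}$ does. Thus we obtain a one-sided-error randomized decider: always correct when $\VC(G)\ge(1+\delta')k$, correct with probability $\ge 2/3$ when $\VC(G)\le k$.

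The only quantitative thing to check is the gap inequality $(\ell+1)((1+\delta')k+m)>(1+\delta)s$, which is precisely the balancing already carried out in \Cref{lem:hardness of apx dt size}. Substituting $s$, it rearranges to $(\ell+1)\big((\delta'-\delta)k-\delta m\big)>(1+\delta)mn$. Since $m\le dk$ and $\delta\le\delta'/(2(d+1))$, the bracket is at least $(\delta'-\delta(1+d))k\ge \tfrac{\delta'}{2}k$, so it suffices to have $\ell+1>2(1+\delta)mn/(\delta'k)$, and using $m\le dk$ once more this holds for, say, $\ell=\lceil 3dn/\delta'\rceil=\Theta(n)$, which keeps $N=\Theta(n^2)$ as needed. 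For the running time, the reduction's own bookkeeping and the final check are $\poly(n)$, and simulating each of $\mathcal{L}$'s at most $t(N)$ membership queries requires reading an $N$-bit input, costing $O(N)=O(n^2)$ each, for a total of $O(n^2\cdot t(n^2))$, matching the claimed bound.

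I expect the main work is not any single step but threading all four parameters $(\ell,N,s,\eps)$ through their simultaneous constraints: $\ell$ must be $\Theta(n)$ so that $N=\Theta(n^2)$ and $s=O(N)$, yet large enough to drown out the additive $mn$ slack in the Yes case — this is exactly the ``asymmetric amplification'' of \Cref{rem:asymm} — while $\eps=\Theta(1/N)$ is pinned by the coreset size $|\ell\text{-}D_G|$; and being careful with the promise structure (removing isolated vertices so $m=\Theta(n)$, short-circuiting when $m>dk$) so that the bounds $m=\Theta(n)$ and $m\le dk$ hold wherever they are invoked. The genuinely new ingredient over the $\textsc{DT-Min}$ reduction of \Cref{lem:hardness of apx dt size} is \Cref{claim:ell D_G is a certificate of large DT size} (hardness distillation for $\ell\text{-}\mathrm{\isedge}$): it is what lets the argument survive the learner being allowed $\eps$-error rather than exact computation, by ensuring that even agreement merely on the small set $\ell\text{-}D_G$ already forces large size.
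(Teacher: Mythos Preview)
Your proposal is correct and follows essentially the same approach as the paper: reduce from gapped \textsc{VertexCover} using $\ell\text{-}\mathrm{\isedge}_G$ as the target and the uniform distribution over $\ell\text{-}D_G$, invoking the Yes case of \Cref{thm:ell-isedge-intro} for the upper bound and \Cref{claim:ell D_G is a certificate of large DT size} (hardness distillation) for the No-case lower bound, with $\ell=\Theta(n)$ chosen to drown out the additive $mn$ slack. The only cosmetic difference is that you preprocess (delete isolated vertices, short-circuit when $m>dk$) so as to use $m\le dk$ in the gap inequality, whereas the paper instead folds \Cref{fact:constant degree graphs have large vc size} ($\VC(G)\ge m/d$) directly into the No-case algebra via a split $\VC(G)=\lambda\VC(G)+(1-\lambda)\VC(G)$; both routes arrive at the same parameter regime.
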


\begin{proof}
    Given $\delta'>1$ and $d\in\N$, let $\lambda<1$ be any large enough constant so that $\lambda (1+\delta')>1$ and let $\delta>0$ be any constant satisfying $1<(1+\delta)<\min\{\lambda (1+\delta'), 1+\frac{1-\lambda}{d}\}$. Then we will use a learner for \textsc{DT-Learn}$(n,s,(1+\delta)\cdot s,\eps)$ to solve \textsc{VertexCover}$(k,(1+\delta') k)$.

 \paragraph{The reduction.}{   
    Fix $\ell=\Theta(n)$ large enough so that $1+\frac{1-\lambda}{d}>(1+\delta)+\frac{2(1+\delta)n}{\ell}$. Such an $\ell$ exists since $1+\frac{1-\lambda}{d}>1+\delta$ by assumption. Consider the function $\ell\text{-}{\mathrm{\isedge}}:\zo^N\to\zo$ and the set of inputs $\ell\text{-}D_G$ for $N=n+\ell n=\Theta(n^2)$. Let $\mathcal{D}$ be the distribution which is uniform over the set $\ell\text{-}D_G$ and fix $\eps<1/|\supp(\ell\text{-}D_G)|=O(1/m^2)$ (which is $O(1/N)$ since $n=\Theta(m)$ for constant degree graphs) and $s=\ell(k+m)+2mn=O(N)$. Run the procedure in \Cref{fig:solving vc with dt learn}. 
\begin{figure}[h!]
\begin{tcolorbox}[colback = white,arc=1mm, boxrule=0.25mm]
\vspace{3pt}
$\textsc{VertexCover}(k,(1+\delta')\cdot k)$:
\begin{itemize}[leftmargin=10pt,align=left]
\item[\textbf{Given:}] $G$, an $m$-edge degree-$d$ graph over $n$ vertices and $k\in \N$
\item[\textbf{Run:}]\textsc{DT-Learn}$(N,s,(1+\delta)\cdot s,\eps)$ for $t(N)$ time steps providing the learner with
\begin{itemize}[align=left,labelsep*=0pt]
    \item \textit{queries}: return $\ell\text{-}{\mathrm{\isedge}}(v^{(0)},\ldots,v^{(\ell)})$ for a query $(v^{(0)},\ldots,v^{(\ell)})\in\zo^N$; and
    \item \textit{random samples}: return $(\bv^{(0)},\ldots,\bv^{(\ell)})\sim\mathcal{D}$ for a random sample.
\end{itemize}
\item[$T_{\text{hyp}}\leftarrow$ decision tree output of the learner]
\item[$\eps_{\text{hyp}}\leftarrow \dist_{\mathcal{D}}(T_{\text{hyp}},\ell\text{-}{\mathrm{\isedge}})$]
\item[\textbf{Output:}] \textsc{Yes} if and only if $|T_{\text{hyp}}|\le (1+\delta)\cdot \left[\ell(k+m)+2mn\right]$ and $\eps_{\text{hyp}}\le \eps$
\end{itemize}
\vspace{3pt}
\end{tcolorbox}
\medskip
\caption{Using an algorithm for \textsc{DT-Learn} to solve \textsc{VertexCover}.}
\label{fig:solving vc with dt learn}
\end{figure}
}

\paragraph{Runtime.}{ 
    Any query $(v^{(0)},\ldots,v^{(\ell)})\in\zo^N$ to $\ell\text{-}{\mathrm{\isedge}}$ can be answered in $O(N)$ time by looking at $G$ and computing $\mathrm{\isedge}(v^{(0)})$ in time $O(m)$ then checking that the appropriate vertices are set to $1$. Similarly, a random sample from $\mathcal{D}$ can be obtained in time $O(N)$ by picking a uniform random element of $\ell\text{-}D_G$. This algorithm for \textsc{VertexCover} requires $O(N\cdot t(N))$ time to run the learner plus time $O(N^2)$ to compute $\dist_{\mathcal{D}}(T_{\text{hyp}},\ell\text{-}{\mathrm{\isedge}})$. Since $t(N)\ge N$, this implies an overall runtime of $O(N\cdot t(N))$ which is $O(n^2\cdot t(n^2))$.
}

\paragraph{Correctness.}{For correctness, we analyze the yes and no cases separately.
\subparagraph{Yes case: $\VC(G)\le k$.}{
    In this case, \Cref{thm:ell-isedge-intro} ensures that
    $$
    \dtsize(\ell\text{-}{\mathrm{\isedge}})\le (\ell+1)(k+m)+mn\le \ell(k+m)+2mn.
    $$
    Therefore after $t(N)$ time steps, with high probability, the learner outputs a decision tree $T_{\text{hyp}}$ satisfying 
    $$
    \dist_{\mathcal{D}}(T_{\text{hyp}},\ell\text{-}{\mathrm{\isedge}})\le \eps
    $$
    and
    \begin{align*}
        |T_{\text{hyp}}|&\le (1+\delta)\cdot \dtsize(\ell\text{-}{\mathrm{\isedge}})\tag{learner assumption}\\
        &\le (1+\delta)\cdot \left[\ell(k+m)+2mn\right]\tag{\Cref{thm:ell-isedge-intro}}
    \end{align*}
    which ensures that our algorithm correctly outputs \textsc{Yes}. 
}
\subparagraph{No case: $\VC(G)>(1+\delta')\cdot k$.}
    Assume that $\dist_{\mathcal{D}}(T_{\text{hyp}},\ell\text{-}{\mathrm{\isedge}})\le \eps<1/|\supp(\ell\text{-}D_G)|$ (otherwise the algorithm correctly outputs \textsc{No}). In particular, $\dist_{\mathcal{D}}(T_{\text{hyp}},\ell\text{-}{\mathrm{\isedge}})=0$. We would like to show that, under our assumption on $\VC(G)$, $|T_{\text{hyp}}|>(1+\delta)\cdot \left[\ell(k+m)+2mn\right]$. We start by bounding the vertex cover size of $G$:
    \begin{align*}
        \VC(G)&=\lambda \VC(G)+\frac{1-\lambda}{d} d\VC(G)\\
        &\ge \lambda \VC(G)+\frac{1-\lambda}{d} m\tag{\Cref{fact:constant degree graphs have large vc size}}\\
        &\ge \lambda \VC(G) + \left(\delta+\frac{2(1+\delta)n}{\ell}\right)m\tag{$\frac{1-\lambda}{d}>\delta+\frac{2(1+\delta)n}{\ell}$}\\
        &>(1+\delta)k+ \left(\delta+\frac{2(1+\delta)n}{\ell}\right)m.\tag{$\lambda \VC(G)>\lambda (1+\delta')k>(1+\delta)k$}
    \end{align*}
    This implies that
    \begin{equation}
    \label{eq:vc(g) lowerbound}
        \ell\VC(G)>(1+\delta)\ell k +\delta\ell m +2(1+\delta)mn.
    \end{equation}
    We can now write
    \begin{align*}
        |T_{\text{hyp}}|&\ge \dtsize(\ell\text{-}{\mathrm{\isedge}}, \ell\text{-}D_G)\tag{$T_{\text{hyp}}$ computes $\ell\text{-}{\mathrm{\isedge}}$ over $\ell\text{-}D_G$}\\
        &> \ell(\VC(G)+m)\tag{\Cref{claim:ell D_G is a certificate of large DT size}}\\
        &> (1+\delta)\ell k +(1+\delta)\ell m +2(1+\delta)mn\tag{\Cref{eq:vc(g) lowerbound}}\\
        &=(1+\delta)\cdot\left[\ell(k+m)+2mn\right]
    \end{align*}
    which ensures that our algorithm correctly outputs \textsc{No}.}
\end{proof}
\section{Hardness for constant error} 
\subsection{Hardness of partial vertex cover}

\paragraph{Partial vertex cover.} {For a graph $G=(V,E)$ and $\alpha\in [0,1)$, an $\alpha$-partial vertex cover is subset of the vertices $C\sse V$ such that $C$ covers at least a $(1-\alpha)$-fraction of the edges. The problem $0$-partial vertex cover is the ordinary vertex cover problem. We write $\VC_\alpha(G)\in \N$ to denote the size of the smallest $\alpha$-partial vertex cover of $G$. See \Cref{fig:vc and partial vc} for an example of a partial vertex cover. The problem $\alpha$-partial $(k,k')$\textsc{-VertexCover} is to distinguish whether there exists an $\alpha$-partial vertex cover of size $\le k$ or every $\alpha$-partial vertex cover requires size $>k'$. As with ordinary vertex cover, solving this gapped problem is equivalent to approximating $\alpha$-partial vertex cover. \Cref{thm:hardness of vertex cover} implies hardness of approximating $\alpha$-partial vertex cover. It is possible to upgrade an $\alpha$-partial vertex cover to an ordinary vertex cover by augmenting it with the vertices of uncovered edges.}

\begin{fact}[Upgrading $\alpha$-partial vertex covers]
\label{fact:upgrading a partial vc}
Any $\alpha$-partial vertex cover $C$ for a graph $G$ with $m$-edges can be transformed into a vertex cover $C'$ for $G$ satisfying $|C'|\le |C|+2\alpha m$. 
\end{fact}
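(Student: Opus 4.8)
\textbf{Proof proposal for \Cref{fact:upgrading a partial vc}.}
The plan is a direct counting argument. First I would observe that since $C$ is an $\alpha$-partial vertex cover, it covers at least a $(1-\alpha)$-fraction of the $m$ edges, so the set $E_{\mathrm{bad}} \subseteq E$ of edges with \emph{neither} endpoint in $C$ satisfies $|E_{\mathrm{bad}}| \le \alpha m$. These are exactly the edges that prevent $C$ from being a genuine vertex cover.

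Next I would define $C' \coloneqq C \cup \bigcup_{e \in E_{\mathrm{bad}}} e$, i.e.\ augment $C$ by throwing in both endpoints of every uncovered edge (one endpoint per edge would already suffice, but taking both keeps the bookkeeping trivial and still matches the claimed bound). Every edge of $G$ is then covered: an edge $e \in E \setminus E_{\mathrm{bad}}$ already had an endpoint in $C \subseteq C'$, and an edge $e \in E_{\mathrm{bad}}$ has both its endpoints in $C'$ by construction. Hence $C'$ is a vertex cover of $G$.

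Finally, for the size bound, each edge in $E_{\mathrm{bad}}$ contributes at most $2$ new vertices to $C'$, so $|C'| \le |C| + 2|E_{\mathrm{bad}}| \le |C| + 2\alpha m$, as desired. There is no real obstacle here: the only thing to be slightly careful about is the definition of ``covers a $(1-\alpha)$-fraction of the edges'' versus ``fails to cover an $\alpha$-fraction,'' but these are complementary and give the stated $\alpha m$ bound on $|E_{\mathrm{bad}}|$ immediately.
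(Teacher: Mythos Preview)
Your proposal is correct and matches the paper's own argument essentially verbatim: the paper simply notes that at most $\alpha m$ edges are uncovered and augments $C$ with their (at most $2\alpha m$) endpoints to obtain a full vertex cover.
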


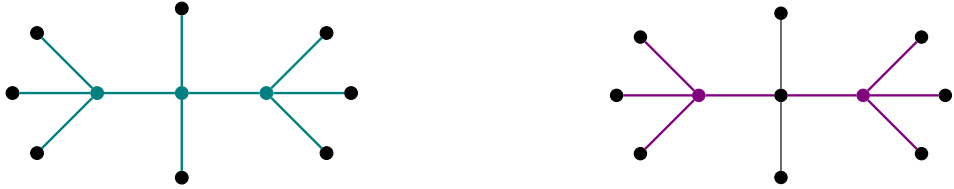
\begin{figure}[H]
\centering
\hspace*{\fill}
\begin{subfigure}{.45\textwidth}
  \centering
  \resizebox{0.65\textwidth}{!}{%
      \begin{tikzpicture}[
    ]
        \centering
        \node[shape=circle,draw=black,inner sep=1.5pt,fill=black] (WW) at (-1,0) {};
        \node[shape=circle,draw=black,inner sep=1.5pt,fill=black] (WN) at (-0.71,0.71) {};
        \node[shape=circle,draw=black,inner sep=1.5pt,fill=black] (WS) at (-0.71,-0.71) {};
        \node[shape=circle,draw=teal,inner sep=1.5pt,fill=teal] (W) at (0,0) {};
        
        \node[shape=circle,draw=teal,inner sep=1.5pt,fill=teal] (C) at (1,0) {};
        \node[shape=circle,draw=black,inner sep=1.5pt,fill=black] (CN) at (1,1) {};
        \node[shape=circle,draw=black,inner sep=1.5pt,fill=black] (CS) at (1,-1) {};
        
        \node[shape=circle,draw=teal,inner sep=1.5pt,fill=teal] (E) at (2,0) {};
        \node[shape=circle,draw=black,inner sep=1.5pt,fill=black] (EE) at (3,0) {};
        \node[shape=circle,draw=black,inner sep=1.5pt,fill=black] (EN) at (2.71,0.71) {};
        \node[shape=circle,draw=black,inner sep=1.5pt,fill=black] (ES) at (2.71,-0.71) {};
        
        \path [-,teal,thick] (WW) edge node[left] {} (W);
        \path [-,teal,thick] (WN) edge node[left] {} (W);
        \path [-,teal,thick] (WS) edge node[left] {} (W);
        \path [-,teal,thick] (C) edge node[left] {} (W);
        
        \path [-,teal,thick] (CN) edge node[left] {} (C);
        \path [-,teal,thick] (CS) edge node[left] {} (C);
        \path [-,teal,thick] (E) edge node[left] {} (C);
        
        \path [-,teal,thick] (EE) edge node[left] {} (E);
        \path [-,teal,thick] (EN) edge node[left] {} (E);
        \path [-,teal,thick] (ES) edge node[left] {} (E);
    \end{tikzpicture}
    }
  \caption{A vertex cover and its covered edges highlighted in teal}
  \label{fig:vc-cover}
\end{subfigure}%
\hfill
\begin{subfigure}{.45\textwidth}
  \centering
  \resizebox{0.65\textwidth}{!}{
          \begin{tikzpicture}[
        ]
            \centering
            \node[shape=circle,draw=black,inner sep=1.5pt,fill=black] (WW) at (-1,0) {};
            \node[shape=circle,draw=black,inner sep=1.5pt,fill=black] (WN) at (-0.71,0.71) {};
            \node[shape=circle,draw=black,inner sep=1.5pt,fill=black] (WS) at (-0.71,-0.71) {};
            \node[shape=circle,draw=violet,inner sep=1.5pt,fill=violet] (W) at (0,0) {};
            
            \node[shape=circle,draw=black,inner sep=1.5pt,fill=black] (C) at (1,0) {};
            \node[shape=circle,draw=black,inner sep=1.5pt,fill=black] (CN) at (1,1) {};
            \node[shape=circle,draw=black,inner sep=1.5pt,fill=black] (CS) at (1,-1) {};

            \node[shape=circle,draw=violet,inner sep=1.5pt,fill=violet] (E) at (2,0) {};
            \node[shape=circle,draw=black,inner sep=1.5pt,fill=black] (EE) at (3,0) {};
            \node[shape=circle,draw=black,inner sep=1.5pt,fill=black] (EN) at (2.71,0.71) {};
            \node[shape=circle,draw=black,inner sep=1.5pt,fill=black] (ES) at (2.71,-0.71) {};
            
            \path [-,violet,thick] (WW) edge node[left] {} (W);
            \path [-,violet,thick] (WN) edge node[left] {} (W);
            \path [-,violet,thick] (WS) edge node[left] {} (W);
            \path [-,violet,thick] (C) edge node[left] {} (W);
            
            \path [-] (CN) edge node[left] {} (C);
            \path [-] (CS) edge node[left] {} (C);
            \path [-,violet,thick] (E) edge node[left] {} (C);
            
            \path [-,violet,thick] (EE) edge node[left] {} (E);
            \path [-,violet,thick] (EN) edge node[left] {} (E);
            \path [-,violet,thick] (ES) edge node[left] {} (E);
        \end{tikzpicture}
    }
  \caption{A $\frac{1}{5}$-partial vertex cover and its covered edges highlighted in purple}
  \label{fig:partial vc}
\end{subfigure}
\hspace*{\fill}
\caption{A graph $G=(V,E))$ with $10$ edges having $\VC(G)=3$ and $\VC_{1/5}(G)=2$.}
\label{fig:vc and partial vc}
\end{figure}

By definition, if $C$ is an $\alpha$-partial vertex cover, then $C$ leaves at most $\alpha m$ edges uncovered. Augmenting $C$ with the $\le 2\alpha m$ vertices of these uncovered edges yields a vertex cover of $G$. The size of the resulting vertex cover is $\Theta(m)$ which would be problematic if $G$ has small vertex covers. Fortunately, for constant degree graphs, $\VC(G)=\Theta(m)$ (\Cref{fact:constant degree graphs have large vc size}), so $\alpha$-partial vertex covers for these graphs are close to optimal vertex covers. This enables us to show that $\alpha$-partial vertex cover on constant degree graphs is just as hard to approximate as vertex cover. \Cref{claim:partial-vc-intro} follows by combining \Cref{claim:partial vertex cover hardness} with \Cref{thm:hardness of vertex cover}. 

\begin{claim}[Hardness of approximating $\alpha$-partial vertex cover]
    \label{claim:partial vertex cover hardness}
    For every constant $c'>1$ and $d\in\N$, there are constants $\alpha \in (0,1)$ and $c>1$ such that if there is an algorithm solving $\alpha$-partial $(k,c\cdot k)$\textsc{-VertexCover} on $n$-vertex, degree-$d$ graphs in time $t(n)$, then there is an algorithm for solving $(k,c'\cdot k)$\textsc{-VertexCover} on $n$-vertex degree-$d$ graphs in time $t(n)$. One can assume that $\alpha<\frac{1}{d+1}$. 
\end{claim}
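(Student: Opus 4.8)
The plan is to use the identity reduction: given a degree-$d$ graph $G$ presented as an instance of $(k,c'\cdot k)$\textsc{-VertexCover}, run the assumed algorithm for $\alpha$-partial $(k,c\cdot k)$\textsc{-VertexCover} on the same $G$ with the same parameter $k$ and return its answer verbatim. There is no computational overhead, so the running time stays $t(n)$; all of the work is in choosing $\alpha$ and $c$ so that the promise of the original instance is transported to a valid promise for the $\alpha$-partial instance and the yes/no answers coincide.

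For the yes case, observe that any vertex cover is in particular an $\alpha$-partial vertex cover, so $\VC_\alpha(G)\le \VC(G)$; hence $\VC(G)\le k$ implies $\VC_\alpha(G)\le k$ and the assumed algorithm accepts. For the no case I argue the contrapositive. Suppose $\VC_\alpha(G)\le c\cdot k$. By \Cref{fact:upgrading a partial vc}, augmenting a minimum $\alpha$-partial cover with the at most $2\alpha m$ endpoints of the uncovered edges produces a genuine vertex cover, so $\VC(G)\le c\cdot k+2\alpha m$. This is the point where constant degree is essential: \Cref{fact:constant degree graphs have large vc size} gives $m\le d\cdot\VC(G)$, and substituting yields $(1-2\alpha d)\cdot\VC(G)\le c\cdot k$, i.e.\ $\VC(G)\le c\cdot k/(1-2\alpha d)$.

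It remains to pick the constants. Since $c'>1$, the quantity $c'(1-2\alpha d)$ tends to $c'>1$ as $\alpha\to 0$, so I may fix a constant $\alpha\in(0,1)$ small enough that $\alpha<1/(d+1)$, $2\alpha d<1$, and $c'(1-2\alpha d)>1$ all hold, and then set $c\coloneqq c'(1-2\alpha d)>1$. With this choice $\VC(G)\le c\cdot k/(1-2\alpha d)=c'\cdot k$, which establishes the contrapositive: $\VC(G)>c'\cdot k$ implies $\VC_\alpha(G)>c\cdot k$. Hence the graph fed to the $\alpha$-partial algorithm always satisfies its promise — a yes instance when $\VC(G)\le k$ and a no instance when $\VC(G)>c'\cdot k$ — so its output correctly decides the original instance.

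I do not foresee a genuine obstacle: the reduction is trivial and the only delicate point is the gap preservation above, where the additive slack $2\alpha m$ from \Cref{fact:upgrading a partial vc} must be absorbed into the multiplicative gap, which is possible precisely because $\VC(G)=\Theta(m)$ on constant-degree graphs. \Cref{claim:partial-vc-intro} then follows by composing this reduction with \Cref{thm:hardness of vertex cover}.
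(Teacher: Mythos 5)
Your proof is correct and follows essentially the same route as the paper: the identity reduction, using the upgrading fact together with $m\le d\cdot\VC(G)$ to absorb the additive slack into the multiplicative gap. The only cosmetic differences are that you phrase the No case as a contrapositive (rather than the paper's direct chain $\VC_\alpha(G)\ge(1-2\alpha d)\VC(G)>ck$) and you set $c=c'(1-2\alpha d)$ with equality where the paper takes any $c$ strictly below that threshold; both choices are valid.
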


\begin{proof}
Given $c'>1$, let $\alpha\in (0,1)$ be small enough so that $1<(1-2\alpha d)c'$ (and also small enough so that $\alpha<\frac{1}{d+1}$ for the second part of the claim) and let $c$ be any constant satisfying $1<c<(1-2\alpha d)c'$. We will solve $(k,c'\cdot k)$\textsc{-VertexCover} using an algorithm for $\alpha$-partial $(k,c\cdot k)$\textsc{-VertexCover}.

Given a graph $G$ and a parameter $k$, run the algorithm for $\alpha$-partial $(k,c\cdot k)$\textsc{-VertexCover} on $G$ and $k$. Output \textsc{Yes} if and only if the algorithm returns \textsc{Yes}. We claim that this procedure solves $(k,c'\cdot k)$\textsc{-VertexCover} on degree-$d$ graphs. For correctness, we analyze the yes and no cases separately.
\paragraph{{Yes} case: $\VC(G)\le k$.}{
    In this case, we have
    $$
    \VC_{\alpha}(G)\le \VC(G)\le k
    $$
    and so the algorithm correctly outputs \textsc{Yes}. 
}
\paragraph{{No} case: $\VC(G)>c'k$.}{
Let $m$ denote the number of edges of $G$ and let $C$ be the smallest $\alpha$-partial vertex cover of $G$. \Cref{fact:upgrading a partial vc} implies that $|C'|-2\alpha m\le |C|=\VC_\alpha(G)$ where $C'$ is a (possibly suboptimal) vertex cover for $G$. Therefore,
\begin{align*}
    \VC_\alpha(G)&\ge \VC(G)-2\alpha m\tag{\Cref{fact:upgrading a partial vc}}\\
    &\ge \VC(G)- 2\alpha d\cdot \VC(G)\tag{\Cref{fact:constant degree graphs have large vc size}}\\
    &>(1- 2\alpha d)c'k\tag{$\VC(G)>ck$ by assumption}\\
    &>ck\tag{$c<(1-2\alpha d)c'$}
\end{align*}
which means the algorithm correctly outputs \textsc{No}.}
\end{proof}

\subsection{Definition of the hard distribution}

For \Cref{thm:learning hardness inverse poly error}, we used the distribution which was uniform over the set $\ell\text{-}D_G$. This distribution has the property that the target function $\ell\text{-}\mathrm{\isedge}$ \textit{can} be approximated with subconstant error by a small decision tree. In fact, the constant function $f(x)=0$ obtains error $\le \Pr[\ell\text{-}\mathrm{\isedge}=1]\le 1/m$ in approximating $\ell\text{-}\mathrm{\isedge}$. Therefore, to obtain hardness in the constant-error regime, we need to define a new distribution, one over which the target function $\ell\text{-}\mathrm{\isedge}$ is close to balanced. To this end, we define the following distribution.

\begin{definition}[Constant-error hard distribution]
For a graph $G$ and $\ell\in\N$, the distribution $\ell\text{-}\mathcal{D}_G$ over $\zo^{n}\times (\zo^\ell)^n$ is obtained via the following experiment
\begin{itemize}
    \item with probability $1/2$ sample the all $0$s input;
    \item with probability $1/4$ sample a generalized edge indicator, $\ell\text{-}\ind[e]$ for $e\in E$ uniformly at random;
    \item with probability $1/4$ sample a $1$-coordinate perturbation of an edge indicator uniformly at random.
\end{itemize}
\end{definition}

We prove the following analogue of \Cref{claim:ell D_G is a certificate of large DT size} which shows that constant error decision trees must have large size. 
\begin{claim}
\label{claim:constant error dt size lb}
    Let $G$ be an $m$-edge graph, $\ell\in \N$, and $\alpha\in (0,1)$. If $T$ is a decision tree satisfying
    $$
    \dist_{\ell\text{-}\mathcal{D}_G}(T,\ell\text{-}\mathrm{\isedge})\le \frac{1}{16}\cdot\alpha
    $$
    then
    $$
    |T|\ge (\ell+1)\cdot\left[\VC_\alpha(G)+(1-\alpha)m\right].
    $$
\end{claim}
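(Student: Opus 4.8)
The plan is to robustify the proof of \Cref{claim:ell D_G is a certificate of large DT size}, tracking how the errors $T$ is permitted to make degrade the lower bound. Write $N\coloneqq n+\ell n$ and decompose the constraint $\dist_{\ell\text{-}\mathcal{D}_G}(T,\ell\text{-}\mathrm{\isedge})\le \alpha/16$ according to the three kinds of points in $\supp(\ell\text{-}\mathcal{D}_G)$: the point $0^N$, which has mass $\tfrac12$; the $m$ generalized edge indicators $\ell\text{-}\ind[e]$, each of mass $\tfrac{1}{4m}$; and the $m(2\ell+2)$ one-coordinate perturbations $\ell\text{-}\ind[e]^{\oplus v'}$, each of mass $\tfrac{1}{4m(2\ell+2)}$. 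Since $\tfrac12>\tfrac{\alpha}{16}$, the error constraint forces $T(0^N)=0$. Moreover, calling an edge $e$ \emph{bad} if $T(\ell\text{-}\ind[e])\ne 1$ and a perturbation $(e,v')$ \emph{bad} if $T(\ell\text{-}\ind[e]^{\oplus v'})\ne 0$, the number of bad edges is at most $\tfrac{\alpha m}{4}$ and the number of bad perturbations is at most $\tfrac{\alpha m(\ell+1)}{2}$ (just bounding each of the latter two sums in the decomposition by the whole error budget). Write $m_g\ge m-\tfrac{\alpha m}{4}$ for the number of \emph{good} (non-bad) edges.

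Next I would analyze the path $\pi$ that $0^N$ follows in $T$, and let $U\sse V$ be the set of vertices $u$ such that some $u^{(j)}$ is queried along $\pi$ (so $|\pi|\ge |U|$). Exactly as in \Cref{prop:certificate of 0N is a vc}, any good edge $e$ with neither endpoint in $U$ would have $\ell\text{-}\ind[e]$ follow $\pi$ to the same leaf as $0^N$, forcing $T(\ell\text{-}\ind[e])=T(0^N)=0\ne 1$ and contradicting goodness. Hence $U$ covers every good edge, so $U$ is an $\tfrac{\alpha}{4}$-partial (in particular $\alpha$-partial) vertex cover, and $|\pi|\ge |U|\ge \VC_\alpha(G)$. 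Order the distinct vertices of $U$ by first appearance as $u_1,\ldots,u_r$, set $E_t^\star\coloneqq E(u_t;u_1,\ldots,u_{t-1})$ and $V_t^\star\coloneqq V(u_t;u_1,\ldots,u_{t-1})$, and let $S_t\coloneqq \pi\vert_{\oplus\kappa(t)}$ where $\kappa(t)$ is the position of the first query of $u_t$ along $\pi$. As in \Cref{fact:restricted vertex neighborhood partitions}, $\{E_t^\star\}_{t\in[r]}$ partitions the edges covered by $U$, hence contains all good edges; write $G_t$ for the good edges in $E_t^\star$, put $g_t\coloneqq |G_t|$ (so $\sum_t g_t=m_g$), and let $T'\coloneqq \{t:g_t\ge 1\}$. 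The subtrees $T_{S_1},\ldots,T_{S_r}$ hang off distinct nodes of $\pi$, hence are disjoint from each other and from $\pi$, so the reasoning of \Cref{lemma:relevant variables lb dt size} gives $|T|\ge |\pi|+\sum_{t\in T'}\Rel(T_{S_t})$.

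I would then lower bound $\Rel(T_{S_t})$ for $t\in T'$ by mimicking \Cref{lemma:covered edges lemma extended for ell-isedge}, but discarding coordinates spoiled by bad perturbations. Fix a good edge $e_t=(u_t,w_t)\in G_t$. For each $v'\in \textsc{Dup}(u_t)\cup\{u_t^{(0)}\}$ other than the unique positive variable of $S_t$, both $\ell\text{-}\ind[e_t]$ and $\ell\text{-}\ind[e_t]^{\oplus v'}$ are consistent with $S_t$; if $(e_t,v')$ is good then $T$ separates them ($1$ versus $0$), so $v'$ is relevant for $T_{S_t}$, yielding at least $\ell-b(e_t)$ relevant variables, where $b(e_t)$ is the number of bad perturbations of $e_t$ among those coordinates. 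Likewise, for every good edge $e=(u_t,v)\in G_t$ (equivalently, every $v$ among the $g_t$ distinct other-endpoints) and every $v'\in \textsc{Dup}(v)\cup\{v^{(0)}\}$ with $(e,v')$ good, $v'$ is relevant for $T_{S_t}$; this adds at least $(\ell+1)-b'(e)$ relevant variables, with $b'(e)$ counting bad perturbations of $e$ in that coordinate block. All these coordinate blocks are pairwise disjoint inside $T_{S_t}$, so $\Rel(T_{S_t})\ge \bigl(\ell-b(e_t)\bigr)+\sum_{e\in G_t}\bigl((\ell+1)-b'(e)\bigr)$. Summing over $t\in T'$, using that the $\{G_t\}$ partition the good edges and that every bad perturbation is charged at most once across all the $b(e_t)$ and $b'(e)$ terms (each edge's $2\ell+2$ perturbations split into its $u_t$-block and its designated-endpoint block), their total is at most $\tfrac{\alpha m(\ell+1)}{2}$, whence
\[
\sum_{t\in T'}\Rel(T_{S_t})\ \ge\ \ell\,|T'|+(\ell+1)m_g-\tfrac{\alpha m(\ell+1)}{2}.
\]
Finally, since $\{u_t:t\in T'\}$ covers all good edges it is an $\alpha$-partial vertex cover, so $|T'|\ge \VC_\alpha(G)$ and also $|\pi|\ge |U|\ge |T'|$; combining with $m_g\ge m-\tfrac{\alpha m}{4}$,
\[
|T|\ \ge\ |\pi|+\ell\,|T'|+(\ell+1)m_g-\tfrac{\alpha m(\ell+1)}{2}\ \ge\ (\ell+1)\Bigl[\VC_\alpha(G)+\bigl(1-\tfrac{3\alpha}{4}\bigr)m\Bigr]\ \ge\ (\ell+1)\bigl[\VC_\alpha(G)+(1-\alpha)m\bigr].
\]

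The conceptual content is identical to \Cref{claim:ell D_G is a certificate of large DT size}; what is new is purely the error bookkeeping, and the main obstacle — really a bookkeeping subtlety — is the penultimate display: one must verify that the bad perturbations subtracted over all $t\in T'$ (both the $b(e_t)$ and the $b'(e)$ contributions) are charged \emph{without overlap}, so that their total is controlled by the single global count $\tfrac{\alpha m(\ell+1)}{2}$ rather than accumulating to something of order $|T'|\cdot(\ell+1)$, which would be far too large. A secondary nuisance is handling vertices queried more than once along $\pi$, which is why the argument is organized around first-appearance positions $\kappa(t)$ and the distinct vertices $u_1,\ldots,u_r$ rather than raw path positions.
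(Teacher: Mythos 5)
Your proposal is correct, and it follows essentially the same route as the paper's proof: identify the path $\pi$ followed by $0^N$, show the vertices queried on it form a partial vertex cover, lower bound the subtrees branching off $\pi$ by relevant-variable counts \`a la \Cref{lemma:covered edges lemma extended for ell-isedge}, and pay for misclassifications out of the error budget. The paper packages the first step as a standalone lemma (\Cref{lemma:alpha partial vc from constant error dt}) and does the accounting by writing the total distance as a weighted sum and rearranging, while you decompose the budget up front into counts of bad edges and bad perturbations, but the arithmetic and the key observation that the bad-perturbation charges across subtrees are non-overlapping are the same; your constants come out slightly tighter ($(1-3\alpha/4)m$ rather than $(1-\alpha)m$) but both suffice.
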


We first need the following lemma showing how to extract an $\alpha$-partial vertex cover from a decision tree for $\ell\text{-}\mathrm{\isedge}$.
\begin{lemma}[Obtaining an $\alpha$-partial vertex cover from a constant-error decision tree for $\ell\text{-}\mathrm{\isedge}$]
\label{lemma:alpha partial vc from constant error dt}
Let $T$ be a decision tree satisfying
$$
\dist_{\ell\text{-}\mathcal{D}_G}(T,\ell\text{-}\mathrm{\isedge})<\frac{1}{4}\alpha
$$
for any constant $\alpha\in (0,1)$. Then:
\begin{enumerate}
    \item the set $E'=\{e\in E\mid T(\ell\text{-}\ind[e])=1\}$ satisfies $|E'|\ge (1-a)m$; and
    \item if $\pi=(\overline{v}_{i_1}^{(j_1)},\ldots,\overline{v}_{i_k}^{(j_k)})$ is the path followed by $0^N$ in $T$, then for $E_\kappa=E(v_{i_\kappa}; v_{i_1},\ldots,v_{i_{\kappa-1}})$, the set of vertices
    $$
    C=\{v_{i_\kappa}\mid E'\cap E_\kappa\neq \varnothing\}
    $$
    covers all edges in $E'$. In particular, $C$ is an $\alpha$-partial vertex cover. 
\end{enumerate}
\end{lemma}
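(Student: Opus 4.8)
The plan is to exploit the two ``heavy'' parts of the distribution $\ell\text{-}\mathcal{D}_G$: the mass $\tfrac12$ on $0^N$ and the mass $\tfrac14$ spread uniformly over the $m$ generalized edge indicators. Since $\ell\text{-}\mathrm{\isedge}(0^N)=0$ and $0^N$ carries mass $\tfrac12$, which exceeds $\dist_{\ell\text{-}\mathcal{D}_G}(T,\ell\text{-}\mathrm{\isedge})$ (the latter being below $\tfrac14\alpha<\tfrac12$), we must have $T(0^N)=0$; let $\pi=(\overline v_{i_1}^{(j_1)},\ldots,\overline v_{i_k}^{(j_k)})$ be the all-negated path that $0^N$ follows in $T$, and let $S=\{v_{i_1},\ldots,v_{i_k}\}$ be the set of vertices whose variables are queried along it. Each $\ell\text{-}\ind[e]$ carries mass $\tfrac{1}{4m}$ and satisfies $\ell\text{-}\mathrm{\isedge}(\ell\text{-}\ind[e])=1$.

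Part~1 is then a one-line counting argument: if $T$ output $0$ on strictly more than $\alpha m$ of the inputs $\ell\text{-}\ind[e]$, the contribution of these inputs alone to $\dist_{\ell\text{-}\mathcal{D}_G}(T,\ell\text{-}\mathrm{\isedge})$ would already exceed $\alpha m\cdot\tfrac{1}{4m}=\tfrac14\alpha$, a contradiction. Hence $E'=\{e\in E:T(\ell\text{-}\ind[e])=1\}$ omits fewer than $\alpha m$ edges, so $|E'|\ge (1-\alpha)m$.

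For Part~2 I would first show that $S$ covers every edge of $E'$. Fix $e\in E'$. The only coordinates in which $\ell\text{-}\ind[e]$ and $0^N$ differ are the original and duplicated variables of $e$'s two endpoints; if $\pi$ queried none of them, then $\ell\text{-}\ind[e]$ would be consistent with $\pi$ and hence reach the same leaf as $0^N$, forcing $T(\ell\text{-}\ind[e])=T(0^N)=0$ and contradicting $e\in E'$. So $\pi$ queries a variable of one of $e$'s endpoints, i.e.\ that endpoint lies in $S$, and thus $S$ covers $E'$. Now I invoke the partition structure behind \Cref{fact:restricted vertex neighborhood partitions}: the sets $E_\kappa=E(v_{i_\kappa};v_{i_1},\ldots,v_{i_{\kappa-1}})$ are pairwise disjoint and their union is exactly the set of edges covered by $S$ (a vertex queried more than once along $\pi$ via different duplicated coordinates simply produces an empty $E_\kappa$ and is harmless). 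Since $E'\subseteq\bigcup_\kappa E_\kappa$, each $e\in E'$ lies in a unique $E_\kappa$; for that $\kappa$ we have $E'\cap E_\kappa\ne\varnothing$, so $v_{i_\kappa}\in C$, and $v_{i_\kappa}$ is an endpoint of $e$ because $e\in E_\kappa$. Hence $C$ covers all of $E'$, and combining with Part~1, $C$ covers at least $(1-\alpha)m$ edges, so $C$ is an $\alpha$-partial vertex cover.

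None of the steps is technically deep; the only points needing care are (i) reading off that the error budget $\tfrac14\alpha$ is simultaneously large enough to force $T(0^N)=0$ and small enough that at most $\alpha m$ edge indicators can be mislabelled, and (ii) the indexing bookkeeping in the partition step when a vertex is queried repeatedly along $\pi$. The conceptual core --- that the $0^N$-path is forced and that being a $1$-input of $\ell\text{-}\mathrm{\isedge}$ forces that path to touch an endpoint of the corresponding edge --- is exactly the argument from the No case of \Cref{thm:ell-isedge-intro} and from \Cref{claim:ell D_G is a certificate of large DT size}, now made robust to a constant fraction of errors; this robustification is the part I expect to require the most attention.
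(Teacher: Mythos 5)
Your proof is correct and follows essentially the same approach as the paper's: Part~1 is the identical mass-counting argument, and Part~2 rests on the same observation that each $\ell\text{-}\ind[e]$ with $e\in E'$ must diverge from $\pi$ at a query touching an endpoint of $e$, together with the disjointness of the $E_\kappa$. The only cosmetic difference is that you first establish that the full path-vertex set $S$ covers $E'$ and then pass to the partition, whereas the paper directly identifies the divergence index $\kappa$ and notes $e\in E_\kappa$; both routes handle repeated-vertex queries (empty $E_\kappa$) correctly and land on the same conclusion.
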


\begin{proof} We prove the two points separately.
\paragraph{First point: $|E'|\ge (1-\alpha)m$.}{The set of edges $E\setminus E'$ correspond to inputs $\ell\text{-}\ind[e]$ such that $T(\ell\text{-}\ind[e])=0$ but $\ell\text{-}\mathrm{\isedge}(\ell\text{-}\ind[e])=1$. Since each input $\ell\text{-}\ind[e]$ has mass $\frac{1}{4m}$ over $\ell\text{-}\mathcal{D}_G$, we have
\begin{align*}
    \frac{1}{4}\alpha &> \dist_{\ell\text{-}\mathcal{D}_G}(T,\ell\text{-}\mathrm{\isedge})\tag{Assumption}\\
    &\ge |E\setminus E'|\cdot \frac{1}{4m}\tag{Definition of $E'$}\\
    &=(m-|E'|)\cdot\frac{1}{4m}.
\end{align*}
}

\paragraph{Second point: $C$ is an $\alpha$-partial vertex cover.}{
Since $\dist(T,\ell\text{-}\mathrm{\isedge})<1/2$, we know that $T(0^N)=0$ and therefore the path $\pi$ terminates in a $0$-leaf. For every edge $e\in E'$, the input $\ell\text{-}\ind[e]$ must diverge from $\pi$ at some point $v_{i_\kappa}^{j_\kappa}$. This $\kappa$ then satisfies $e\in E_\kappa$ so that $e\in E'\cap E_\kappa\neq\varnothing$. It follows that $C$ covers the edge $e$. Since $E'$ constitutes at least a $(1-\alpha)$-fraction of the edges, $C$ is an $\alpha$-partial vertex cover.
}
\end{proof}

\begin{proof}[Proof of \Cref{claim:constant error dt size lb}]
Let $T$ be any decision tree such that
$$
\dist_{\ell\text{-}\mathcal{D}_G}(T,\ell\text{-}\mathrm{\isedge})\le \frac{1}{16}\alpha.
$$
In particular, $T$ satisfies the conditions of \Cref{lemma:alpha partial vc from constant error dt}. Let $E',\pi, E_\kappa$, and $C$ be as in the statement of \Cref{lemma:alpha partial vc from constant error dt}. For each $v_{i_\kappa}\in C$, we define 
$$
R(v_{i_\kappa})\coloneqq \textsc{Dup}(v_{i_\kappa})\cup\left\{v_{i_\kappa}^{(0)}\right\}\setminus\left\{v_{i_\kappa}^{(j_\kappa)}\right\}\cup\bigcup_{\{v_{i_\kappa},v\}\in E'\cap E_\kappa} \textsc{Dup}(v)\cup\left\{v^{(0)}\right\}.
$$
Furthermore, let $T_\kappa$ be the subtree which is the right child of $\pi(\kappa)$. That is $T_\kappa$ is the subtree of $T$ which catches all of the inputs $\ell\text{-}\ind[e]$ for $e\in E'\cap E_\kappa$. Recall from the proof of \Cref{lemma:covered edges lemma extended for ell-isedge} that the variables in $R(v_{i_\kappa})$ are all relevant for the subfunction $\ell\text{-}\mathrm{\isedge}_{\pi\vert_{\oplus \kappa}}$. Each such relevant variable which is \textit{not} queried in the subtree $T_\kappa$ results in an error. For example, if $T_{\kappa}$ does not query a variable $v'\in \textsc{Dup}(v_{i_\kappa})\cup\left\{v_{i_\kappa}^{(0)}\right\}\setminus\left\{v_{i_\kappa}^{(j_\kappa)}\right\}$, then the string $\ell\text{-}\ind[e]^{\oplus v'}$ where $e\in E'\cap E_\kappa$ is classified as $1$ by $T$:
$$
T(\ell\text{-}\ind[e]^{\oplus v'})=T_\kappa(\ell\text{-}\ind[e]^{\oplus v'})=T_\kappa(\ell\text{-}\ind[e])=1
$$
whereas $\ell\text{-}\mathrm{\isedge}(\ell\text{-}\ind[e]^{\oplus v'})=0$. Thus, each relevant variable which is \textit{not} queried in the subtree $T_\kappa$ results in a $0$-input being misclassified as $1$. Each such misclassification contributes $\Pr_{\ell\text{-}\mathcal{D}_G}[\ell\text{-}\ind[e]^{\oplus v'}]\ge \frac{1}{4}\cdot \frac{1}{m(2\ell+2)}$ to $\dist_{\ell\text{-}\mathcal{D}_G}(T,\ell\text{-}\mathrm{\isedge})$. Therefore, we can write
\begin{align*}
    \frac{1}{16}\alpha &\ge \dist_{\ell\text{-}\mathcal{D}_G}(T,\ell\text{-}\mathrm{\isedge})\\
    &\ge \left(\sum_{v_{i_\kappa}\in C}|R(v_{i_\kappa})|-|T_{\kappa}|\right)\cdot \frac{1}{8(m\ell+m)}+\left(m-|E'|\right)\cdot \frac{1}{4m}\\
    &\ge \left(\sum_{v_{i_\kappa}\in C}|R(v_{i_\kappa})|-|T_{\kappa}|\right)\cdot \frac{1}{16\ell m}+\left(m-|E'|\right)\cdot \frac{1}{4m}\tag{$m\le \ell m$}
\end{align*}
where the quantity $\sum_{v_{i_\kappa}\in C}|R(v_{i_\kappa})|-|T_{\kappa}|$ counts how many $0$-inputs are misclassified as $1$ by $T$ and $m-|E'|$ counts how many $1$-inputs are misclassified as $0$. These quantities are weighted by the respective masses of each type of input over $\ell\text{-}\mathcal{D}_G$. Rearranging gives the lower bound:
\begin{align*}
    \sum_{v_{i_\kappa}\in C}|T_\kappa|&\ge 4\ell(m-|E'|)-\alpha \ell m + \sum_{v_{i_\kappa}\in C}|R(v_{i_\kappa})|\\
    &\ge 4\ell(m-|E'|)-\alpha \ell m +\ell|C|+\sum_{v_{i_\kappa}\in C}(\ell+1) |E'\cap E_\kappa|\tag{$|\textsc{Dup}(v)=\ell|$}\\
    &=4\ell(m-|E'|)-\alpha \ell m +\ell|C|+(\ell+1) |E'|\tag{$\{E'\cap E_\kappa\}$ partitions $E'$}\\
    &\ge 4\ell(m-|E'|)-\alpha \ell m +\VC_\alpha(G)\ell+(\ell+1) |E'|\tag{$C$ is an $\alpha$-partial vertex cover}\\
    &\ge \ell\VC_\alpha(G)+(1-\alpha)(\ell+1)m \tag{$|E'|\le m$}.
\end{align*}
Therefore, since the $T_{\kappa}$ and $\pi$ are all disjoint parts of $T$: 
\begin{align*}
    |T|&\ge |\pi|+\sum_{v_{i_\kappa}\in C}|T_\kappa|\\
    &\ge |C|+\sum_{v_{i_\kappa}\in C}|T_\kappa|\tag{Definition of $C$}\\
    &\ge \VC_{\alpha}(G)+\sum_{v_{i_\kappa}\in C}|T_\kappa|\tag{C is an $\alpha$-partial vertex cover}\\
    &\ge (\ell+1)\left[\VC_{\alpha}(G)+(1-\alpha)m\right]
\end{align*}
which completes the proof.
\end{proof}

\subsection{Learning consequence for constant-error: Proof of \Cref{thm:main-intro}}

\begin{theorem}[Hardness of learning DTs with constant error]
\label{thm:learning hardness constant error}
For all constants $\delta'>0$, $d\in\N$, and $\alpha<\frac{1}{d+1}$, there is a sufficiently small constant $\delta>0$ such that the following holds. If \textsc{DT-Learn}$(n,s,(1+\delta)\cdot s,\eps)$ with $s=O(n)$ and $\eps=\Theta(1)$ can be solved in randomized time $t(n)$, then $\alpha$-\textsc{PartialVertexCover}$(k,(1+\delta')k)$ on degree-$d$ graphs can be solved in time $O(n^2t(n^2))$. 
\end{theorem}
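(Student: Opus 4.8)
The plan is to run the reduction from the proof of~\Cref{thm:learning hardness inverse poly error} essentially verbatim, but with the uniform distribution on $\ell\text{-}D_G$ replaced by the constant-error hard distribution $\ell\text{-}\mathcal{D}_G$ and with the appeal to~\Cref{claim:ell D_G is a certificate of large DT size} replaced by an appeal to~\Cref{claim:constant error dt size lb}. First I would fix the constants in the same order as in~\Cref{thm:learning hardness inverse poly error}: given $\delta'>0$, $d$, and $\alpha<\tfrac{1}{d+1}$, pick $\lambda<1$ with $\lambda(1+\delta')>1$ (keeping $1-\lambda$ as large as that allows), set the learning error to the constant $\eps\coloneqq\tfrac{1}{16}\alpha$, take the slack $\delta>0$ small enough, and finally take $\ell=\Theta(n)$ large enough that $\tfrac{mn}{\ell}$ is negligible next to $m$. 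Then set $N\coloneqq n+\ell n=\Theta(n^2)$ and $s\coloneqq(\ell+1)\big(k+(1+2\alpha)m\big)+mn$; note $s=O(N)$ because a degree-$d$ graph carrying an $\alpha$-partial cover of size $k\le n$ has $m=\Theta(n)$ and $k=\Theta(n)$.

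The reduction itself is syntactically the one in~\Cref{fig:solving vc with dt learn}: on input a degree-$d$ graph $G$ and $k$, run \textsc{DT-Learn}$(N,s,(1+\delta)s,\eps)$ for $t(N)$ steps, answering each membership query to the target $\ell\text{-}\mathrm{\isedge}$ in $O(N)$ time (evaluate $\mathrm{\isedge}$ on the original block, then inspect the duplicated blocks) and each example request by drawing a uniform point of the appropriate type from $\ell\text{-}\mathcal{D}_G$ in $O(N)$ time; then compute $\dist_{\ell\text{-}\mathcal{D}_G}(T_{\mathrm{hyp}},\ell\text{-}\mathrm{\isedge})$ exactly in $O(N^2)$ time, since $\ell\text{-}\mathcal{D}_G$ has $O(N)$ support points. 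Output \textsc{Yes} iff $|T_{\mathrm{hyp}}|\le(1+\delta)s$ and that distance is at most $\eps$. The running time is $O(N\cdot t(N))=O(n^2\,t(n^2))$.

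For correctness, in the Yes case $\VC_\alpha(G)\le k$, so~\Cref{fact:upgrading a partial vc} gives $\VC(G)\le k+2\alpha m$ and then the Yes case of~\Cref{thm:ell-isedge-intro} gives $\dtsize(\ell\text{-}\mathrm{\isedge})\le(\ell+1)(\VC(G)+m)+mn\le s$; hence the target is a valid size-$s$ decision tree, so w.h.p.\ the learner returns $T_{\mathrm{hyp}}$ with $|T_{\mathrm{hyp}}|\le(1+\delta)s$ and $\dist_{\ell\text{-}\mathcal{D}_G}(T_{\mathrm{hyp}},\ell\text{-}\mathrm{\isedge})\le\eps$, and the algorithm outputs \textsc{Yes}. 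In the No case $\VC_\alpha(G)>(1+\delta')k$: if the measured distance exceeds $\eps$ we already output \textsc{No}, and otherwise~\Cref{claim:constant error dt size lb} (applicable since $\eps=\tfrac{1}{16}\alpha$) yields $|T_{\mathrm{hyp}}|\ge(\ell+1)\big[\VC_\alpha(G)+(1-\alpha)m\big]$. I would then repeat the arithmetic of~\Cref{thm:learning hardness inverse poly error}: write $\VC_\alpha(G)=\lambda\VC_\alpha(G)+(1-\lambda)\VC_\alpha(G)$, bound $\lambda\VC_\alpha(G)>\lambda(1+\delta')k>(1+\delta)k$, bound $(1-\lambda)\VC_\alpha(G)\ge\tfrac{(1-\lambda)(1-\alpha)}{d}\,m$ using that in a degree-$d$ graph at least $\tfrac{(1-\alpha)m}{d}$ vertices are needed to cover $(1-\alpha)m$ edges (the analogue of~\Cref{fact:constant degree graphs have large vc size}), and combine with the largeness of $\ell$ to get $|T_{\mathrm{hyp}}|>(1+\delta)s$, so the algorithm outputs \textsc{No}.

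The hard part, exactly as in~\Cref{thm:learning hardness inverse poly error}, is choosing the constants so that the Yes-case size bound $(1+\delta)s$ and the No-case size bound $(\ell+1)\big[(1+\delta')k+(1-\alpha)m\big]$ genuinely straddle a threshold. Relative to the inverse-polynomial argument there are two new sources of slack: passing from an $\alpha$-partial cover of size $k$ to $\dtsize(\ell\text{-}\mathrm{\isedge})$ costs an extra additive $2\alpha m$ in the cover parameter, and~\Cref{claim:constant error dt size lb} certifies only $(1-\alpha)m$ in place of $m$. Both are $O(\alpha m)$, and since the degree bound forces $\VC_\alpha(G)=\Theta(m)$ these terms can be charged against the multiplicative $\delta'$-gap once $\alpha$, $\delta$, and $\eps$ are taken sufficiently small; carrying out this bookkeeping carefully — pinning down the dependence of $\delta$ on $\delta',d,\alpha$ — is the only real content beyond transcribing the proof of~\Cref{thm:learning hardness inverse poly error}.
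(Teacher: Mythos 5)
Your proposal follows the paper's proof almost line-for-line: same distribution $\ell\text{-}\mathcal{D}_G$, same algorithm template from \Cref{fig:solving vc with dt learn}, same invocation of \Cref{claim:constant error dt size lb} in place of \Cref{claim:ell D_G is a certificate of large DT size}, same convex-combination trick in the No case, and the same observation that $d\VC_\alpha(G)\ge(1-\alpha)m$ plays the role of \Cref{fact:constant degree graphs have large vc size}. The one place you genuinely depart from the paper is the Yes case: the paper declares it ``identical'' to the Yes case of \Cref{thm:learning hardness inverse poly error} and keeps $s=\ell(k+m)+2mn$, but the promise there is $\VC(G)\le k$ whereas here it is only $\VC_\alpha(G)\le k$, so this step is not actually immediate. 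Your fix---pass through \Cref{fact:upgrading a partial vc} to get $\VC(G)\le k+2\alpha m$ and inflate $s$ to $(\ell+1)\bigl(k+(1+2\alpha)m\bigr)+mn$ so that the target is genuinely a size-$s$ tree---is the right repair and is cleaner than what is written in the paper. It does slightly shift the arithmetic in the No case (you now need roughly $3\alpha d<(1-\lambda)(1-\alpha)$ rather than the paper's $\alpha d<(1-\lambda)(1-\alpha)$), but both inequalities are of the same type and both are satisfiable for $\alpha$ small enough.

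One caution on the closing paragraph: you write that the slack terms ``can be charged against the $\delta'$-gap once $\alpha$, $\delta$, and $\eps$ are taken sufficiently small.'' In the theorem as stated, $\alpha$ is a \emph{given} parameter, not one you get to shrink; only $\delta$ (and $\ell$) are at your disposal. The constraint that actually makes the bookkeeping close is of the form $\alpha<\tfrac{\delta'(1-\alpha)}{cd(1+\delta')}$ for a small constant $c$, which is stronger than the stated $\alpha<\tfrac{1}{d+1}$. This mismatch is already latent in the paper's own proof (the requirement that some $\lambda\in(\tfrac{1}{1+\delta'},1)$ satisfy $\alpha<\tfrac{(1-\lambda)(1-\alpha)}{d}$ encodes the same restriction), and it is harmless in the downstream application since \Cref{claim:partial vertex cover hardness} selects $\alpha$ small; but if you are carrying out the bookkeeping, you should state the derived constraint on $\alpha$ explicitly rather than treating $\alpha$ as freely tunable.
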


The proof of this theorem is similar to that of \Cref{thm:learning hardness inverse poly error}. The main difference is that our lower bound on the decision tree size of $\ell\text{-}\mathrm{\isedge}$ in the constant-error regime is quantitatively weaker than that of \Cref{claim:ell D_G is a certificate of large DT size}. We will need to make the appropriate adjustments to the approximation factor of the \textsc{DT-Learner} in order to tolerate the weaker lower bound.

\begin{proof}
    Let $\delta'>0$, $d\in\N$, and $\alpha<\frac{1}{d+1}$ be given. The assumption that $\alpha<\frac{1}{d+1}$ implies $\alpha<\frac{1-\alpha}{d}$. Therefore, we can fix some $\lambda<1$ large enough so that $\lambda (1+\delta')>1$ \textit{and} $\alpha<\frac{(1-\lambda)(1-\alpha)}{d}$. Let $\delta>0$ be any constant satisfying $\frac{(1-\lambda)(1-\alpha)}{d}>\delta+\alpha$ and $(1+\delta)<\lambda(1+\delta')$. We will solve $\alpha$-\textsc{PartialVertexCover}$(k,(1+\delta')k)$ using an algorithm for \textsc{DT-Learn}$(n,s,(1+\delta)\cdot s,\eps)$.
\paragraph{The reduction.}{Fix $\ell=\Theta(n)$ large enough so that 
\begin{equation}
\label{eq:l guarantee}
    \frac{(1-\lambda)(1-\alpha)}{d}>\delta+\alpha+\frac{2(1+\delta)n}{\ell}.
\end{equation}
Such an $\ell$ exists by our assumption that $\frac{1-\lambda}{d}>\delta+\alpha$. As in \Cref{thm:learning hardness inverse poly error} our target function will be $\ell\text{-}{\mathrm{\isedge}}:\zo^N\to\zo$ for $N=n+\ell n=\Theta(n^2)$. Our distribution will be $\ell\text{-}\mathcal{D}_G$ and we fix $\eps<\frac{1}{16}\alpha=\Theta(1)$ and $s=\ell(k+m)+2mn=O(N)$. Run the same procedure as in \Cref{fig:solving vc with dt learn} where the distribution $\mathcal{D}$ is $\ell\text{-}\mathcal{D}_G$. 
}
\paragraph{Runtime.}{
As in the proof of \Cref{thm:learning hardness inverse poly error}, queries and random samples for $\ell\text{-}{\mathrm{\isedge}}$ can be handled in $O(N)$ time. Thus running the learner requires $O(N\cdot t(N))$ time. Computing the error $\dist_{\ell\text{-}\mathcal{D}_G}(T_{\text{hyp}},\ell\text{-}{\mathrm{\isedge}})$ takes $O(N^2)$ time. The overall runtime is therefore $O(N\cdot t(N))$ which is $O(n^2\cdot t(n^2))$.
}

\paragraph{Correctness.}{We analyze the \textsc{Yes} case and \textsc{No} case separately.
\subparagraph{{Yes} case: $\VC_\alpha(G)\le k$.}{
This case is identical to the \textsc{Yes} case in \Cref{thm:learning hardness inverse poly error}. So our algorithm correctly outputs $\textsc{Yes}.$
}
\subparagraph{{No} case: $\VC_\alpha(G)>(1+\delta')k$.}{
    Assume that $\dist_{\ell\text{-}\mathcal{D}_G}(T_{\text{hyp}},\ell\text{-}{\mathrm{\isedge}})\le \eps<\frac{1}{16}\alpha$ (otherwise our algorithm correctly outptus \textsc{No}). We would like to show that $|T_{\text{hyp}}|>(1+\delta)\cdot\left[\ell(k+m)+2mn\right]$. We start by bounding $\alpha$-partial vertex cover size of $G$:
    \begin{align*}
        \VC_\alpha(G)&=\lambda \VC_\alpha(G)+\frac{1-\lambda}{d} d\VC_\alpha(G)\\
        &\ge \lambda  \VC_\alpha(G)+\frac{(1-\lambda)(1-\alpha)}{d}m\tag{$d\VC_\alpha(G)\ge (1-\alpha)m$ for degree $d$ graphs}\\
        &\ge \lambda  \VC_\alpha(G)+\left(\delta+\alpha+\frac{2(1+\delta)n}{\ell}\right)m\tag{\Cref{eq:l guarantee}}\\
        &\ge (1+\delta)k+\left(\delta+\alpha+\frac{2(1+\delta)n}{\ell}\right)m\tag{$\lambda\VC_\alpha(G)>\lambda (1+\delta')k>(1+\delta)k$}.
    \end{align*}
    Rearranging gives
    \begin{equation}
    \label{eq:ell vc alpha lb}
        \ell\VC_\alpha(G)\ge (1+\delta)k\ell+(\delta+\alpha)m\ell+2(1+\delta)mn.
    \end{equation}
    Therefore,
    \begin{align*}
        |T_{\text{hyp}}|&> \ell(\VC_\alpha(G)+(1-\alpha)m)\tag{\Cref{claim:constant error dt size lb}}\\
        &> (1+\delta)k\ell+(\delta+\alpha)m\ell+2(1+\delta)mn+ (1-\alpha)m\ell\tag{\Cref{eq:ell vc alpha lb}}\\
        &=(1+\delta)\cdot\left[\ell(k+m)+2mn\right]
    \end{align*}
which ensures that our algorithm correctly outputs $\textsc{No}.$
}
}
\end{proof}

\begin{remark}[Implications for testing decision trees]
The above proof of \Cref{thm:learning hardness constant error} and the proof of \Cref{thm:learning hardness inverse poly error} actually prove hardness of \textit{testing} decision tree size. Specifically, the proof of \Cref{thm:learning hardness constant error} shows that any tester which can distinguish whether a target function $f$ is a size-$s$ decision tree or is $\Omega(1)$-far from every size-$s$ decision tree over a distribution $\mathcal{D}$ can also approximate {\sc PartialVertexCover}. Therefore, the problem of distribution-free testing decision tree size is also NP-hard.
\end{remark}

\begin{proof}[Proof of \Cref{thm:main-intro}]
If there were an algorithm for learning decision trees which satisfies the constraints of \Cref{thm:main-intro}, then \Cref{thm:learning hardness constant error} shows that $\alpha$-\textsc{PartialVertexCover} can be solved in \text{RTIME}$(n^2t(n^2))$. \Cref{thm:hardness of vertex cover} and~\Cref{claim:partial vertex cover hardness} then imply that \SAT\ can be solved in randomized time $O(n^2\polylog n\cdot t(n^2\polylog n))$.
\end{proof}

 \section*{Acknowledgments}

 We thank Pasin Manurangsi for a helpful conversation and the FOCS reviewers for their comments and feedback.  

The authors are supported by NSF awards 1942123, 2211237, 2224246 and a Google Research Scholar award. Caleb is also supported by an NDSEG fellowship, and Carmen by a Stanford Computer Science Distinguished Fellowship.  

\bibliography{ref}
\bibliographystyle{alpha}

% \appendix
% \input{appendix}

\end{document}